\theoremstyle{plain}                 
\newtheorem{theorem}{Theorem}[section]     
\newtheorem{proposition}[theorem]{Proposition}
\newtheorem{problem}[theorem]{Problem}
\newtheorem{lemma}[theorem]{Lemma}    
\theoremstyle{definition}           
\newtheorem{definition}[theorem]{Definition}   
\newtheorem{example}[theorem]{Example}
\newtheorem{algorithm}[theorem]{Algorithm}
\theoremstyle{remark}       
\newtheorem{remark}[theorem]{Remark} 
\newtheorem{conjecture}[theorem]{Conjecture} 
\begin{document}

%\fancyhf{}

\author[A. A. Kazakov]{A. A.~Kazakov}
\address{A.~K.:  %Lomonosov Moscow State University, Faculty of Mechanics and Mathematics, Russia, 119991, Moscow, GSP-1, 1 Leninskiye Gory, Main Building;
Centre of Integrable Systems, P. G. Demidov Yaroslavl State University, Sovetskaya 14, 150003, Yaroslavl, Russia.}
%;
%Center of Pure Mathematics, Moscow Institute of Physics and Technology, 9 Institutskiy per., Dolgoprudny, Moscow Region, 141701, Russian Federation;
%Kazan Federal University, N.I. Lobachevsky Institute of Mathematics and Mechanics,  Kazan, 420008, Russia;
%Faculty of Mathematics, International Laboratory of Cluster Geometry, National Research University Higher School of Economics, Usacheva 6, 119048, Moscow, Russia.}
\email{anton.kazakov.4@mail.ru}
\title{Inverse problems related to  electrical networks and the geometry of non-negative Grassmannians}

\begin{abstract}
We provide a new solution to the classical black box problem (the discrete Calderón problem) in the theory of circular electrical networks. Our approach  is based on   the explicit embedding of electrical networks into  non-negative Grassmannians and   generalized chamber ansatz for it. Also, we reveal the  relation of this problem with the combinatorial properties  of spanning groves and the theory of   totally non-negative matrices. 
\end{abstract}
\maketitle

\tableofcontents
MSC2020: 05C10, 05C50, 05E10, 15M15, 34B45, 35R30, 82B20, 90C05, 90C59, 94C05, 94C15

Key words: electrical networks, discrete Calderón problem, discrete electrical impedance tomography, non-negative Grassmannians, twist for positroid variety, Temperley trick, totally non-negative matrices, effective resistances.
\hspace{0.1in}

%using the generalization Temperley trick
\section{Introduction} \label{chap1}
\subsection{Organization of the paper}
An electrical network $\mathcal{E}(G, w)$  consists of a planar  graph $G$ equipped with positive  edge weights $w$ denoting their conductivities. The nodes of $\mathcal{E}(G, w)$ are partitioned into two sets: inner nodes and boundary nodes.  By applying voltages     $\textbf{U}:V_{B}\to \mathbb{R}$ at the  boundary nodes and using  Ohm's and Kirchhoff's laws, the voltages at all nodes $U: V \to \mathbb{R}$ are uniquely determined.  These voltage extensions  give rise to two natural  network characteristics: the electrical response matrix and the effective resistance matrix.

Combining the generalized Temperley trick with the relation between the entries of these matrices and spanning forests enables the construction of an embedding into the Grassmannian $Gr_{\geq 0}(n-1, 2n)$, see Sections \ref{thelamembsec} and \ref{efsec}.    
This embedding  can be  explicitly defined via response matrices or effective resistance matrices, see Theorem \ref{th: main_gr} and Theorem \ref{th: main_gr}.
That allows us to apply  the Postnikov theory (see Section \ref{postteorsec})  of  non-negative Grassmannians to the study of electrical networks.   Particularly, drawing upon the Postnikov theory,  we  propose (see Section \ref{secan}) a novel algorithmic   solution to the black box problem, which is the main result of our paper (\textit{note that a similar approach has been independently  studied in \cite{Ter}}):
\begin{proposition}
    Consider a minimal electrical network $\mathcal{E}(G, w)$ and its associated point $\mathcal{L}(\mathcal{E})$ of non-negative Grassmannian $Gr_{\geq 0}(n-1, 2n)$. Then,  each edge conductivity  can be  expressed as monomial ratios of  a special set of Plücker coordinates of the twisted point  $\tau\bigl(\mathcal{L}(\mathcal{E})\bigr)$ of $Gr_{\geq 0}(n-1, 2n)$ corresponding to weights of minimal spanning groves. The indices of these  Plücker coordinates are determined by applying the  Scott rule to the minimal Lam model, which is constructed from $\mathcal{E}(G, w)$ using the generalized Temperley trick.
\end{proposition}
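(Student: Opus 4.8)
The plan is to combine three ingredients developed or recalled earlier in the paper: the realization of the minimal Lam model as a \emph{reduced} plabic graph whose face weights are Laurent monomials in the conductivities (Sections \ref{thelamembsec} and \ref{efsec}); the generalized chamber ansatz for $Gr_{\geq 0}(n-1,2n)$, which expresses the face weights of a reduced plabic graph as Laurent monomials in the Plücker coordinates of the \emph{twisted} boundary-measurement point, indexed by Scott's chamber sets; and the grove/forest interpretation of the Plücker coordinates of $\mathcal{L}(\mathcal{E})$ from Theorem \ref{th: main_gr}, which, under minimality, collapses the relevant twisted coordinates to weights of minimal spanning groves.

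First I would fix the plabic graph $N = N(\mathcal{E})$ produced from $\mathcal{E}(G,w)$ by the generalized Temperley trick, and recall from the earlier sections that: (a) $N$ is reduced precisely because $\mathcal{E}$ is minimal, and its trip (decorated) permutation is the one labelling the positroid cell that contains $\mathcal{L}(\mathcal{E})$; (b) there is an assignment of face weights $\{x_f\}$ to $N$ realizing $\mathcal{L}(\mathcal{E})$ as the boundary measurement of $(N,\{x_f\})$, such that every conductivity $w_e$ is a Laurent monomial in the $x_f$ and, conversely, every $x_f$ is a Laurent monomial in the $w_e$. Applying Scott's rule to $N$ assigns to each face $f$ a chamber set $S_f \in \binom{[2n]}{n-1}$; these sets are exactly the indices appearing in the statement.

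Next I would invoke the generalized chamber ansatz for $Gr_{\geq 0}(n-1,2n)$ (the twist $\tau$ of the positroid variety, in the form used in Section \ref{secan}): for a point $x$ in the open cell of a reduced plabic graph $N$ with face weights $\{x_f\}$, the twisted point $\tau(x)$ has, for each chamber set $S_f$, Plücker coordinate $P_{S_f}(\tau(x))$ equal to an explicit monomial in the $\{x_g\}$, and the correspondence $\{x_f\}\leftrightarrow\{P_{S_f}(\tau(x))\}$ is invertible with Laurent-monomial inverse $x_f = \prod_g P_{S_g}(\tau(x))^{\,b_{f,g}}$, $b_{f,g}\in\mathbb{Z}$, the exponents being read off from the face poset of $N$. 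Substituting this into the Laurent-monomial expression of $w_e$ in the face weights from the previous step yields $w_e = \prod_f P_{S_f}\bigl(\tau(\mathcal{L}(\mathcal{E}))\bigr)^{\,m_{e,f}}$ with $m_{e,f}\in\mathbb{Z}$, which is the asserted expression of each conductivity as a monomial ratio of a special set of Plücker coordinates of $\tau(\mathcal{L}(\mathcal{E}))$.

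It remains to identify these Plücker coordinates with weights of minimal spanning groves. Here I would combine the monomial formula $P_{S_f}(\tau(\mathcal{L}(\mathcal{E}))) = \prod_g x_g^{\,a_{f,g}}$ with the translation of face weights of $N$ back into conductivities via the Temperley trick, and then match the resulting monomial $\prod_{e} w_e^{\,c_{f,e}}$ with the grove partition function of $\mathcal{E}$ indexed by the boundary connectivity dual to $S_f$, as given by Theorem \ref{th: main_gr}: for a minimal network this partition function is supported on a single grove, namely the minimal spanning grove of that type, so the twisted Plücker coordinate is exactly its weight. The main obstacle I anticipate is precisely this last matching: one must follow the generalized Temperley trick carefully enough to align (i) the trip permutation and Scott chamber sets of $N$ with (ii) the boundary partitions carried by groves of $\mathcal{E}$, and to verify that minimality forces uniqueness of each relevant grove, so that the twisted Plücker coordinates are honest monomials rather than subtraction-free sums; checking this combinatorial dictionary — rather than the formal chamber-ansatz computation, which is automatic once $N$ is known to be reduced — is the technical heart of the argument.
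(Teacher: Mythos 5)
Your overall architecture coincides with the paper's (Temperley trick $\to$ minimal Lam model, Theorem \ref{connection}; Muller--Speyer twist and Scott labels $\to$ conductivities as monomial ratios of twisted Plücker coordinates, Lemma \ref{mainlemma}; dimer--grove dictionary $\to$ grove interpretation, Proposition \ref{mainprop}), but the step you yourself single out as the technical heart is argued by a mechanism that fails. You propose to identify $\Delta_{I(F)}\bigl(\tau(\mathcal{L}(\mathcal{E}))\bigr)$ with a grove weight by claiming that, for a minimal network, the grove partition function attached to the chamber set $I(F)$ ``is supported on a single grove.'' That is false: by Theorem \ref{thgroves1} the \emph{untwisted} Plücker coordinates are sums over all groves concordant with $I$, and these sums genuinely have many terms even for minimal networks (for the three-node star, they are expressions such as $\tfrac{ca+cb}{a+b+c}$). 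Minimality of $\mathcal{E}$ does not collapse these sums. The reason the \emph{twisted} coordinates are monomials is entirely different: by the Muller--Speyer theorem (Remark \ref{abouttwist}), each face $F$ of a reduced plabic graph determines a unique dimer $\Pi_F$ that is \emph{minimal with respect to the partial order} on $\Pi(I(F))$ --- not the unique element of $\Pi(I(F))$ --- and the twist satisfies $\Delta_{I(F)}\bigl(\tau(A)\bigr)\,\mathrm{wt}(\Pi_F)=1$. Lam's weight-preserving bijection (Theorem \ref{thgroves}) then converts $\Pi_F$ into a minimal grove $\mathcal{G}_F$, giving $\Delta_{I(F)}\bigl(\tau(\Omega'(\mathcal{E}))\bigr)=1/\mathrm{wt}(\mathcal{G}_F)$ (Proposition \ref{mainprop}); note this is the \emph{reciprocal} of the grove weight, not the weight itself as you assert. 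Since the conductivities are ratios of such coordinates (Formulas \eqref{ch1}--\eqref{ch3}), the reciprocals cancel and the statement follows.

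A secondary gap: you assert that every conductivity $w_e$ is a Laurent monomial in the face weights $\{x_f\}$ as if it were already recorded in earlier sections. Face weights are gauge-invariant, whereas individual edge weights of the Lam model are not, so this needs an argument: the paper obtains it by recovering bridges and spikes directly (where the Temperley normalization pins down the gauge at boundary faces, Formulas \eqref{ch1}--\eqref{ch2'}), propagating via ratios of adjacent conductivities \eqref{ch3}, and invoking the fact from \cite{CIW} that every minimal network contains at least one bridge or spike to anchor the propagation. Without that anchor your claim (b) is unsupported.
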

Unlike  other existing methods, our solution  is applicable for any network and has a non-recursive nature, which we anticipate   might lead to  more numerically stable approaches to    the Calderón problem (see  Sections \ref{bl-box-sec}) and some problems in phylogenetic network theory, see references in Section \ref{efsec}.  

Finally, in Section \ref{luzsec}  we reveal the   principal relation of the black box problem  to the classical theory of   totally non-negative  matrices.

\textbf{Acknowledgments.} 
The work on Sections \ref{chap1}, \ref{sec:blbox}, \ref{secan} was funded  by the Russian Science Foundation project No. 20-71-10110 (https://rscf.ru/en/project/23-71-50012) which finances the work of the author at P. G. Demidov Yaroslavl State University. The work on Sections \ref{luzsec} was supported by the  Ministry of Science and Higher Education of the Russian Federation (Agreement No. 075-02-2025-1636).

The author is grateful to  Vassily Gorbounov, Dmitry Talalaev, Sotiris Konstantinou-Rizos, and Lazar Guterman for their invaluable discussions and careful reading of this work. Special thanks are due to Lazar Guterman for his assistance with the figures.

\subsection{Electrical network theory} 
\label{sec:circ}
We start with a brief introduction into the   theory of  electrical networks  following \cite{CIM}, \cite{CIW} and \cite{K}.
\begin{definition} 
An \textbf{electrical network} $\mathcal{E}(G,  w)$ is essentially a planar graph $G(V, E)$,  embedded into a disk and  equipped with a  conductivity function $w: E(G) \to \mathbb{R}_{> 0}$, which satisfies the following conditions:  
\begin{itemize}
    \item All nodes are divided into the set of inner nodes $V_I$ and the set of boundary nodes $V_B$ and  each boundary node lies on the boundary circle;
    \item Boundary nodes are enumerated clockwise from $1$ to $|V_B|:=n.$ Inner nodes are enumerated arbitrarily from $n+1$ to $|V|$;
    \item An edge weight  $w(v_iv_j)=w_{ij}$  denotes the conductivity of this edge. 
\end{itemize}

\end{definition}

Consider an  electrical network $\mathcal{E}(G,  w)$ and apply voltages   $\textbf{U}:V_{B}\to \mathbb{R}$ to its boundary nodes $V_B$. Then these boundary voltages induce the unique harmonic extension on all vertices $U: V \to \mathbb{R},$ which  might be found out by Ohm's and Kirchhoff's laws:
 \begin{equation*}
     \sum \limits_{j \in V}w_{ij}\bigr(U(i)-U(j)\bigl)=0, \ \forall i \in V_I.
 \end{equation*}

 One of the main objects associated with  each harmonic extension  of  boundary voltages   $\textbf{U}$ are 
 boundary currents $\textbf{I}=\{I_1, \dots, I_n\}$ running through boundary nodes:
 \begin{equation*}
    I_k:=\sum \limits_{j \in V}w_{ij}\bigr(U(k)-U(j)\bigl), \ k\in \{1, \dots, n\}.
 \end{equation*}

The first cornerstone result in the theory of electrical networks is that boundary voltages and currents are related to each other linearly.  % via a response matrix.
\begin{theorem} \textup{\cite{CIW}} \label{aboutresp}
Consider an electrical network $\mathcal{E}(G, w).$ Then, there is a  matrix $M_R(\mathcal{E})=(x_{ij}) \in Mat_{n \times n}(\mathbb{R})$ such that the following holds:
$$M_R(\mathcal{E})\textbf{U}=\textbf{I}.$$
This matrix is called the \textbf{response matrix} of a network $\mathcal{E}(G, \omega)$ and it satisfies the following conditions:
\begin{itemize}
    \item Matrix $M_R(\mathcal{E})$ is symmetric;
    \item All non-diagonal entries $x_{ij}$ of $M_R(\mathcal{E})$ are non-positive;
    \item For each row $($column$)$ the sum of all its entries is equal to $0.$
    \item  Matrix $M_R(\mathcal{E})$ is  circular totally non-negative i.e.  for each $k \in \{1, \dots, n \}$ all $k \times k$ circular minors of $M_R(\mathcal{E})$ $($see  \cite{CIW} and \cite{K} for more details$)$ are signed non-negative: $(-1)^k\det M_R(\mathcal{E})(P; Q) \geq 0,$ for each circular pair $(P;Q).$
\end{itemize}
\end{theorem}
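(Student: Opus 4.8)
The plan is to realize $M_R(\mathcal{E})$ explicitly as a Schur complement of the weighted graph Laplacian; the first three properties then fall out immediately and only the circular total non-negativity requires real work. I would first form the Kirchhoff matrix (weighted Laplacian) $L = L(\mathcal{E})$ on $V$, with $L_{ij} = -w_{ij}$ for $i \ne j$ and $L_{ii} = \sum_j w_{ij}$, and write it in block form for the partition $V = V_B \sqcup V_I$,
\begin{equation*}
L = \begin{pmatrix} A & B \\ B^{\mathsf T} & C \end{pmatrix},
\end{equation*}
with $A$ the $n \times n$ boundary block and $C$ the inner block. The harmonic-extension equations $\sum_j w_{ij}\bigl(U(i) - U(j)\bigr) = 0$, $i \in V_I$, are exactly $B^{\mathsf T}\mathbf{U} + C\,U_I = 0$. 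The matrix $C$ is a symmetric M-matrix which is nonsingular --- indeed positive definite --- under the (implicit) hypothesis that every connected component of $G$ meets $V_B$, so $U_I = -C^{-1}B^{\mathsf T}\mathbf{U}$ is the unique harmonic extension. Plugging this into $\mathbf{I} = A\mathbf{U} + B\,U_I$ gives $\mathbf{I} = (A - BC^{-1}B^{\mathsf T})\mathbf{U}$, and I would set $M_R(\mathcal{E}) := A - BC^{-1}B^{\mathsf T}$, the Schur complement $L/C$.

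The three elementary properties come out directly from this formula. Symmetry of $M_R(\mathcal{E})$ follows from $L^{\mathsf T} = L$ and $(C^{-1})^{\mathsf T} = C^{-1}$. Because $L\mathbf{1} = 0$, the block equations give $A\mathbf{1} + B\mathbf{1} = 0$ and $B^{\mathsf T}\mathbf{1} + C\mathbf{1} = 0$; eliminating the inner part of $\mathbf{1}$ yields $M_R(\mathcal{E})\mathbf{1} = 0$, so every row sum vanishes, and by symmetry every column sum too. For $i \ne j$ in $V_B$ one has $x_{ij} = -w_{ij} - \sum_{k,\ell}B_{ik}(C^{-1})_{k\ell}B_{j\ell}$; here $w_{ij} \ge 0$, every entry of $B$ is $\le 0$, and every entry of $C^{-1}$ is $\ge 0$ (a symmetric M-matrix has a non-negative inverse, e.g.\ by a Neumann-series expansion), so every term in the sum is $\ge 0$ and hence $x_{ij} \le 0$.

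The circular total non-negativity is the only substantial point, and I would handle it through spanning forests. First, the classical identity relating minors of a Schur complement to minors of the ambient matrix (a form of Jacobi's identity) gives, for $P, Q \subseteq V_B$ with $|P| = |Q| = k$,
\begin{equation*}
\det M_R(\mathcal{E})(P; Q)\cdot \det C = \det L\bigl(P \cup V_I;\, Q \cup V_I\bigr),
\end{equation*}
where $\det C = \det L(V_I; V_I) > 0$. Next, I would expand the right-hand side by the all-minors matrix--tree theorem (Chaiken): it equals a signed sum of weighted spanning forests of $G$, each with $n - k$ pairwise vertex-disjoint components, anchored respectively at the deleted-row vertices $V_B \setminus P$ and the deleted-column vertices $V_B \setminus Q$, and with sign equal to that of the bijection between the two anchor sets induced by the components. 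The crucial geometric input is that, for a \emph{circular} pair $(P;Q)$, planarity of the embedding of $G$ in the disk allows only non-crossing anchoring patterns, and these all contribute with one and the same sign, which a direct bookkeeping evaluates to $(-1)^k$. Clearing the positive denominator $\det C$ then exhibits $(-1)^k\det M_R(\mathcal{E})(P;Q)$ as a sum of monomials in the positive conductivities $w_{ij}$, hence $\ge 0$.

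I expect the genuine obstacle to be exactly this last step --- the combinatorial-topological argument that circularity together with planarity collapses the signed forest sum onto a single sign. The remainder is linear algebra plus the matrix--tree theorem; this sign-coherence statement is the substance of the Curtis--Ingerman--Morrow analysis recorded in \cite{CIW}, from which the theorem is quoted.
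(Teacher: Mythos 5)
The paper does not prove this statement at all: it is quoted verbatim from \cite{CIW}, so there is no internal argument to compare against. Judged on its own terms, your Schur-complement construction $M_R(\mathcal{E})=A-BC^{-1}B^{\mathsf T}$ is the standard and correct one, and your derivations of symmetry, of $M_R(\mathcal{E})\mathbf{1}=0$, and of $x_{ij}\le 0$ for $i\ne j$ (via non-negativity of $C^{-1}$ as the inverse of a nonsingular symmetric M-matrix) are complete and correct, modulo the connectivity hypothesis you rightly flag as implicit. Your route to the fourth bullet --- the quotient formula $\det M_R(P;Q)\det C=\det L[P\cup V_I,\,Q\cup V_I]$ followed by the all-minors matrix--tree theorem --- is also a legitimate and well-established path; it is in fact closer in spirit to the grove technology this paper uses later (Theorem \ref{thgroves1}) than to the original Curtis--Ingerman--Morrow argument, which proceeds by induction on connections between boundary arcs rather than through Chaiken's theorem.

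The one genuine gap is the step you yourself single out: the claim that for a circular pair $(P;Q)$ planarity forces all contributing forests to carry the single sign $(-1)^k$ is asserted, not proved, and the bookkeeping is less innocent than the phrase ``direct bookkeeping'' suggests. Note in particular that the anchor sets $V_B\setminus P$ and $V_B\setminus Q$ are not disjoint --- they share $V_B\setminus(P\cup Q)$ --- so the ``bijection induced by the components'' has fixed points, and one must first argue (using the circular position of $P$ and $Q$ on the boundary circle and the planarity of the embedding) that every shared vertex is necessarily matched to itself and that the remaining matching between $Q$ and $P$ is the unique non-crossing one $p_i\leftrightarrow q_i$; only then does the permutation sign collapse to a constant, which one must still verify equals $(-1)^k$ under Chaiken's sign convention. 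Without that argument the signed forest sum could a priori contain cancelling terms. As a proof sketch for a cited classical theorem your proposal is sound and honestly delimited, but the sign-coherence lemma is the theorem's real content and would need to be written out (or explicitly delegated to \cite{CIW} or to Kenyon--Wilson-type grove results) for the proof to be complete.
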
 

%There are many electrical networks which share the same response matrix, we will describe  now.

The following five local network transformations given below are called the \textbf{electrical transformations}. 
\begin{figure}[H]
    \centering
    \includegraphics[width=1.0\textwidth]{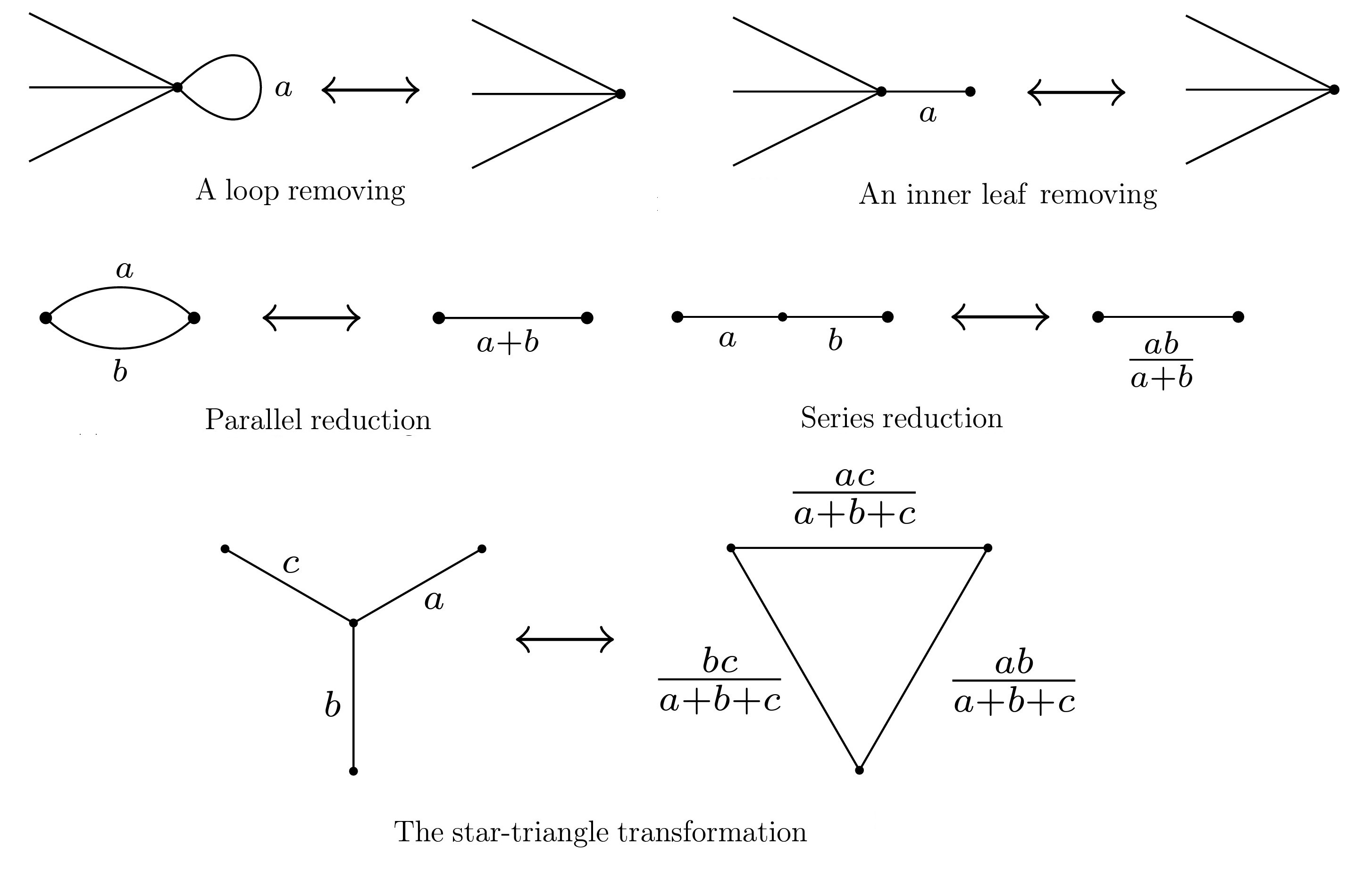}
 \hspace{-1.5cm}   \caption{Electrical transformations }
    \label{fig:el_trans}
\end{figure}

Two electrical networks are defined to be equivalent if they can be obtained from each other by a sequence of electrical transformations. The importance of these transformations is revealed by the following result.
\begin{theorem} \textup{\cite{CIM}} \label{gen_el_1}
The electrical transformations preserve the response matrix of an electrical network. Any two electrical networks that share the same response matrix are equivalent.
\end{theorem}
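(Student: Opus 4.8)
The plan is to split the statement into an \emph{invariance} half (each of the five moves of Figure \ref{fig:el_trans} fixes $M_R$) and a \emph{completeness} half (equal response matrices force equivalence), and to drive the first half entirely through the Schur-complement description of the response matrix. Writing $L=\begin{pmatrix} L_{BB} & L_{BI}\\ L_{IB} & L_{II}\end{pmatrix}$ for the weighted Kirchhoff matrix of $\mathcal{E}(G,w)$ with the boundary block listed first, the harmonic-extension equations of Section \ref{sec:circ} say precisely that eliminating the interior voltages expresses $\textbf{I}$ in terms of $\textbf{U}$ through the Schur complement $M_R(\mathcal{E})=L_{BB}-L_{BI}L_{II}^{-1}L_{IB}$. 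This both reproves the existence asserted in Theorem \ref{aboutresp} and is the only computation I need for the first half.

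First I would record the single lemma underlying all five moves: eliminating one interior node $v$ of degree $d$ by a single step of partial Gaussian elimination on $L$ replaces the $d$ edges at $v$ by the complete ``mesh'' on its neighbours with conductances $w'_{ij}=w_{vi}w_{vj}/\sum_k w_{vk}$, and by the associativity of Schur complementation (the quotient formula) eliminating this node first and the rest afterwards yields the same final $M_R$. The five transformations are exactly the low-degree instances of this star--mesh step together with the trivial merge: a degree-$1$ interior node produces an empty mesh, so a dead end is deleted with no change; degree $2$ produces a single edge, which is the series rule; degree $3$ produces a triangle, which is the $Y$--$\Delta$ rule; parallel edges merely add their conductances inside $L$ itself; and a self-loop contributes $0$ to $L$ and may be dropped. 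Hence each move, and therefore every finite sequence of moves, preserves $M_R(\mathcal{E})$.

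For the converse I would first use the edge-decreasing moves (series, parallel, pendant- and loop-removal, and the lens reductions via $Y$--$\Delta$) to send any network to a \emph{minimal} representative of its equivalence class; termination is clear because each such move strictly lowers the edge count. It then remains to show that a minimal network is pinned down, up to the moves, by $M_R$. This is the deep content of \cite{CIM} and proceeds through the medial graph: the signed vanishing pattern of the circular minors of $M_R$ (Theorem \ref{aboutresp}) records exactly which boundary pairs are connected through the interior, this connection data fixes the strand permutation of the medial graph and hence the minimal graph up to the allowed moves, and on a fixed minimal graph the conductances are recovered from $M_R$ by an inductive boundary-peeling that solves the connection determinants one edge at a time.

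The hard part will be this last recoverability step: proving that on a fixed minimal (critical) graph the map $w\mapsto M_R$ is injective, and that the medial/connection data genuinely determines the graph up to electrical equivalence rather than merely up to abstract isomorphism. I would isolate it as the two statements that minimal networks sharing a medial graph are mutually reachable by $Y$--$\Delta$ moves, and that the boundary-peeling recursion is well defined precisely when the relevant connection is present. Both are the technical heart of the circular-planar theory and are the reason the completeness half is quoted from \cite{CIM}.
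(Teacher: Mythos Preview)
The paper does not give its own proof of Theorem \ref{gen_el_1}; it is simply stated with attribution to \cite{CIM} and used as background. So there is nothing in the paper to compare your argument against --- you are effectively reconstructing the original Curtis--Ingerman--Morrow proof rather than matching anything the author wrote.

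That said, your sketch is correct and is essentially the standard route. The invariance half via the Schur-complement description of $M_R$ and the star--mesh interpretation of the five moves is exactly how it is usually done (and your framing of all five moves as low-degree instances of single-node elimination plus the trivial parallel/loop cases is clean). For the completeness half you correctly identify that the real work lies in the two facts you isolate at the end: that the medial-strand permutation (equivalently the connection pattern read off from the vanishing of circular minors) determines the minimal graph up to $Y$--$\Delta$ moves, and that on a fixed minimal graph the conductances are recoverable. Both are genuinely nontrivial and are precisely what \cite{CIM} supplies; your proposal is honest in flagging them as the hard core rather than pretending they follow from the Schur-complement calculus. If you were writing this up in full you would need to actually carry out the medial-graph/lens-reduction argument and the boundary-peeling recursion, but as a roadmap your outline is sound.
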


According to Theorem \ref{gen_el_1}, we will consider electrical networks  up to electrical transformations. We will denote by $E_n$ the set of all electrical networks with $n$ boundary nodes up to electrical transformations. In fact, each electrical network can be transformed  into a minimal network.

\begin{definition}
    A \textbf{median graph} of an electrical  network $\mathcal{E}(G, w) $  is a graph $G_M$ whose internal vertices are the midpoints of the edges of $G$ and two internal vertices are connected by an edge if the edges of the original graph $G$ are adjacent. The boundary vertices of $G_M$ are defined as the intersection of the natural extensions of the edges of $G_M$  with the boundary circle. Since the interior vertices of the median graph have degree four, we can define the \textbf{strands} of the median graph as  the  paths which always go straight through any degree four vertex. %The strands naturally define a permutation $\tau(\mathcal E)$ on the set of points $\{1, \dots, 2n\}.$
\end{definition}
%and the strand permutation $\tau(\mathcal E)=(14)(36)(25)$}
\begin{figure}[H]
    \centering
    \includegraphics[width=0.4\textwidth]{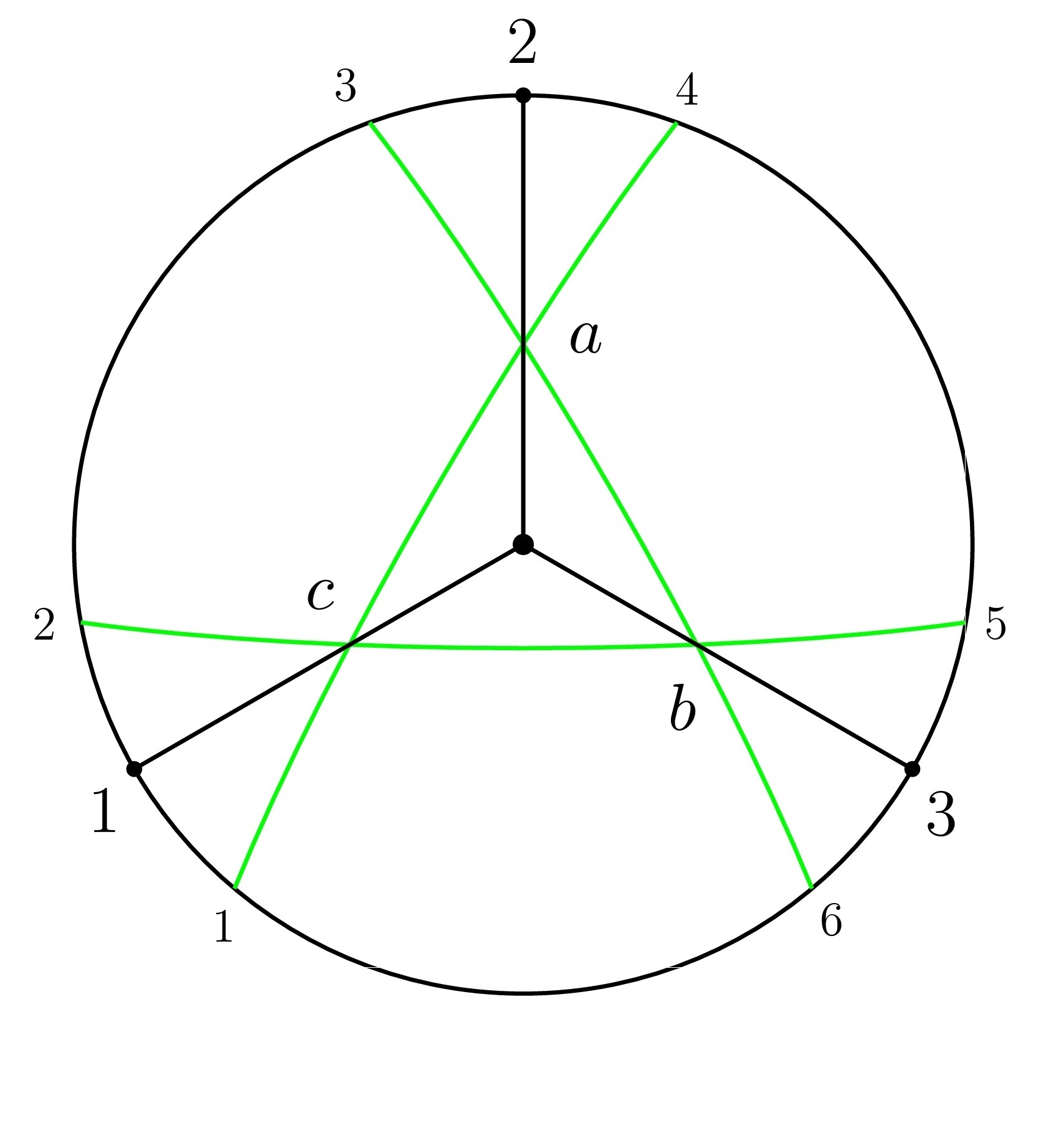}
    \caption{Star-shape network and its median graph}
    \label{fig:triangle}
\end{figure}
\begin{definition}
An  electrical network is called \textbf{minimal} if the strands of its median graph do not have self-intersections; each strand does not form a  closed loop; any two strands intersect at most once  i.e. the median graph has no  lenses, see Fig. \ref{fig:loop}.
\end{definition}

\begin{figure}[H]
    \centering
    \includegraphics[width=0.3\textwidth]{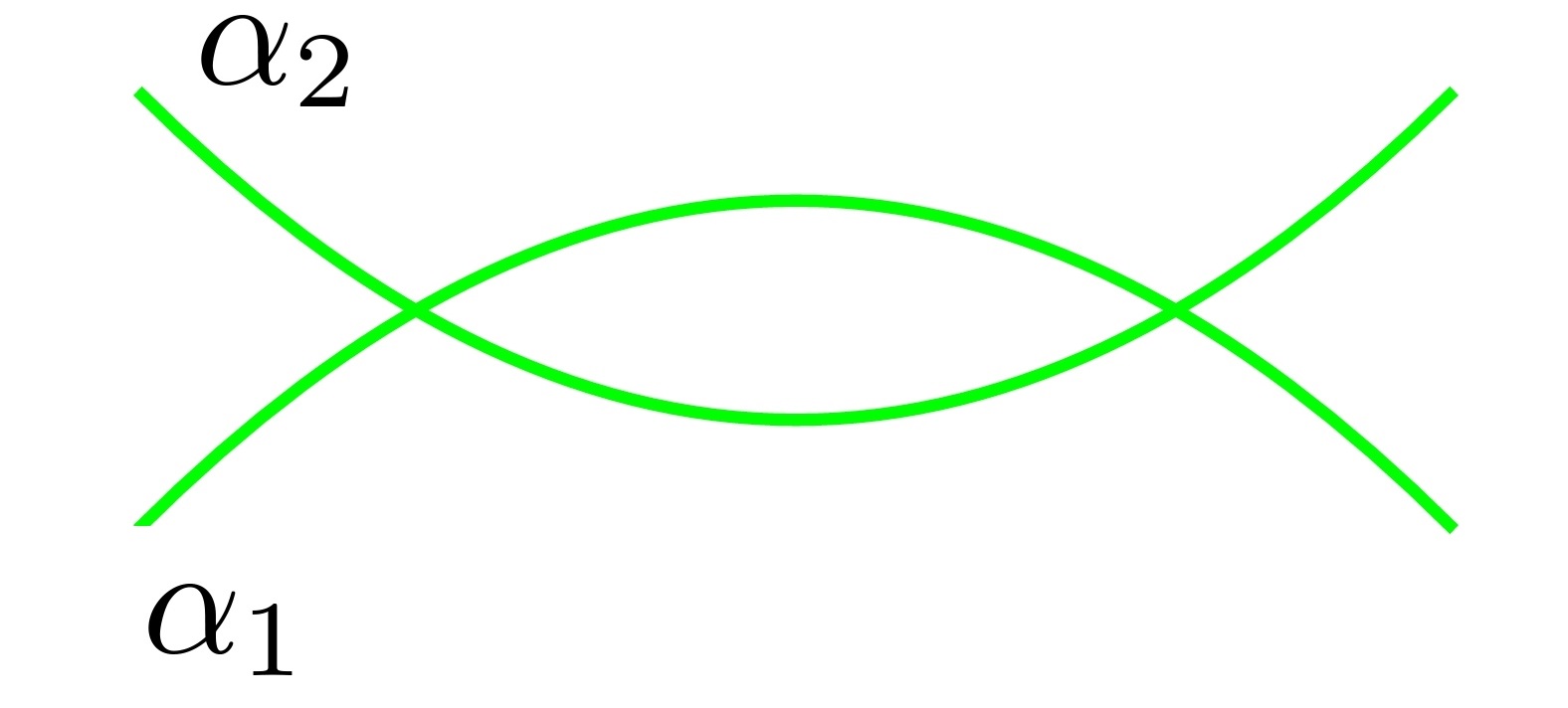}
    \caption{A lens obtained  by the intersection of strands  $
\alpha_1$ and  $\alpha_2$ }
    \label{fig:loop}
\end{figure}

\begin{theorem} \textup{\cite{CIW},\cite{K}}
\hskip 1pt
\begin{itemize}
    \item Each electrical network $\mathcal{E}(G, w) \in E_n$ is equivalent  to a minimal network;
    \item  Any two minimal  electrical networks which share  the same response matrix  can be transformed to each other only by the star-triangle transformation.
   % \item  A circular electrical network is defined uniquely up to electrical transformations by its strand permutation.
\end{itemize}
\end{theorem}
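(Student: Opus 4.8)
The plan is to carry out both parts on the medial graph $G_M$, using the dictionary in which each electrical transformation of Figure~\ref{fig:el_trans} becomes a local move on the strands: loop and pendant (spike) deletions delete a closed strand or a strand self-intersection, series and parallel reductions empty a lens, and the star-triangle transformation flips a triangular face of $G_M$, i.e.\ slides one strand across the crossing of two others (a \emph{triangle move}). The key bookkeeping observation is that loop/pendant deletions and series/parallel reductions strictly decrease the number of strand crossings while a triangle move preserves it; and, conversely, three pairwise adjacent edges of $G$ always form either a star or a triangle, so a triangle move on $G_M$ is always realised by a star-triangle transformation on $\mathcal{E}(G,w)$.

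For the first bullet I would induct on the number of strand crossings. If $\mathcal{E}(G,w)$ is not minimal, some strand is closed or self-intersecting, or some two strands form a lens; choosing an innermost such configuration (one bounding a sub-disk whose interior meets no other strand) identifies the corresponding region of $G$ as the local picture of a loop, a spike, or a pair of series/parallel edges, so one of the crossing-decreasing electrical transformations applies. All five moves are supported in the interior, away from $V_B$, so no new strand endpoint on the boundary circle is created; iterating, we reach a lens-free, hence minimal, network equivalent to $\mathcal{E}(G,w)$.

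For the second bullet the strategy is to pass through a chain of determinations: from the response matrix to the connection data, from there to the medial endpoint matching, and finally to the network up to triangle moves. The first link is the classical result of \cite{CIM}: the circular minor $\det M_R(\mathcal{E})(P;Q)$ is non-zero --- equivalently, in view of Theorem~\ref{aboutresp}, $(-1)^{|P|}\det M_R(\mathcal{E})(P;Q)>0$ --- exactly when the boundary nodes indexed by $P$ can be joined through $G$ by vertex-disjoint paths to those indexed by $Q$, so $M_R(\mathcal{E})$ determines the set $\mathcal{C}(G)$ of such connections. The second link, also from \cite{CIM}, \cite{CIW}, is that for a minimal network $\mathcal{C}(G)$ determines the combinatorial type of $G_M$ near the boundary, namely the perfect matching of the $2n$ strand endpoints on the boundary circle. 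The third link is the combinatorial core: two minimal networks whose medial graphs realise the same endpoint matching are related by a sequence of triangle moves --- the analogue for lens-free strand diagrams of the theorem that two reduced words for a permutation are connected by three-term braid relations --- and each triangle move is a star-triangle transformation, so neither deletion nor series/parallel reduction is ever needed (each of these changes the strand count or creates a lens). Adding the converse, that the star-triangle move preserves $M_R$ by Theorem~\ref{gen_el_1}, completes the argument for the second bullet.

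The hard part will be the third link executed \emph{within} the class of lens-free diagrams: given two lens-free strand diagrams with the same endpoint matching, one must produce a path of triangle moves between them that never passes through a diagram with a lens or a closed strand, since otherwise a minimality-destroying electrical move would be forced and the inductive control lost. I expect the right mechanism to be an innermost-triangle induction: isolate one strand and push it by triangle moves to a canonical position (all its crossings pulled toward one endpoint), using that in a lens-free diagram a triangular face suitable for flipping always exists next to a chosen crossing, then recurse on the diagram with that strand removed. A secondary technical point is the second link --- checking that $\mathcal{C}(G)$ genuinely pins down the endpoint matching rather than some coarser invariant --- which needs the precise interaction between circular pairs and the non-crossing combinatorics of the medial strands.
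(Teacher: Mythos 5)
This theorem is quoted from \cite{CIW} and \cite{K}; the paper offers no proof of its own, so your proposal can only be compared with the standard arguments in those references --- and it does follow their route (medial strand calculus, connections read off from circular minors, triangle moves as the lens-free analogue of braid relations). Two steps are genuinely gapped as written. For the first bullet you assert that a non-minimal diagram always contains an innermost closed strand, self-crossing or lens ``bounding a sub-disk whose interior meets no other strand,'' and conclude that only the crossing-decreasing moves are ever needed. Neither claim holds: a lens of minimal area can still be traversed by arcs of other strands (each such arc crosses the two sides of the lens once each, forming a \emph{triangle} with a corner of the lens rather than a smaller lens), so before a series/parallel reduction applies one must first \emph{empty} the lens by star-triangle moves that push those arcs out through its corners. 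The induction still terminates --- the emptying moves preserve the crossing count and the subsequent reduction strictly decreases it --- but the toolkit for the first bullet must include the star-triangle move, contrary to your bookkeeping.

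For the second bullet, your three links at best show that the two underlying \emph{graphs} are related by triangle moves, whereas the theorem concerns weighted networks. After transporting $\mathcal{E}_1$ onto the graph of $\mathcal{E}_2$ via the (response-preserving, positivity-preserving) star-triangle weight formulas, you are left with two conductivity functions on the same minimal graph sharing a response matrix, and you must invoke the recoverability statement (Theorem \ref{bl-box_th}, established independently in \cite{CIW} by the boundary spike/bridge recursion) to conclude that they coincide; the proposal never takes this step. Moreover the ``third link'' --- that two lens-free strand diagrams with the same endpoint matching are connected by triangle moves \emph{staying within} the lens-free class --- is the real content of the uniqueness assertion, and you explicitly defer it. Your innermost-triangle induction is the correct mechanism (it is essentially the argument of \cite{CIM}, and in the plabic-graph language of \cite{Pos}), but until that induction is carried out, including the verification that a flippable empty triangular face always exists next to a chosen crossing of a lens-free diagram, the proof is a program rather than an argument.
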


 \subsection{The black box problem for electrical networks} \label{bl-box-sec}

One of the most important and notoriously difficult problems in electrical network theory, which has significantly influenced the entire development of this field \cite{CIW}, is known as the \textbf{black box problem}.
\begin{problem}  \label{bl-box}    Consider an electrical network $\mathcal{E}(G, w)$ on a given graph $G.$ The conductivity function $w$ is required to be recovered by a known response matrix $M_R(\mathcal{E}).$
\end{problem}
The following theorem reveals the possibility of solving the black box problem:
\begin{theorem} \textup{\cite{CIM},\cite{CIW}}  \label{bl-box_th}	The conductivity function $w$ of a  minimal  electrical network $\mathcal{E}(G, w) \in E_n$ can be uniquely recovered by its response matrix $M_R(\mathcal{E}).$ 
\end{theorem}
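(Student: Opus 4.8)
The plan is to prove the recovery by induction on the number of edges of $G$, peeling off one boundary feature of the network at a time. The base case is the edgeless network, for which $M_R(\mathcal{E})=0$ and there is nothing to recover. For the inductive step I would establish three ingredients. First, a minimal network with at least one edge always carries either a \emph{boundary edge} --- an edge joining two cyclically consecutive boundary nodes --- or a \emph{boundary spike} --- an edge joining a boundary node to an interior node. Second, in either situation the conductivity of that edge is an explicit rational function of the entries of $M_R(\mathcal{E})$, with denominators that are forced to be nonzero by minimality together with Theorem \ref{aboutresp}. Third, deleting the boundary edge (resp.\ contracting the boundary spike) produces a network $\mathcal{E}'$ on a strictly smaller graph which is again minimal and whose response matrix $M_R(\mathcal{E}')$ is computed from $M_R(\mathcal{E})$ by an explicit rational operation. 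Running the induction then exhibits a reconstruction procedure $(G,M_R)\mapsto w$; since at each stage the feature used is determined by $G$ and both its conductance and the reduced response matrix are fixed functions of $M_R$, any weighting $w'$ on $G$ with the same response matrix must coincide with the output of this procedure, which yields uniqueness (consistently, of course, with the theorem that two minimal networks sharing a response matrix differ only by star--triangle moves).

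For the existence ingredient I would argue on the median graph. Minimality says its strands are simple arcs, no two of them cross more than once, and none is a closed loop; equivalently there are no lenses (Fig.\ \ref{fig:loop}). The strands have $2n$ endpoints on the boundary circle, and I would consider two endpoints that are cyclically consecutive. A short case analysis of the region these two strands bound --- ruling out a lens and a self-intersection by minimality --- forces that region to contain no further strand segments, and translating back to $G$ this empty region is witnessed precisely by a boundary edge or a boundary spike at the boundary node lying between the two chosen endpoints. This is the classical critical-graph lemma of \cite{CIM}, \cite{CIW}, and it is the combinatorial heart of the theorem; note it genuinely fails without minimality.

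For the recovery and reduction ingredients, suppose first that $pq$ is a boundary edge with conductivity $\gamma$. Since $p$ and $q$ are both boundary nodes, this edge sits in parallel with the rest of the network at the level of Kirchhoff's laws, so $M_R(\mathcal{E}) = M_R(\mathcal{E}') + \gamma\,(e_p-e_q)(e_p-e_q)^{\mathsf T}$, where $\mathcal{E}'$ is obtained by deleting $pq$; one recovers $\gamma$ as an explicit ratio of circular minors of $M_R(\mathcal{E})$ and then $M_R(\mathcal{E}')$ by the rank-one subtraction. Suppose instead $p$ carries a boundary spike $pv$ with conductivity $\gamma$ and interior endpoint $v$. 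Here minimality, via the dictionary between non-vanishing signed circular minors of $M_R(\mathcal{E})$ and boundary-to-boundary connections through $G$ (the circular totally non-negative structure of Theorem \ref{aboutresp}), guarantees a circular pair whose connection is forced to traverse the spike, so the corresponding $(-1)^k$-normalized minor is strictly positive; $\gamma$ then emerges as a ratio of two such minors --- one comparing a connection forced through the spike with one avoiding it --- which is the Schur-complement identity obtained by eliminating the interior endpoint $v$. Contracting the spike (identifying $v$ with $p$, keeping $p$ a boundary node and dropping one interior node and one edge) preserves minimality of the median graph, and $M_R$ of the contracted network is the $\gamma\to\infty$ specialization of the same rational expression, hence again an explicit rational function of $M_R(\mathcal{E})$. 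Applying the inductive hypothesis to $\mathcal{E}'$ and adjoining the recovered $\gamma$ completes the step.

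The main obstacle is the existence ingredient: that a minimal network with an edge must display a boundary edge or boundary spike is a genuinely geometric fact resting on the lens-free / single-crossing structure of the median graph, and it must be coupled with the connection-versus-minor dictionary to see that the relevant denominators do not vanish. Pinning down the exact minor formulas in the spike case, with the signs dictated by the $(-1)^k$ normalization in Theorem \ref{aboutresp}, is the remaining technical point; everything else is bookkeeping along the induction. I note in passing that the embedding $\mathcal{E}\mapsto\mathcal{L}(\mathcal{E})\in Gr_{\ge 0}(n-1,2n)$ developed later in the paper yields an alternative, non-recursive proof of the same statement through the generalized chamber ansatz applied to the twisted point $\tau(\mathcal{L}(\mathcal{E}))$.
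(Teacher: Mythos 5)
Your proposal is essentially correct, but it is the classical recursive argument of \cite{CIM} and \cite{CIW} --- the spike--bridge decomposition --- which this paper cites rather than reproves; the paper's own proof of recoverability is the non-recursive one developed in Section \ref{secan}. There, Problem \ref{bl-box} is reduced via the generalized Temperley trick and the embedding $\mathcal{L}$ of Theorem \ref{th: main_gr} to the Postnikov black box problem for the minimal Lam model $N(\mathcal{E^T},\boldsymbol\omega)$ (Theorem \ref{connection}), and each conductivity is then read off in one shot as a monomial ratio of Plücker coordinates of the twisted matrix $\tau\bigl(\Omega'(\mathcal{E})\bigr)$ through the gauge-invariant face weights of Lemma \ref{lem:post-cord} and the chamber-ansatz formulas \eqref{ch1}--\eqref{ch3}. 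The trade-off is real: your route needs only elementary linear algebra plus the connection-versus-minor dictionary, but it is inherently sequential --- each conductance and each reduced response matrix is a rational function of the previous stage, so errors propagate and the minor formulas must be re-chosen at every step --- whereas the twist approach computes all conductivities from a single matrix inversion-type operation, at the price of the Grassmannian machinery. Two small points on your sketch: the combinatorial heart you correctly isolate (every minimal network with an edge has a boundary edge or boundary spike) is exactly the fact the paper also leans on when justifying Algorithm \ref{mainalg}, and in the boundary-edge case the conductance is not a bare ratio of circular minors but the entry $x_{pq}$ corrected by such a ratio (the Schur-complement term coming from $x'_{pq}$ of the deleted network); as you note, pinning down these formulas and the non-vanishing of their denominators is where the work lies, and it is precisely what the twist formalism packages away.
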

Notably, all proofs of Theorem \ref{bl-box_th} are constructive and directly provide practical algorithms for solving Problem \ref{bl-box}. We will now highlight a few of the main ones:
\begin{itemize}
    \item The recursive  spike-bridge-pendant decomposition   \cite{CIW};   
    \item Solutions to Problem \ref{bl-box} for special (for instance, standard) electrical networks \cite{CIW}, \cite{KW space};
    \item Reducing Problem \ref{bl-box} to the black box problem in the Postnikov theory. This approach will be described in Section \ref{sec:blbox} (also see \cite{Ter}).
\end{itemize}
 Problem \ref{bl-box} plays a  central role  in electrical network theory because it enables the development of fast and stable numerical solutions to the well-known \textbf{Calderón problem} \cite{Uh1}, \cite{Uh2}. In fact, Problem \ref{bl-box} can be viewed as a discrete analog \cite{BDV} of a specific instance of the Calderón problem, known as the inverse problem of electrical impedance tomography in a circle. For completeness, we will briefly outline the formulation of this problem and its link to Problem \ref{bl-box}. 
%[\cite{BDV}, \cite{Vas}, \cite{Uh2}]  \cite{AdH}

Consider a conducting medium in a closed disk $D$ i.e. a positive bounded  function of conductivity $\sigma(x): D \to \mathbb{R}_{> 0}$ and an electric voltage $u(x): D \to \mathbb{R}$, which are related to each other by the following second-order elliptic partial differential equation:
\begin{equation*} \label{eq-diff}
    div(\sigma(x)\nabla u(x))=0, \ \forall x \in  int \ D,
\end{equation*}
with Dirichlet boundary data which is a known voltage  $V: \partial D \to \mathbb{R}$ applied to the boundary of $D$:
  \begin{equation*} \label{eq-bound}
    u(x)_{|\partial D}=V(x), \ \forall x \in \partial D.
\end{equation*}

The voltage $V(x)$ induces a Neumann boundary data, namely  currents $I_{V}(x)$  running   through     $\partial D$$: \partial D \to \mathbb{R}$ induced by $V(x)$ :
$$I_{V}(x)=(n, \sigma(x)\nabla u(x)_{|\partial D}),$$
where $n$ is the unit vector of the outer normal to $\partial D;$ and $(\cdot, \cdot)$ is the standard scalar product.

Suppose that for each boundary voltage $V(x)$ the corresponding induced  boundary currents $I_{V}(x)$ are known. Formally, it means that the operator $\Lambda^{DtN}$ from  Dirichlet data to  Neumann data is known:

\begin{equation*} 
   \Lambda^{DtN}V(x)=I_{V}(x).
\end{equation*}

\begin{problem}  \label{el-tom}
\textbf{Calderón problem:} It is required to recover a conductivity function $\sigma(x)$ using a known operator $\Lambda^{DtN}.$
\end{problem}

Omitting some important details,  we briefly describe the  idea of the  numerical approach to Problem \ref{el-tom} following \cite{BDV}. 

\begin{algorithm} \textup{\cite{BDV}}
\hskip 1pt
    \begin{itemize}
        \item Using the operator $\Lambda^{DtN},$ construct a matrix  $M$, which is a    response  of the general position (well-connected, see \cite{K}) minimal network $\mathcal{E}(G, w)$ on a given graph;
        \item Solve Problem \ref{bl-box} for  $\mathcal{E}$;
        \item Using the edge conductivities of $\mathcal{E}$  numerically found out a conductivity  $\sigma(x)$. 
    \end{itemize}
\end{algorithm}

\section{From electrical networks to non-negative Grassmanians} \label{sec:blbox}
\subsection{The Postnikov theory} \label{postteorsec}
The central idea of our paper is that the black box problem can be framed as a specific instance of the black box problem within the Postnikov theory, which we briefly describe following  \cite{L3} and \cite{Pos}.

\begin{definition}
The totally non-negative Grassmannian $\mathrm{Gr}_{\geq 0}(k, m)$ is a subset of the points in the Grassmannian $\mathrm{Gr}(k, m)$ whose Plücker coordinates $\Delta_I$ have the same sign or vanish.
\end{definition}
The key result of  the  Postnikov theory  is that each $X \in \mathrm{Gr}_{\geq 0}(k, m)$ can be parametrized by the Lam models.

\begin{definition}  \label{l1}
A planar  graph $\Gamma$ together with a weight function $\omega:E(\Gamma)\rightarrow \mathbb R_{> 0}$ is called a \textbf{Lam model} $N(\Gamma, \omega)$ if the following holds:
%and the weights of the edges incident to the boundary vertices are all equal to one
\begin{itemize}
\item $\Gamma$ is embedded into a disk $D$, and its nodes are divided on two sets, namely a set of  inner nodes $V_I$ and a set of boundary nodes $V_B;$ 
    \item The boundary nodes $V_B$ lay on  a boundary circle $\partial D$ and are numbered clockwise from $1$ to $n:=|V_B|$;
    \item The degrees of the boundary nodes are all equal to one;
    \item All the vertices of $\Gamma$ are colored by either black or white color and each edge is incident to the vertices of different colors.
\end{itemize}
Note that when illustrating Lam models, we will omit the colors of boundary nodes and suppress the weights of edges if they are equal to $1.$
\end{definition}

\begin{definition} A \textbf{dimer} $\Pi$ is a collection of edges of $\Gamma$ such that
\begin{itemize}
\item Each interior vertex is incident to one of the edges in $\Pi$;
\item Boundary vertices may or may not be incident to the edges in $\Pi$.
\end{itemize}
The weight $\mathrm{wt}(\Pi)$ of a dimer $\Pi$ is a product of  all  weights of its edges.  
\end{definition}
\begin{figure}[H]
     \hspace*{-10mm}
    \includegraphics[scale=0.1]{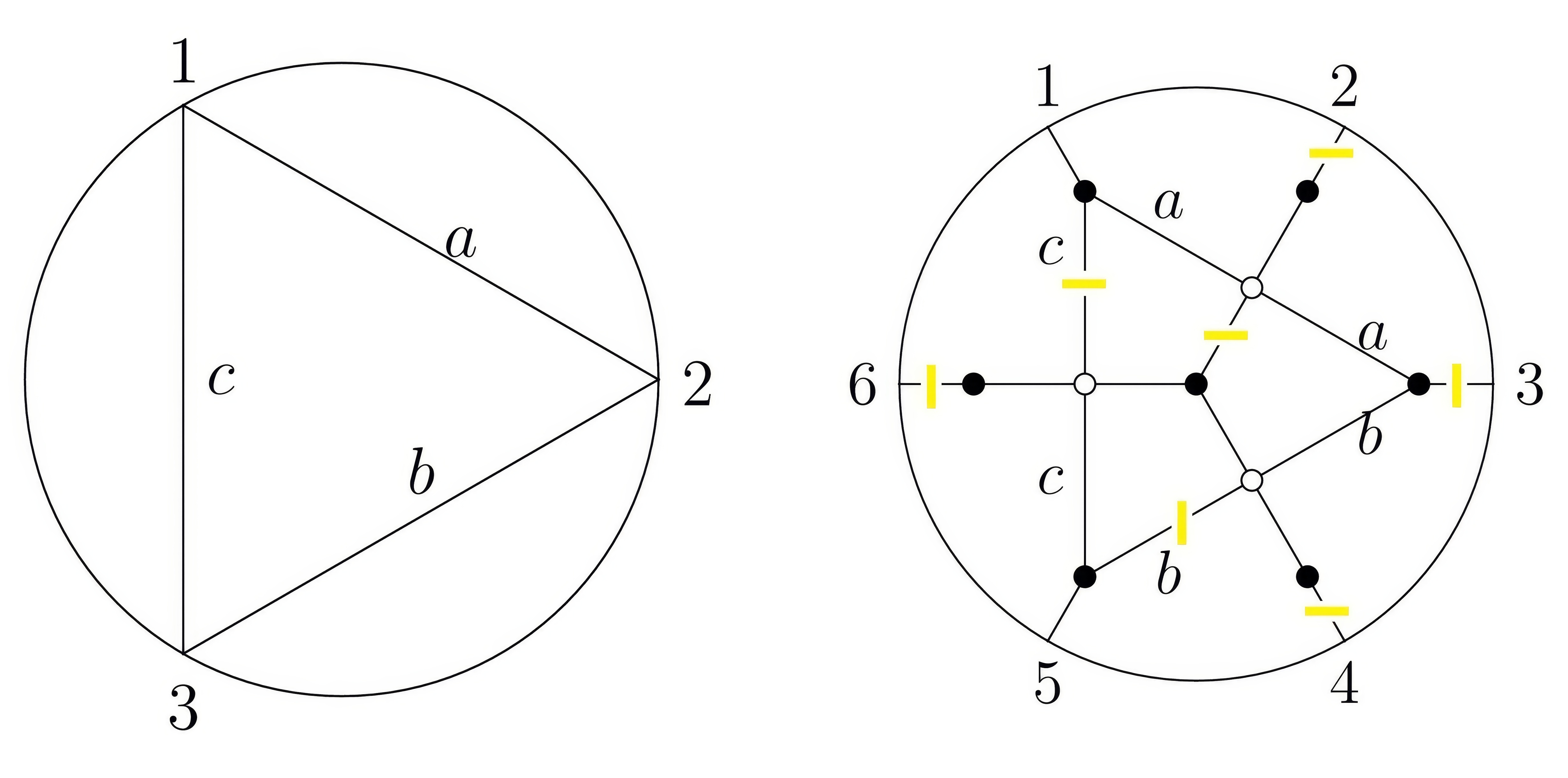}
     \caption{A Lam model with            $k(\Gamma)=2$ and its dimer $\Pi \in \Pi(15)$ with the weight $\mathrm{wt}(\Pi)$ equals to $ab$ }
   \end{figure}

Consider a Lam model  $N(\Gamma, \omega)$ and its arbitrary dimer $\Pi$.  It is not difficult to verify that the sum of a number of black boundary nodes  are {\it covered} by $\Pi$ with a  number of white boundary nodes  are {\it not covered} by $\Pi$ is independent of the choice of dimer $\Pi$. This sum can be calculated as follows: 

$$k(\Gamma)=\frac{1}{2}\left(n+\sum\limits_{v \in black}(\mathrm{deg}(v)-2)+\sum\limits_{v \in white}(2-\mathrm{deg}(v))\right),$$
where   summations are over black or white inner nodes of  $\Gamma.$ This observation leads to the following definition.
%For $I\subset V_0(\Gamma)$ denote by $\Pi(I)$ the set of almost perfect matchings  such that the black vertices in $I$ are {\it incident} to the edges in $\Pi$ and the white vertices in $I$ are {\it not incident} to the edges from $\Pi$. For a bipartite network  $N(\Gamma, \omega)$ define a collection of boundary measurements as follows:
\begin{definition} \label{ld}
For each $I\subset V_I$ such that $|I|=k(\Gamma) $ the \textbf{boundary measurement} is given by the formula:
\begin{equation*}
\Delta_{I}^d=\sum \limits_{\Pi\in \Pi(I)}\mathrm{wt}(\Pi),
\end{equation*}
where  the summation goes over all dimers $\Pi \in  \Pi(I)$ such that the black boundary nodes belonging to $I$  are {\it covered} by $\Pi,$  and white nodes belonging to $I$   are {\it not covered} by $\Pi$. 
\end{definition}

 %Now we are ready to formulate the main theorem in the Postnikov theory of non-negative Grassmannians.
\begin{theorem} \textup{\cite{L3}} \label{l2}
For a Lam model $N(\Gamma, \omega)$ the collection of boundary measurements $\Delta_{I}^d$ considered as a set of Plücker coordinates defines a point in  $\mathrm{Gr}_{\geq 0}(k(\Gamma), m)$. In particular, the boundary measurements $\Delta_{I}^d$ satisfy the Plücker relations.

Moreover, for each point  $X\in \mathrm{Gr}_{\geq 0}(k, m)$, there is a  Lam model $N(\Gamma, \omega)$ associated with it such that $\Delta_I(X)=\Delta^d_I.$
\end{theorem}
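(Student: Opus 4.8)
The plan is to prove the statement in two halves: (i) that the boundary measurements $\Delta^d_I$ form the Pl\"ucker vector of a genuine point of $\mathrm{Gr}_{\geq 0}(k(\Gamma), m)$, and (ii) that every point of $\mathrm{Gr}_{\geq 0}(k(\Gamma), m)$ arises this way. This is Lam's theorem (\cite{L3}), and the natural route builds on Postnikov's parametrization of $\mathrm{Gr}_{\geq 0}$ (\cite{Pos}); I describe that route.

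For half (i) I would first dispatch non-negativity, which is immediate: each $\Delta^d_I = \sum_{\Pi \in \Pi(I)} \prod_{e \in \Pi} \omega(e)$ is a sum of monomials in the strictly positive weights, hence $\geq 0$; and the vector is not identically zero, since any dimer $\Pi$ covers some set $S$ of black boundary nodes and misses some set $T$ of white boundary nodes with $|S| + |T| = k(\Gamma)$ by the very invariance of $k(\Gamma)$ recalled before Definition \ref{ld}, whence $\Delta^d_{S\cup T} > 0$. The real content is the Pl\"ucker relations, and the plan is to realize the $\Delta^d_I$ as the maximal minors of a single fixed matrix. Since a Lam model is properly $2$-coloured, $\Gamma$ is bipartite; I would fix a reference dimer $\Pi_0$, use it to orient the edges of $\Gamma$ as a perfect orientation, and observe via the Lindstr\"om--Gessel--Viennot lemma that the path generating functions between boundary sources and sinks of this orientation are the maximal minors of a $k\times m$ boundary measurement matrix $A$. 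The generalized Temperley trick, i.e.\ the standard ``matchings are flows'' dictionary, then identifies $\Delta^d_I$ weight-by-weight with the corresponding minor of $A$, up to one sign depending on $I$ only through a planarity-controlled permutation (this is where a Kasteleyn signing of the disk graph with pendant boundary vertices is used). Once $\Delta^d_I = \pm\Delta_I(A)$ coherently, the Pl\"ucker relations for $A$ transfer to $\{\Delta^d_I\}$, and positivity of the signs --- hence membership in the \emph{non-negative} Grassmannian --- follows from positivity of $\omega$ together with coherence of the Kasteleyn signs. Alternatively, one may simply cite Postnikov's theorem that the boundary measurement map of a plabic graph with positive weights lands in $\mathrm{Gr}_{\geq 0}$, and feed the same dictionary into it.

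For half (ii) I would invoke the cell decomposition $\mathrm{Gr}_{\geq 0}(k,m) = \bigsqcup_{\mathcal C} S_{\mathcal C}$ into positroid cells, together with Postnikov's result that each $S_{\mathcal C}$ is the image under the boundary measurement map of the all-positive locus of edge weights on a fixed reduced plabic graph $\Gamma_{\mathcal C}$, and that this map surjects onto $S_{\mathcal C}$. Given such a point $X$, I would choose the cell containing it and positive weights on $\Gamma_{\mathcal C}$ realizing $X$; replacing $\Gamma_{\mathcal C}$ by a bipartite model if necessary, via the standard contraction/expansion and degree-two moves (which change neither the boundary measurement map nor the dimer partition functions) and trimming so that boundary vertices become pendant, I would obtain a Lam model $N(\Gamma,\omega)$. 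The dictionary of half (i) then yields $\Delta^d_I = \Delta_I(X)$ for every $I$, as required.

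The hard part, in both halves, is the dimer/flow translation and its sign bookkeeping: pinning down a Kasteleyn signing for a bipartite graph in the disk with pendant boundary vertices so that the $\Delta^d_I$ come out as honest minors with coherent signs, and checking that the matching/flow bijection is simultaneously weight-preserving and compatible with the colour-dependent ``covered/not covered'' convention of Definition \ref{ld}. Everything else is routine bookkeeping or a direct appeal to Postnikov's theory.
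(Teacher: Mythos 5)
The paper contains no proof of this statement: Theorem \ref{l2} is quoted from \cite{L3} as a black box, so there is no in-paper argument to measure yours against. Your outline is, in substance, the standard proof from the cited sources: nonnegativity from positivity of the weights, the Pl\"ucker relations via the dimer-to-flow dictionary that turns the $\Delta^d_I$ into maximal minors of a single boundary measurement matrix (with the sign coherence being the genuine technical content, exactly as you flag), and surjectivity via Postnikov's positroid cell decomposition together with the reduction of an arbitrary reduced plabic graph to a bipartite one with pendant boundary vertices. Two small points. First, your ``immediate'' argument that the Pl\"ucker vector is not identically zero silently assumes that $N(\Gamma,\omega)$ admits at least one dimer; Definition \ref{l1} as stated in this paper does not guarantee this, and the existence of an almost perfect matching is a standing hypothesis (or a nondegeneracy convention) in \cite{L3} that you should make explicit, since without it the theorem as literally stated fails. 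Second, a terminological slip: you call the matchings-are-flows dictionary ``the generalized Temperley trick,'' but in this paper that name is reserved for the entirely different construction of Definition \ref{temp_gen}, which produces a Lam model from an electrical network; conflating the two would confuse a reader of this paper, where the Temperley trick is applied downstream of Theorem \ref{l2}, not inside its proof.
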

The second statement of Theorem \ref{l2} naturally leads to the formulation of the black box problem in the Postnikov theory.

 \begin{problem} \label{bl-box-post}
Consider a Lam model $N(\Gamma, \omega)$ on a given graph $\Gamma$ associated with a given point $X\in \mathrm{Gr}_{\geq 0}(k, m)$. It is required to recover a weight function $\omega$  of $N(\Gamma, \omega)$ using Plücker coordinates of $X.$ 
\end{problem}   
The possibility of solving Problem \ref{bl-box-post} is revealed by the following result.

%its strand permutations.
\begin{definition}
Consider a Lam model $N(\Gamma, \omega)$. Then, its  \textbf{oriented median graph} $\Gamma_M$ is defined as follows:
    \begin{itemize}
        \item Boundary nodes of $\Gamma_M$ coincide with boundary nodes of $N(\Gamma, \omega)$;
        \item Inner nodes  of  $\Gamma_M$   are the midpoints of the  edges  connecting 
   inner nodes of $\Gamma$.
    \end{itemize}
    Two nodes of  $\Gamma_M$ are connected by an edge if the edges of the original graph $\Gamma$ are adjacent. Edges of $\Gamma_M$ are oriented as follows:
     \begin{itemize}
        \item Clockwise around each inner white node of $\Gamma$;
        \item Counterclockwise around  each inner black node of $\Gamma$.
    \end{itemize}
    Since all inner nodes of $\Gamma_M$ have degree four, we can correctly define the \textbf{strands} of $\Gamma_M$ as  the oriented paths which always go straight through any degree four node. 

   % Strand collection  naturally gives rise to a \textbf{strand permutation} $\tau(N)$  on the boundary nodes  defined by the following rule: if a strand starts at a boundary node labeled by $i$ and  ends  at a boundary node labeled by $j$, we have that $\tau(N)(i)=j.$     

% The strands naturally define a permutation $\tau(\mathcal E)$ on the set of points $\{1, \dots, 2n\}.$
\end{definition} 

\begin{definition} \label{about-min-model}
     A Lam model $N(\Gamma, \omega)$ is called  \textbf{minimal}  if it satisfies the following condition:
    \begin{itemize}
        \item Each strand does not intersect itself;
        \item There are no loops apart from loops attached to isolated boundary nodes;
        \item Strands do not form oriented lenses, see Fig. \ref{fig:forbid}.
    \end{itemize}
   
\end{definition}

\begin{figure}[H]
     \centering
     \includegraphics[scale=0.15]{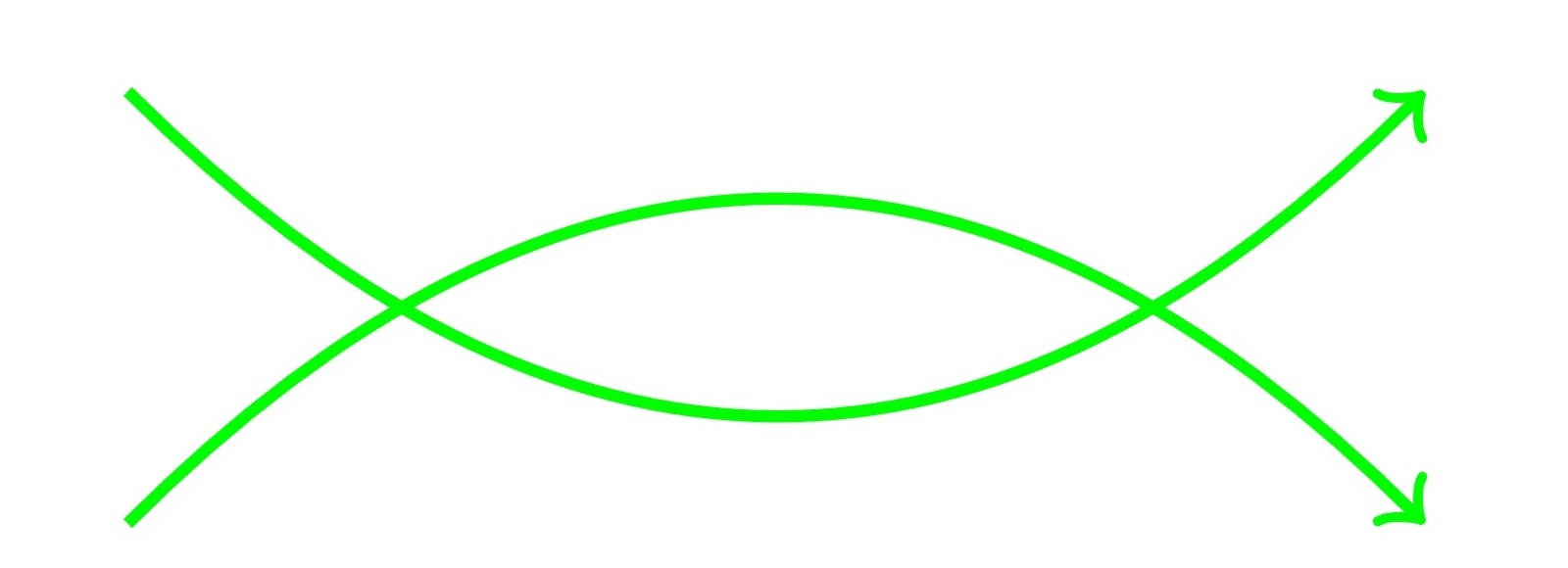}
     \caption{A forbidden oriented lens} \label{fig:forbid}
     \end{figure}
\begin{theorem} \textup{\cite{L3},\cite{Pos}} \label{thblbopost}
Consider a minimal Lam model $N(\Gamma, \omega)$ on a given graph $\Gamma$ associated with a given point $X\in \mathrm{Gr}_{\geq 0}(k, m)$. Then,  a weight function $\omega$ is uniquely  recovered   $($see Fig. \ref{fig:gau}$)$  by the  Plücker coordinates of $X,$ up to the gauge transformations.     
\end{theorem}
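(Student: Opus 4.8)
The plan is to produce an explicit inverse to the boundary-measurement parametrization $\omega \mapsto X$ of Theorem~\ref{l2}, restricted to minimal Lam models. Such an inverse simultaneously establishes that $\omega$ is recoverable from the Plücker coordinates of $X$ and pins down that the only ambiguity is the gauge group. The construction has three ingredients: the strand (chamber) combinatorics of the oriented median graph $\Gamma_M$ together with the Scott labelling of its chambers, the twist automorphism $\tau$ of the relevant positroid variety, and the generalized chamber ansatz, which expresses chamber ratios of twisted Plücker coordinates as Laurent monomials in the edge weights.

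First I would identify the gauge group as sitting inside the fibers of the parametrization. A gauge transformation rescales, at each inner node $v \in V_I$, all edges incident to $v$ by a common positive factor $t_v$. Because in every dimer $\Pi$ each inner node is covered exactly once, the weight $\mathrm{wt}(\Pi)$ is multiplied by the single global factor $\prod_{v \in V_I} t_v$, independent of $\Pi$; hence every boundary measurement $\Delta_I^d$ is scaled by that one constant and the projective point $X \in \mathrm{Gr}_{\geq 0}(k(\Gamma), m)$ does not change. So recovery can at best be up to gauge, and the content of the theorem is that this is exactly achieved.

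For the recovery itself I would pass to the oriented median graph $\Gamma_M$. Minimality of $N(\Gamma,\omega)$ means its strands are non-self-intersecting, loop-free (away from isolated boundary nodes), and pairwise crossing at most once; consequently the strands subdivide the disk into chambers. Applying the Scott rule, each chamber $F$ receives a $k(\Gamma)$-element label $I_F \subset \{1,\dots,m\}$, and minimality guarantees these labels are pairwise distinct and that the associated $\Delta_{I_F}$ are nonzero at $X$. The core computation, which is the chamber ansatz of Lam and Postnikov, is that after applying the twist $\tau$, for any two chambers $F, F'$ of $\Gamma_M$ adjacent across a single edge $e$ of $\Gamma$ the ratio $\Delta_{I_F}\bigl(\tau(X)\bigr)\big/\Delta_{I_{F'}}\bigl(\tau(X)\bigr)$ equals a fixed Laurent monomial in the edge weights; since $\Gamma$ (being minimal and bipartite) has each edge separating a unique adjacent pair of chambers, running over all edges of $\Gamma$ assigns to every $\omega(e)$ an explicit monomial ratio of Plücker coordinates of $\tau(X)$.

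Finally I would invert this system. The chamber ratios are precisely coordinates on the quotient of the edge-weight space by the gauge action (equivalently, the natural coordinates on the positroid torus), so after fixing a section of the gauge action — for instance normalizing weights along a chosen spanning tree of $\Gamma$ — one solves uniquely for all $\omega(e)$, and any two weight functions mapping to $X$ differ by a gauge transformation; this gives recovery and uniqueness up to gauge. The main obstacle is the verification of the chamber ansatz in its twisted form, i.e.\ showing that $\tau$ is exactly the modification that turns the dimer partition functions $\Delta_I^d$ into Laurent monomials in the chamber variables: this is where minimality is genuinely used (the strand permutation is reduced, the chamber labels are distinct, and the needed Plücker coordinates are non-vanishing on the cell), and where the argument reduces to, or re-derives, the dimer computation of \cite{L3} combined with the positroid-variety twist identities underlying \cite{Pos}.
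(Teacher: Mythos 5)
Your proposal is correct and follows the same route the paper itself takes for this theorem: the paper does not prove it but cites \cite{L3} and \cite{Pos}, and the recovery mechanism you describe --- Scott labelling of the chambers of the oriented median graph, the twist $\tau$, and the generalized chamber ansatz with the key identity $\Delta_{I(F)}\bigl(\tau(A)\bigr)\,\mathrm{wt}(\Pi_F)=1$ deferred to the literature --- is exactly the approach the paper records in Theorem \ref{thabouttwit} and Remark \ref{abouttwist}, following \cite{MSp}. The only small imprecision is that the recovery formula for an interior edge uses the \emph{product} $\Delta_{I(F_1)}\bigl(\tau(A)\bigr)\Delta_{I(F_2)}\bigl(\tau(A)\bigr)$ of the two adjacent twisted chamber coordinates rather than their ratio, but this does not affect the argument.
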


\begin{figure}[H]
     \centering
     \includegraphics[scale=0.3]{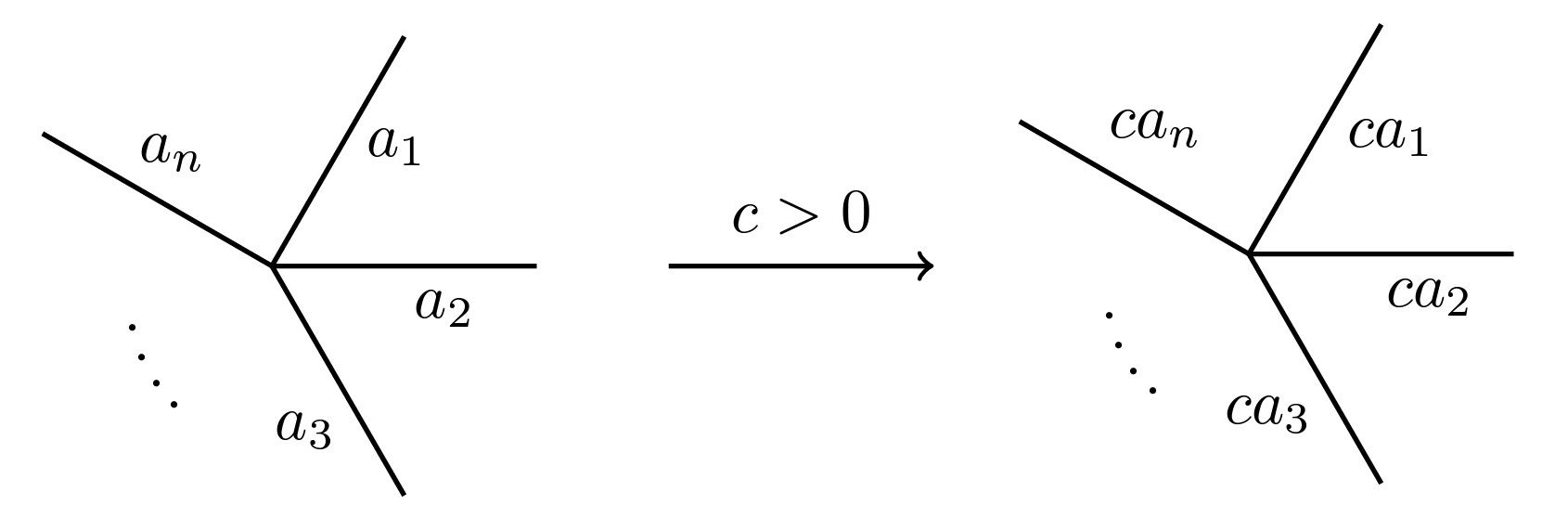}
     \caption{A gauge transformation in an inner node} \label{fig:gau}
     \end{figure}
     
     Analogous to electrical network theory,  the proofs of Theorem \ref{thblbopost}  provide  algorithms to  solve  Problem \ref{bl-box-post}. We will mention a few of them:
\begin{itemize}
    \item The recursive bridge-lollipop decomposition   \cite{L3};   
    \item Solutions to Problem \ref{bl-box-post} for  the special Le-diagram models \cite{Pos}, \cite{T};
    \item Solution to Problem \ref{bl-box-post} by the generalized chamber ansatz \cite{MSp}.
\end{itemize}
 We shall describe in detail the last approach, since it is very important for our future goal.
\begin{theorem} \textup{\cite{MSp},\cite{Sc}} \label{labelrule}
   Consider a minimal  Lam model $N(\Gamma, \omega)$. Let us label each of its faces $F$ by $I(F) \subset [n]$ according to the following the \textbf{Scott rule}: if a face $F$ lies on the left-hand side of the oriented strand from a source numbered $i$, we have that $i \in I(F)$. This labeling rule ensures that each face is labeled by $k(\Gamma)$ distinct indices.  
\end{theorem}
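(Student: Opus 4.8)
The plan is to show that the Scott labeling is well-defined — independent of any choices — and that every face receives exactly $k(\Gamma)$ indices. The first observation is that the labeling is stated purely in terms of the strands of $\Gamma_M$: for a face $F$ and a source $i$, either $F$ lies to the left of the strand emanating from $i$ or it does not, so $I(F)$ is unambiguously defined once the oriented median graph is fixed. Since $\Gamma_M$ is canonically attached to $N(\Gamma,\omega)$ (the orientation convention being clockwise around white, counterclockwise around black inner nodes), there is nothing to check for well-definedness beyond the counting claim. So the real content is: for every face $F$, exactly $k(\Gamma)$ of the $n$ strands have $F$ on their left.

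First I would set up the combinatorics of strands in a minimal Lam model. Because $N(\Gamma,\omega)$ is minimal, each strand of $\Gamma_M$ runs from one boundary point to another without self-intersections, two strands cross at most once, and there are no closed loops or lenses; this is exactly the data of a \emph{reduced plabic graph} in Postnikov's sense, and the strand permutation (decorated permutation) $\pi$ is well-defined, with $\pi$ sending the source label of each strand to its sink label. The key is to track how the label set $I(F)$ changes as one moves between two faces $F$ and $F'$ sharing an edge $e$ of $\Gamma_M$: crossing $e$ means crossing exactly one strand $\alpha$, so $I(F)$ and $I(F')$ differ precisely in the single index $i$ that is the source of $\alpha$ — we either gain $i$ (if we move from the right side of $\alpha$ to its left) or lose it. Hence the cardinality $|I(F)|$ changes by at most $1$ between adjacent faces, and more precisely the parity/value is controlled by how many strands we have crossed from right to left minus left to right along any path.

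Next I would pin down the value of $|I(F)|$ on one reference face and then propagate. A convenient choice is a face $F_0$ adjacent to the boundary arc between boundary vertices $n$ and $1$; a small local analysis near the boundary (using that boundary nodes have degree one in $\Gamma$, hence each contributes a single strand endpoint) shows $|I(F_0)| = k(\Gamma)$ directly, matching the count in the formula $k(\Gamma)=\tfrac12\bigl(n+\sum_{black}(\deg v-2)+\sum_{white}(2-\deg v)\bigr)$. Then, to show the count is \emph{constant} over all faces, I would argue that moving around any inner node of $\Gamma_M$ and returning to the start crosses each incident strand pattern in a way that nets to zero change, and that crossing any single strand and coming back likewise nets to zero — so $|I(F)|$ is a well-defined homological invariant, forced to equal $k(\Gamma)$ everywhere. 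Equivalently, one invokes Postnikov's / Lam's matroid: the labels $I(F)$ are precisely the bases of the positroid of $N(\Gamma,\omega)$ appearing as face labels, all of size $k(\Gamma)$ by Theorem \ref{l2}, since each face label indexes a nonvanishing Plücker coordinate $\Delta^d_{I(F)}$.

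The main obstacle I anticipate is the careful bookkeeping of orientations: because the strands of $\Gamma_M$ are \emph{oriented} (clockwise vs.\ counterclockwise around white vs.\ black nodes), one must verify that "left-hand side" is consistent as a strand passes through successive degree-four vertices, and that the local move around a vertex of $\Gamma$ of degree $d$ changes the running count by exactly $d-2$ (black) or $2-d$ (white) — this is what makes the global invariant equal to the stated $k(\Gamma)$ rather than merely constant. Handling isolated boundary nodes (loops) and the degenerate faces they create is a minor but necessary case. Once the local change-of-label rule and the degree-count bookkeeping are established, the constancy and the value $k(\Gamma)$ follow by connectedness of the dual graph of faces, and the theorem is proved.
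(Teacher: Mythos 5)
The paper states this result without proof, citing Muller--Speyer and Scott, so there is no in-paper argument to compare against; judged on its own terms, your proposal has a genuine gap at its central step. You assert that passing between two adjacent faces means crossing \emph{exactly one} strand, so that $I(F)$ and $I(F')$ differ by gaining or losing a single index and $|I(F)|$ changes by $\pm 1$. This is not what happens, and if it were, the theorem would be false: two adjacent faces would then have label sets of different cardinalities. The correct local picture is that each interior edge $e$ of $\Gamma$ (equivalently, the degree-four vertex of $\Gamma_M$ at the midpoint of $e$) carries \emph{two} strand segments, which cross transversally at that midpoint and, by the orientation convention (clockwise around white, counterclockwise around black), run in opposite senses relative to $e$; one of them has $F$ on its left and the other has $F'$ on its left. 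Hence $I(F') = \bigl(I(F)\setminus\{i\}\bigr)\cup\{j\}$ for the two relevant sources $i,j$, and $|I(F)|$ is constant on adjacent faces by an exchange move --- no homological or path-independence argument is needed, and the one you give does not establish constancy anyway: showing that a closed loop returns you to the same label set is vacuous, since $I(F)$ is defined intrinsically rather than by propagation. Your fallback --- that each $I(F)$ indexes a nonvanishing Pl\"ucker coordinate of the positroid and hence has size $k(\Gamma)$ --- is circular: the identification of face labels with minimal matchings and Pl\"ucker indices (Remark \ref{abouttwist}) presupposes that $|I(F)| = k(\Gamma)$.

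What survives of your plan is the second half: once constancy is established by the exchange move above, the common value is computed on a single reference face, e.g.\ a boundary face, where the count reduces to the degree bookkeeping $\tfrac12\bigl(n + \sum_{\text{black}}(\deg v - 2) + \sum_{\text{white}}(2-\deg v)\bigr) = k(\Gamma)$; your remarks about the $d-2$ and $2-d$ contributions and about minimality guaranteeing that ``left of the strand from $i$'' is unambiguous (no self-intersections, no lenses, hence distinct indices) are the right ingredients there. Repairing the proposal therefore amounts to replacing the one-strand-per-edge claim by the two-oppositely-oriented-strands-per-edge lemma and deleting the homological detour.
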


\begin{definition} \textup{\cite{MSp}}
   Consider a matrix $A \in \mathrm{Mat}_{k \times m}(\mathbb{R})$ and denote by $A_i, \ i \in \{1, \dots, m\}$ its columns. We define the \textbf{twist} $\tau(A)  \in \mathrm{Mat}_{k \times m}(\mathbb{R})$  as follows:
\begin{equation*}
    \begin{cases}
  (\tau (A)_i, A_i)=1,\, \text{if}\,\, i = j    \\
   (\tau (A)_i, A_j)=0,\, \text{if}\,\, A_j \notin  \text{span}(A_i, A_{i+1}, \dots, A_{j-1}),   \\
      \end{cases}
\end{equation*}
    
      where $\tau (A)_i,  i\in \{1, \dots, m\} $ are the columns of $\tau(A)$;  $(\cdot, \cdot)$ is the standard scalar product; and the  index operation $+$ is taken  modulo $m$.
\end{definition}
\begin{theorem} \textup{\cite{MSp}} \label{thabouttwit}
   Consider a minimal Lam model $N(\Gamma, \omega)$ on a given graph $\Gamma$ associated with a point $X \in \mathrm{Gr}_{\geq 0}(k, m)$ defined by a matrix $A.$ Then, a weight function  $\omega$ can be found  as follows:
   \begin{itemize}
       \item The first step is to label each face $F$ of $\Gamma$ by $I(F)$ according to the rule from Theorem \ref{labelrule};
       \item The second step is to find  the twist $\tau(A)$ of $A;$
       \item Finally we calculate the weight function $\omega$ according to the rule:
       \begin{itemize}
           \item If an edge $e$ connects two inner nodes, then 
           \begin{equation} \label{tw1}
               \omega(e)=\frac{1}{\Delta_{I(F_1)}\bigr(\tau(A)\bigl)\Delta_{I(F_2)}\bigr(\tau(A)\bigl)},
           \end{equation}
           where the edge $e$ is shared by faces $F_1$ and $F_2$; 
           \item If an edge $e$ connects an inner node with a boundary node, then
           \begin{equation} \label{tw2}
               \omega(e)=\frac{1}{\Delta_{I(F)}\bigr(\tau(A)\bigl)},
           \end{equation}
            
           where $F$ is 
           \begin{itemize}
               \item The closest counterclockwise which shares the edge $e,$ if $e$ connects a white boundary node;
               \item The closest clockwise which shares the edge $e,$ if $e$ connects a black boundary node.
           \end{itemize}
           
      \end{itemize}
   \end{itemize}
\end{theorem}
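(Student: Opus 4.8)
The plan is to obtain the two formulas from the combinatorial description of the boundary measurement map together with the key fact that the twisted matrix, evaluated on the face labels produced by the Scott rule, returns Laurent monomials in the edge weights of $\Gamma$. First I would fix a convenient representative: by Theorem \ref{l2} the point $X$ is presented, up to the left action of $\mathrm{GL}_k(\mathbb{R})$, by a matrix $A \in \mathrm{Mat}_{k \times n}(\mathbb{R})$ whose maximal minors are the boundary measurements $\Delta^d_I$ of $N(\Gamma, \omega)$, and by Theorem \ref{thblbopost} the weight function is determined by $X$ only up to gauge transformations, which is exactly the indeterminacy allowed in the statement. Hence it suffices to verify \eqref{tw1} and \eqref{tw2} for one admissible choice of $A$ and of $\omega$, and I would take $A$ to be the standard boundary measurement matrix whose entries are themselves generating functions over partial dimer configurations of $\Gamma$. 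I would also record two structural features of a minimal Lam model: every strand of $\Gamma_M$ meets $\partial D$ in exactly two points and any two strands cross at most once, so the faces of $\Gamma$ are the regions of a pseudoline arrangement; and, by the Scott rule (Theorem \ref{labelrule}), the labels $I(F)$ of two faces sharing an edge of $\Gamma_M$ differ in exactly one index. These are what make the local monomials telescope in the last step.

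The heart of the proof is the following claim, which I would isolate as a lemma: \emph{for every face $F$ of a minimal Lam model, $\Delta_{I(F)}\bigl(\tau(A)\bigr)$ is a single Laurent monomial in the weights $\omega(e)$, with exponent vector determined by the strand combinatorics of $\Gamma_M$ around $F$.} To prove it I would unfold the definition of $\tau(A)$: the orthogonality relations $(\tau(A)_i, A_j) = 0$ for $A_j \notin \mathrm{span}(A_i, A_{i+1}, \dots, A_{j-1})$, together with the normalisations $(\tau(A)_i, A_i) = 1$, force the $i$-th column of $\tau(A)$ to be read off from the extremal dimers of $\Gamma$ that follow the strand issuing from the boundary node $i$. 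Minimality is then used to see that, for the particular index set $I(F)$, there is a \emph{unique} dimer simultaneously compatible with all $k$ of these constraints --- the strands bounding $F$ do not re-cross, so no competing configuration survives --- whence the dimer sum computing $\Delta_{I(F)}\bigl(\tau(A)\bigr)$ collapses to a single term; reading which edges that dimer occupies then gives the asserted monomial, in the exponent pattern that the next step needs. Rigorously one may instead invoke the Marsh--Scott interpretation of twisted Plücker coordinates as dimer partition functions and specialise it to the chamber minors; I would present the direct computation and merely remark on this alternative.

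Granting the lemma, the remainder is bookkeeping. If an interior edge $e$ is shared by faces $F_1$ and $F_2$, then in the product $\Delta_{I(F_1)}\bigl(\tau(A)\bigr)\,\Delta_{I(F_2)}\bigl(\tau(A)\bigr)$ every weight $\omega(e')$ with $e' \neq e$ occurs with total exponent zero: such an $e'$ borders either none or both of $F_1, F_2$, and in the second case the two monomials from the lemma contribute opposite powers, any residual discrepancy being absorbed into the choice of gauge at the inner endpoints of $e'$. The single surviving factor is $\omega(e)^{-1}$, which is \eqref{tw1}; the boundary case \eqref{tw2} is the same computation with one face $F$, the counterclockwise/clockwise prescription selecting which dimer adjacent to the pendant edge contributes. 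Finally I would check self-consistency: $\tau(A)$ again represents a point of $\mathrm{Gr}_{\geq 0}(k, n)$, so by Theorem \ref{l2} its minors $\Delta_I\bigl(\tau(A)\bigr)$ satisfy the Plücker relations, and this is exactly what guarantees that the locally defined numbers patch into a genuine weight function; a short verification then shows that a gauge transformation at an inner node rescales the surrounding face labels in the compensating way, so the recovered $\omega$ is well-defined up to gauge, as required. The step I expect to be genuinely hard is the lemma --- establishing that the chamber minors of the twist are \emph{monomials} rather than honest polynomials --- since that is where minimality of $\Gamma$ is indispensable and where the combinatorics of strands, not just the formal linear algebra of $\tau$, does the real work.
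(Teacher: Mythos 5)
Your proposal follows essentially the same route as the paper's own sketch (Remark \ref{abouttwist}): the whole content is concentrated in the identity $\Delta_{I(F)}\bigl(\tau(A)\bigr)\,\mathrm{wt}(\Pi_F)=1$ for the distinguished minimal dimer $\Pi_F\in\Pi\bigl(I(F)\bigr)$ attached to each face by the Scott labels, after which \eqref{tw1} and \eqref{tw2} follow by the local telescoping of the monomials $\mathrm{wt}(\Pi_{F_1})\mathrm{wt}(\Pi_{F_2})$ with the gauge freedom absorbing the remaining factors. Like the paper, you ultimately defer the hard monomiality/collapse statement to Muller--Speyer (your proposed "unique compatible dimer" mechanism is not quite how their argument goes, since $\Delta_{I(F)}\bigl(\tau(A)\bigr)$ is not itself a dimer sum on $\Gamma$, but you correctly flag this as the genuinely difficult step and cite the right source for it).
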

\begin{remark} \label{abouttwist}
  For our future goals, we provide the general idea of the proof of  Theorem \ref{thabouttwit} following \cite{MSp}. In fact, each face $F$ of a minimal Lam model $N(\Gamma, \omega)$ uniquely defines a \textbf{minimal} (under the partial order on a set of all dimers with a fixed boundary)  dimer $\Pi_F$, whose boundary  can be calculated   according to the rule from Theorem \ref{labelrule} i.e. $\Pi_F\in \Pi\bigr(I(F)\bigl)$. Using weights $\mathrm{wt}(\Pi_F)$ of these dimers we can  recover the weight function $\omega$ up to the gauge transformations  according to the following rule (where $F_1, F_2, F$ are defined as in Theorem \ref{thabouttwit}):
  \begin{itemize}
           \item If an edge $e$ connects two inner nodes then 
           \begin{equation*}
               \omega(e)=\mathrm{wt}(\Pi_{F_1})\mathrm{wt}(\Pi_{F_2}),
           \end{equation*}
           \item If an edge $e$ connects an inner node to a boundary node, then
           \begin{equation*} 
               \omega(e)=\mathrm{wt}(\Pi_{F}).
           \end{equation*}
           \end{itemize}
           Finally, it remains to prove that the following identity holds:
           $$\Delta_{I(F)}\bigr(\tau(A)\bigl)\mathrm{wt}(\Pi_{F})=1.$$
\end{remark}
\subsection{The Lam embedding and its parametrization} \label{thelamembsec}
We now present the cornerstone result that will enable us to apply the Postnikov theory to solve Problem \ref{bl-box}.

\begin{theorem} \textup{\cite{BGKT}} \label{th: main_gr}
 Consider an  electrical network  $\mathcal{E}(G, w) \in E_n$ with a response matrix $M_R(\mathcal{E})=(x_{ij}).$ Then there is an injective map $\mathcal{L}: E_n\to \mathrm{Gr}_{\geq 0}(n-1,2n)$ such that $\mathcal{L}(\mathcal{E})$ is a row space of the $(n\times 2n)$ matrix
\begin{equation*}
\Omega(\mathcal{E})=\left(
\begin{array}{cccccccc}
x_{11} & 1 & -x_{12} & 0 & x_{13} & 0 & \cdots & (-1)^n\\
-x_{21} & 1 & x_{22} & 1 & -x_{23} & 0 & \cdots & 0 \\
x_{31} & 0 & -x_{32} & 1 & x_{33} & 1 & \cdots & 0 \\
\vdots & \vdots &  \vdots &   \vdots &  \vdots & \vdots & \ddots & \vdots 
%(-1)^nx_{(n-1)1} & 0 & (-1)^{n+1}x_{(n-1)2} & 0 & (-1)^{n+2}x_{(n-1)3} & 0 & \cdots & 0 
\end{array}
\right)    
\end{equation*}
In particular we have:
\begin{itemize}
    \item  The dimension of the row space of $\Omega(\mathcal{E})$ is equal to $n-1$;
    \item Each $n-1 \times n-1$ minors     of  $\Omega(\mathcal{E})$  is non-negative;
   \item   Plücker coordinates of the point of $\mathcal{L}(\mathcal{E})$ correspond  to $n-1 \times n-1$ minors  of the matrix $\Omega'(\mathcal{E})$ obtained from $\Omega(\mathcal{E})$  by deleting the last row.
\end{itemize}
\end{theorem}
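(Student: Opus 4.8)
\emph{Proof proposal.} The plan is to \emph{define} $\mathcal{L}(\mathcal{E})$ to be the row span of $\Omega(\mathcal{E})$ and then to check, one item at a time, that this is well defined on $E_n$, that it produces a point of $\mathrm{Gr}_{\geq 0}(n-1,2n)$ whose Plücker coordinates are the maximal minors of $\Omega'(\mathcal{E})$, and that the resulting map is injective. Well-definedness is free: the entries of $\Omega(\mathcal{E})$ are built only from the response matrix $M_R(\mathcal{E})$, which by Theorem \ref{gen_el_1} is an invariant of the electrical-equivalence class. For the rank I would first note that the alternating vector $u=\bigl((-1)^{1},(-1)^{2},\dots,(-1)^{n}\bigr)$ annihilates $\Omega(\mathcal{E})$ from the left: on the odd column $2j-1$, whose entry in row $i$ is $(-1)^{i+j}x_{ij}$, one computes $u^{\mathsf{T}}$ applied to it to be $(-1)^{j}\sum_i x_{ij}=0$ since every column of $M_R(\mathcal{E})$ sums to zero by Theorem \ref{aboutresp}; on an even column, which carries the cyclic bidiagonal pattern with entries $1$ in two consecutive rows together with the affine sign twist $(-1)^{n}$ in column $2n$, the two contributions cancel. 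Hence $\operatorname{rk}\Omega(\mathcal{E})\le n-1$, and since the submatrix of $\Omega(\mathcal{E})$ in rows $1,\dots,n-1$ and even columns $2,4,\dots,2(n-1)$ is lower bidiagonal with unit diagonal (determinant $1$), in fact $\operatorname{rk}\Omega(\mathcal{E})=n-1$. The single left-dependency $u^{\mathsf{T}}\Omega(\mathcal{E})=0$ has a nonzero coefficient $(-1)^{n}$ on the last row, so erasing that row does not change the row span; therefore $\Omega'(\mathcal{E})$ is an $(n-1)\times 2n$ matrix of full rank with the same row space as $\Omega(\mathcal{E})$, and its maximal minors are (projectively) the Plücker coordinates of $\mathcal{L}(\mathcal{E})$.

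The substantial point is the total non-negativity of those maximal minors, and the strategy is to recognise $\mathcal{L}(\mathcal{E})$ as the point of the Grassmannian produced by a Lam model. Taking a minimal representative of $\mathcal{E}$, I would construct the bipartite graph $\Gamma$ given by the generalised Temperley trick: insert a bivalent vertex carrying the conductivity $w(e)$ into the middle of each edge $e$ of $G$, colour the original vertices and the new midpoint vertices in the two colours, and attach to each of the $n$ boundary nodes the pair of boundary legs demanded by the doubling procedure — it is this doubled boundary that gives rise to the even columns of $\Omega(\mathcal{E})$. A degree count with the formula for $k(\Gamma)$ recalled in Section \ref{postteorsec} gives $k(\Gamma)=n-1$, so by Theorem \ref{l2} the boundary measurements of $\Gamma$ already define a point of $\mathrm{Gr}_{\geq 0}(n-1,2n)$ satisfying the Plücker relations. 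The key lemma to establish is a weight-preserving bijection between the dimers of $\Gamma$ with a prescribed boundary $I\subset[2n]$, $|I|=n-1$, and the spanning groves of $\mathcal{E}$ whose boundary connectivity is the non-crossing partition read off from $I$. Coupling this bijection with the classical matrix-forest expansion of the circular minors of $M_R(\mathcal{E})$ as signed sums of grove weights (the combinatorial content behind Theorem \ref{aboutresp}), and with the Laplace expansion of a maximal minor of $\Omega'(\mathcal{E})$ along its sparse even columns, identifies $\Delta_I\bigl(\Omega'(\mathcal{E})\bigr)$ with the boundary measurement $\Delta^{d}_{I}(\Gamma)$ up to one fixed positive factor independent of $I$. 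In particular the maximal minors of $\Omega'(\mathcal{E})$ are all of one sign, so $\mathcal{L}(\mathcal{E})\in\mathrm{Gr}_{\geq 0}(n-1,2n)$, as claimed.

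For injectivity I would recover $M_R(\mathcal{E})$ from the point. Working in the frame given by the even columns $2,4,\dots,2(n-1)$, which span an invertible bidiagonal block, the coordinates of the odd columns in that frame are ratios of maximal minors of $\Omega'(\mathcal{E})$, hence ratios of Plücker coordinates of $\mathcal{L}(\mathcal{E})$; a short computation with the inverse of the bidiagonal block then expresses each partial column sum $\sum_{q\le p}x_{qj}$, and therefore each individual entry $x_{ij}$, as an explicit rational function of the Plücker coordinates. Thus $\mathcal{L}(\mathcal{E}_1)=\mathcal{L}(\mathcal{E}_2)$ forces $M_R(\mathcal{E}_1)=M_R(\mathcal{E}_2)$, and then Theorem \ref{gen_el_1} gives $\mathcal{E}_1=\mathcal{E}_2$ in $E_n$. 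I expect the genuine obstacle to be the bijection lemma together with its sign bookkeeping: one must verify that the Temperley-type correspondence sends a dimer with boundary $I$ to a grove with exactly the prescribed non-crossing connectivity — matching, on the combinatorial side, the face labelling of Theorem \ref{labelrule} — and that the signs produced by Laplace expansion along the even columns, including the affine $(-1)^{n}$ twist in the last column, are precisely the ones that collapse the alternating sum of circular minors of $M_R(\mathcal{E})$ into a single manifestly non-negative sum of grove weights. Everything else is linear algebra or a direct appeal to Theorem \ref{l2}.
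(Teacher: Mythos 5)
The paper does not actually prove Theorem \ref{th: main_gr} — it is imported from \cite{BGKT} — and your outline follows exactly the route of that source and of the supporting results the paper later recalls: the left-kernel computation with the alternating vector plus the unipotent bidiagonal block of even columns to pin the rank at $n-1$, the generalized Temperley trick producing a Lam model with $k(\Gamma)=n-1$ whose boundary measurements give the same point (Theorem \ref{connection}), the weight-preserving dimer--grove bijection (Theorem \ref{thgroves}) identifying the maximal minors of $\Omega'(\mathcal{E})$ with grove partition functions (Theorem \ref{thgroves1}), and injectivity via reading off the partial column sums $\sum_{q\le p}x_{qj}$ from the echelon frame on the even columns and then invoking Theorem \ref{gen_el_1}. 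The one place where all the real work sits — the weight-preserving bijection and its sign bookkeeping in the Laplace expansion (and note your description of the Temperley graph omits the dual-graph face vertices $b_F$ that the actual construction in Definition \ref{temp_gen} requires) — is correctly identified but left unproved, which is precisely the content the paper itself outsources to \cite{L} and \cite{BGKT}; so your proposal is a faithful sketch of the standard argument rather than a self-contained proof.
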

\begin{remark}
    The non-negativity of the Plücker coordinates of  points $\mathcal{L}(\mathcal{E})$ is essentially equivalent to the last   property of response matrices described in  Theorem   \ref{aboutresp} $($see \cite{BGK} for more details$)$.
\end{remark}
Due to Theorem \ref{l2}   a point $\mathcal{L}(\mathcal{E})$ associates with a Lam model, which can be canonically constructed with a network $\mathcal{E}$ by the \textbf{generalized Temperley trick}.  A model obtained by this trick is essentially constructed by combining a graph $G$ and its dual $G^{*}$. The nodes of both $G$ and $G^{*}$ are colored black. An additional set of nodes introduced at the intersections of edges from $G$ and $G^{*}$ are colored white. Edges originating from $G$ inherit a weight equal to the conductivity of the corresponding edge in $G$, while all new  edges are assigned a weight of $1$. More formally:
\begin{definition} \textup{\cite{L}}  \label{temp_gen}
Given an electrical network $\mathcal{E}(G, w)\in E_n$, let us construct its corresponding Lam model $N(\mathcal{E^T}, \boldsymbol \omega)$ defined on a graph $\mathcal{E^T}$ with weight function $ \boldsymbol \omega$. The nodes of $\mathcal{E^T}$ are defined as follows:
\begin{itemize}
    \item If $G$ has boundary nodes $\{1,  2, \dots,  n\}$, then $\mathcal{E^T}$ will have white boundary nodes $\{1, 2, \dots , 2n\}$, where boundary node $ i$ is identified with $2i-1$ and node $2i$ can be identified with the additional node  lied between $i$ and $i+1$;
   % \item Each boundary node has  degree equal to  $1$;
    \item We have a black inner node $b_v$ for each inner node $v$ of $G$; a black inner node $b_F$ for each interior face $F$ of $G$; a white inner node $w_e$ placed at the midpoint of each interior edge $e$ of $G;$ 
    \item  For each boundary node $i$ of $G$, we have a black inner node  $b_i$.
\end{itemize}  
The edges of $N(\mathcal{E^T}, \boldsymbol \omega)$ are defined as follows: 
\begin{itemize}
\item  If $v$ is a node of an edge $e$ in $G$, then $b_v$ and $w_e$ are joined, and the weight $\boldsymbol \omega(e)$ of this edge is equal to the weight $w(e)$ of $e$ in $G$; 
\item  If $e$ borders $F$, then $w_e$ is joined to $b_F$ by an edge with weight $1$; 
\item  The node $b_i$ is joined by an edge with weight $1$ to the boundary node $2i - 1$ in $N(\mathcal{E^T}, \boldsymbol \omega)$, and $b_i$
is also joined by an edge with weight $1$ to $w_e$ for any edge $e$ incident
to $ i$ in $G$; 
\item  Even boundary nodes $2i$ in $N(\mathcal{E^T}, \boldsymbol \omega)$ are joined by an edge with weight $1$ to the face vertex $w_F$ of the face $F$ that they lie in.
\end{itemize}

\end{definition}

\begin{figure}[H]
     \hspace*{-14mm}
     \includegraphics[scale=0.11]{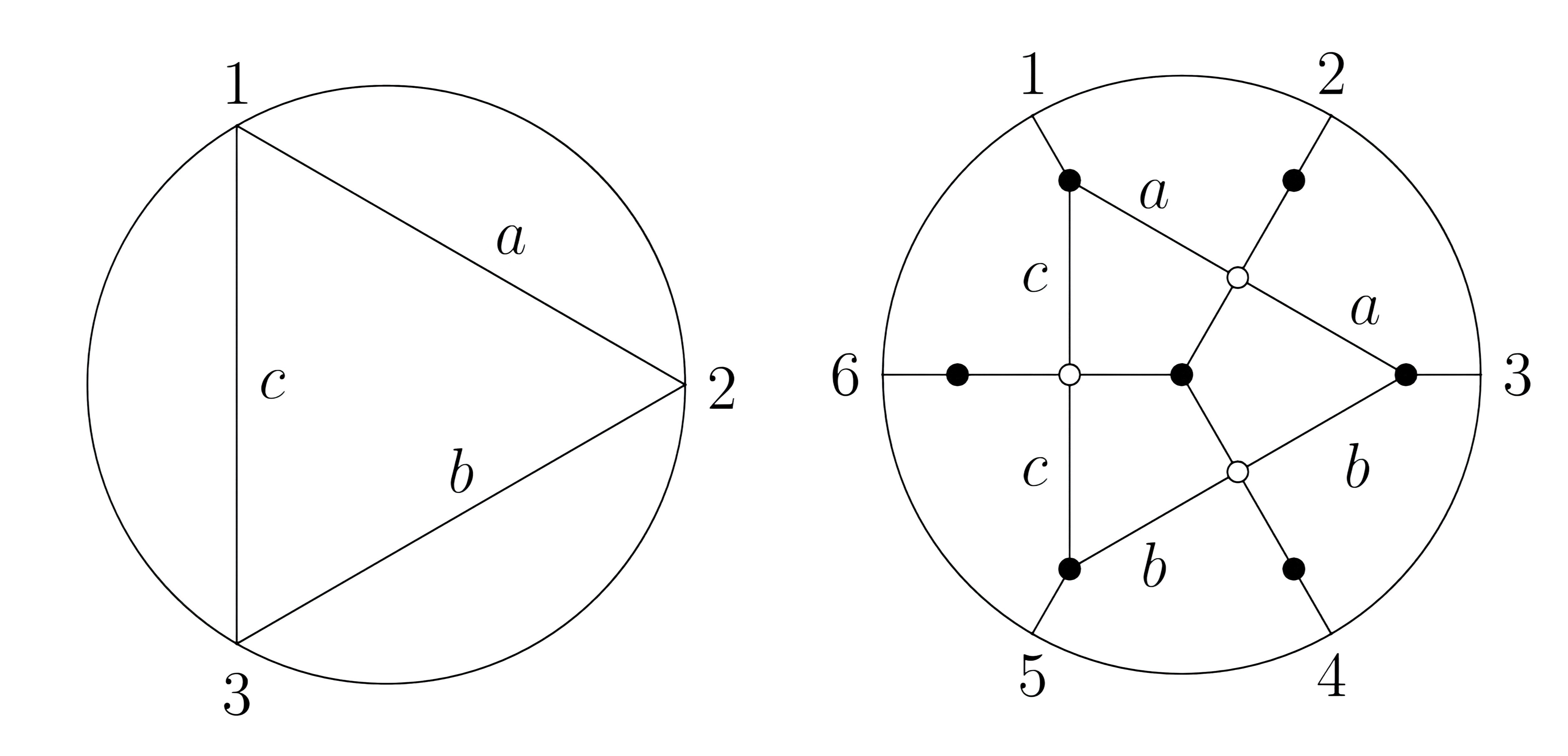}
     \caption{The triangle-shape network and its Lam model $N(\mathcal{E^T}, \boldsymbol \omega)$} 
     \end{figure}
 
\begin{theorem} \label{connection}
\hskip 1pt
\begin{itemize}
    \item A Lam model $N(\mathcal{E^T}, \boldsymbol \omega)$ and $\Omega(\mathcal{E})$ define the same point of $ \mathrm{Gr}_{\geq 0}(n-1,2n),$ it was proved in  \cite{BGKT};
    \item Let us suppose that  a network $\mathcal{E} $ is  minimal and connected, then a Lam model $N(\mathcal{E^T}, \boldsymbol \omega)$ is also minimal.
\end{itemize}
\end{theorem}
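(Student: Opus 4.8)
The first bullet is quoted verbatim from \cite{BGKT}, so no proof is needed for it; all the content lies in the second bullet, and the plan is to compare the (undirected) medial graph of the planar graph $G$ with the oriented medial graph $\Gamma_M$ of the Lam model $N(\mathcal{E^T}, \boldsymbol\omega)$. First I would recall the classical fact that $G$ and its planar dual $G^{*}$ have the same medial graph: a $4$-valent plane graph $M$ in the disk with $2n$ marked points on $\partial D$ and, when $\mathcal E$ is minimal, exactly $n$ strands, each an arc joining two marked points and passing through $4$-valent medial vertices (each of which sits at a transverse crossing of two strands). Minimality of $\mathcal{E}$ says precisely that these $n$ strands have no self-intersections, no closed components, and pairwise cross at most once.

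The heart of the matter is a \emph{doubling lemma}: $\Gamma_M$ is obtained from $M$ by replacing each strand $\alpha$ of $M$ by an ordered pair $\vec\alpha',\vec\alpha''$ of oriented strands running parallel to $\alpha$ along its two sides with opposite orientations, the assignment of side versus orientation being forced by the primal/dual (equivalently white/black) coloring in Definition \ref{temp_gen} together with the clockwise-around-white, counterclockwise-around-black convention defining $\Gamma_M$. I would establish this by a local inspection of each feature of $\mathcal{E^T}$: around a white node $w_e$ (the midpoint of an edge $e$ of $G$) the four medial nodes of $\Gamma_M$ lying on the inner--inner edges at $w_e$, after the straight-line rule, assemble into the doubled local picture of the strand crossing of $M$ sitting at the midpoint of $e$; around the black nodes $b_v$ and $b_F$ the $\Gamma_M$-strands merely turn, creating no new crossing or loop beyond the doubling; and around the boundary gadgets $b_i$, $\{2i-1\}$, $\{2i\}$ one checks that the ends of $\vec\alpha',\vec\alpha''$ land on distinct marked points of $\partial D$ (sharing only the two boundary endpoints of $\alpha$), so that $\Gamma_M$-strands are genuine arcs rather than loops. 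This last point is where connectedness of $\mathcal{E}$ enters: it forces every component of $G$ and of $G^{*}$ to meet the boundary, ruling out closed strands of $\Gamma_M$ coming from a boundary-free component (the only loops allowed in a minimal Lam model being those at isolated boundary nodes).

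Granting the doubling lemma, minimality of $N(\mathcal{E^T}, \boldsymbol\omega)$ follows by pulling forbidden configurations back to $M$. A self-crossing of some $\vec\alpha'$, pushed onto $\alpha$, gives a self-crossing of $\alpha$ in $M$, and a closed oriented strand gives a closed strand of $M$; both are excluded by minimality of $\mathcal{E}$. For an oriented lens bounded by arcs of $\vec\alpha^{\,i}$ and $\vec\beta^{\,j}$: if $\alpha\neq\beta$, its two vertices project to two distinct crossings of $\alpha$ and $\beta$ in $M$ — distinct because, as part of the doubling lemma, a single crossing of $M$ doubles to a local quadrilateral of four oriented strands containing no oriented bigon, so the two vertices of the lens cannot both sit over one crossing of $M$ — hence $M$ has a lens, a contradiction; and if $\alpha=\beta$, such a lens would be bounded by the two parallel copies $\vec\alpha',\vec\alpha''$ of one strand, which is impossible since those copies carry opposite orientations and meet only at the two boundary endpoints of $\alpha$, never in the interior. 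Hence no forbidden configuration survives, and $N(\mathcal{E^T}, \boldsymbol\omega)$ is minimal.

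I expect the main obstacle to be the doubling lemma itself, and within it the orientation bookkeeping at the local models: one must verify carefully that the clockwise/counterclockwise convention really does orient the two copies of each strand oppositely and place them on opposite sides, and that the doubled picture at a crossing of $M$, and at a bend of a strand near a vertex or a face of $G$, contains no oriented bigon and no two arcs crossing twice — these are exactly the local patterns that could otherwise smuggle an oriented lens into $\Gamma_M$ while $M$ stays lens-free. A secondary, more routine obstacle is handling higher-degree vertices and faces of $G$ and the boundary edges $b_i$--$(2i-1)$ and $2i$--$w_F$, where the medial combinatorics must be matched against the recipe of Definition \ref{temp_gen}. An alternative, possibly shorter route would replace the geometric doubling by Postnikov's criterion that a plabic graph is minimal iff it attains the minimal number of faces for its decorated permutation, combined with a direct computation of the strand permutation of $N(\mathcal{E^T}, \boldsymbol\omega)$ from the medial pairing of $\mathcal{E}$ together with a face count of $\mathcal{E^T}$; but this ultimately still rests on understanding how the strands of $\mathcal{E}$ lift to $\mathcal{E^T}$.
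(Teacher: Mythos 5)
Your proposal is correct and follows essentially the same route as the paper: your ``doubling lemma'' is exactly the paper's key observation that the strands of the median graph of $N(\mathcal{E^T}, \boldsymbol\omega)$ are obtained by parallel transitions of the strands of the median graph of $\mathcal{E}$ (Fig.~\ref{strand-fig}), after which both arguments rule out self-intersections, closed strands, and oriented lenses by a case analysis pulling forbidden configurations back to the network's medial strands. Your treatment is if anything slightly more explicit than the paper's on the $\alpha=\beta$ lens case and on where connectedness is used.
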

\begin{figure}[h!]
    \centering
    \includegraphics[width=0.8\textwidth]{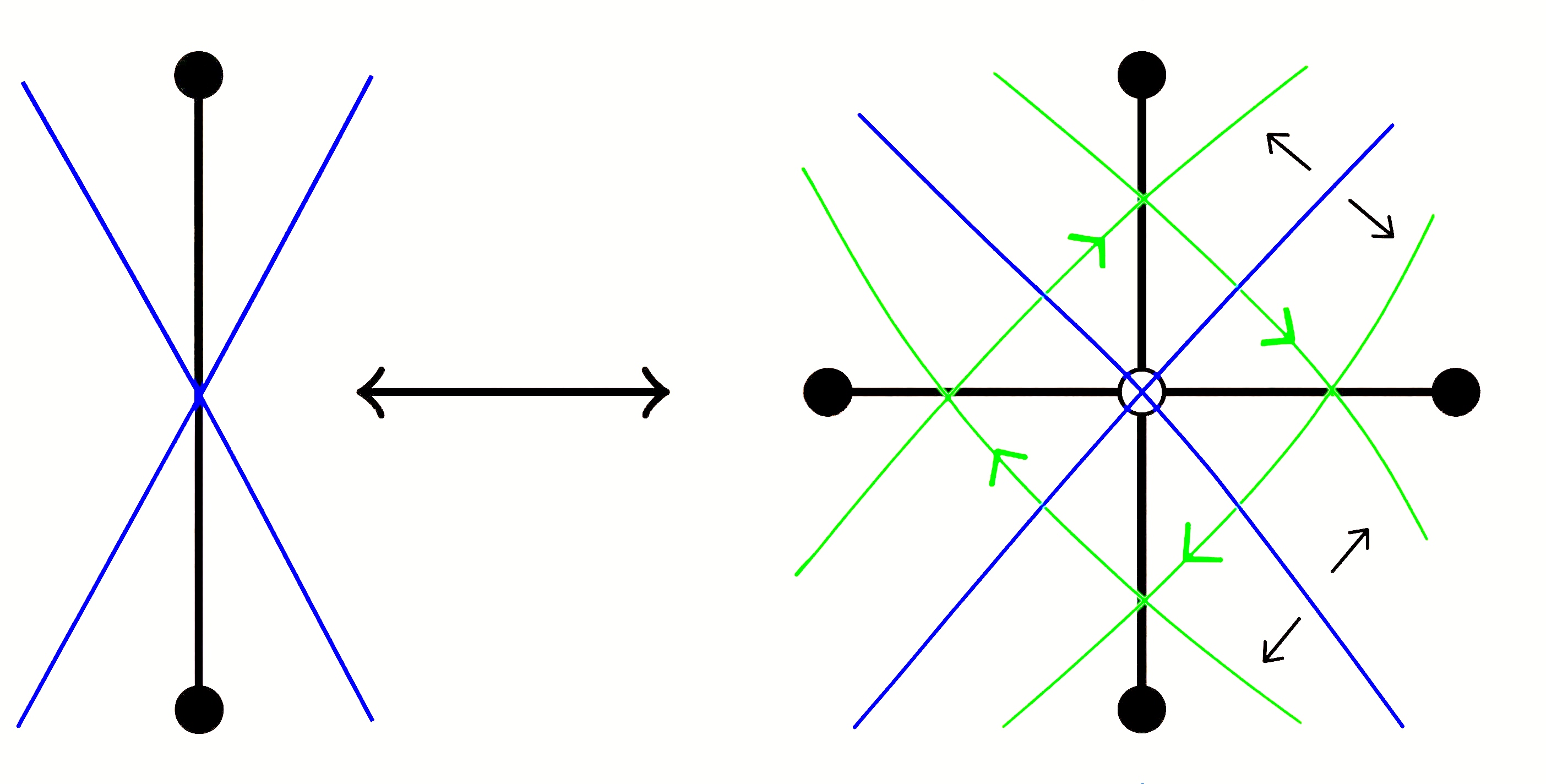}
    \caption{Strands of $G_M$ and $\Gamma_M$ around a midpoint of an electrical network edge}
    \label{strand-fig}
\end{figure}
\begin{proof}
    To prove the second statement, it suffices to verify that $N(\mathcal{E^T}, \boldsymbol \omega)$ satisfies  conditions in Definition  \ref{about-min-model}. Indeed, the strands of a  median graph $G_M$ of a Lam model $N(\mathcal{E^T}, \boldsymbol \omega)$ are obtained by the parallel transitions (see Fig. \ref{strand-fig}) of strands of a median graph $\Gamma_M$ of a network $\mathcal{E}$. Since  $\mathcal{E}$ is minimal, any two    strands of $\Gamma_M$ intersect at most once and each strand does not intersect itself.  Therefore, using the above argument  we conclude that:
    \begin{itemize}
        \item Any two  strands of $G_M$ can intersect at most once in a midpoint of an inner edge of $N(\mathcal{E^T}, \boldsymbol \omega)$;
        \item No strand of $G_M$  intersect itself.
    \end{itemize}
      Thus, any two strands might  intersect:
     \begin{itemize}
     \item Exactly once in a  midpoint of an inner edge;
        \item Exactly once in a  midpoint of an inner edge and exactly once  in a boundary node. Since all boundary nodes are white and a   midpoint of an inner edge is connected with a white inner node, it is easy to check that this case does not lead to the forbidden lens, see Fig. \ref{fig:forbid}; 
        \item Exactly once in a  midpoint of an inner edge and two times   in a boundary nodes. According to  the reasoning    above, these  intersections are not possible, since they contradict to the orientation rule of $G_M;$ 
        \item  Two times   in a boundary nodes.  This case also does not lead to the forbidden lens.
    \end{itemize}
    Therefore, any  two strands do not form a forbidden lens, it ends the proof. 
\end{proof}

%\begin{theorem} \textup{\cite{}} \label{connectionmin}
 %  Let us suppose that  $\mathcal{E} \in E_n$ is a minimal electrical network then a Lam model $N(\mathcal{E^T}, \boldsymbol \omega)$ is minimal. Theorem \ref{connectionmin}
%\end{theorem}
\section{The solution to the black  problem} \label{secan}
\subsection{Conductivity recovering via the generalized chamber ansatz}
The generalized Temperley trick and Theorems \ref{thblbopost} and \ref{connection} allow us to reduce Problem \ref{bl-box} to a particular case of Problem \ref{bl-box-post}:
\begin{proposition}
    Consider a minimal electrical network $\mathcal{E}(G, w)$. Then, the recovery of a  conductivity function $w$ by a response matrix $M_R(\mathcal{E})$  is equivalent to the recovery of a  weight function $\omega$ of a Lam model $N(\mathcal{E^T}, \boldsymbol \omega)$ by a  matrix $\Omega(\mathcal{E})$.
\end{proposition}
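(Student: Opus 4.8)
The plan is to chase the three correspondences supplied by Theorems \ref{th: main_gr}, \ref{connection} and Definition \ref{temp_gen}, and to observe that each of them is a \emph{reversible} passage of data, so that the two recovery problems carry literally the same information.

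First I would record that the response matrix and the matrix $\Omega(\mathcal{E})$ determine each other: by Theorem \ref{th: main_gr} the entries of $\Omega(\mathcal{E})$ are, up to signs and the fixed pattern of $0$'s and $1$'s in the even columns, exactly the entries $x_{ij}$ of $M_R(\mathcal{E})$, so reading off the odd columns (with the obvious sign correction) returns $M_R(\mathcal{E})$. Hence ``recover $w$ from $M_R(\mathcal{E})$'' and ``recover $w$ from $\Omega(\mathcal{E})$'' are the same task. Next, by Theorem \ref{connection} the Lam model $N(\mathcal{E^T},\boldsymbol\omega)$ produced from $\mathcal{E}$ by the generalized Temperley trick and the matrix $\Omega(\mathcal{E})$ represent \emph{the same} point $\mathcal{L}(\mathcal{E})\in\mathrm{Gr}_{\geq 0}(n-1,2n)$; and, since $\mathcal{E}$ is minimal (and, as we may assume, connected, splitting off isolated boundary nodes if necessary), the same theorem guarantees that $N(\mathcal{E^T},\boldsymbol\omega)$ is a \emph{minimal} Lam model, so Theorems \ref{thblbopost} and \ref{thabouttwit} apply to it.

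It remains to connect the weight data. By Definition \ref{temp_gen}, $\boldsymbol\omega$ is \emph{rigidly determined} by $w$: it equals $w(e)$ on the edges of $\mathcal{E^T}$ issuing from each interior edge $e$ of $G$, and equals $1$ on every other edge of $\mathcal{E^T}$. In particular $w$ is precisely the restriction of $\boldsymbol\omega$ to this distinguished family of ``conductivity edges'', so knowing $w$ is equivalent to knowing $\boldsymbol\omega$ on the nose. Thus the equivalence splits into: (a) if $w$ is known, $\boldsymbol\omega$ is obtained by filling in the value $1$ on the auxiliary edges, which is trivial; and (b) if one can solve Problem \ref{bl-box-post} for $N(\mathcal{E^T},\boldsymbol\omega)$ from $\Omega(\mathcal{E})$, then restricting the recovered weight function to the conductivity edges yields $w$.

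The main obstacle is exactly the caveat in (b): Theorem \ref{thblbopost} recovers a weight function only up to the gauge transformations of Figure \ref{fig:gau}, whereas $N(\mathcal{E^T},\boldsymbol\omega)$ is a \emph{specific} representative, pinned down by the normalization ``all auxiliary edges have weight $1$''. The step that needs care is to show that this normalization singles out a unique member of the gauge orbit, equivalently, that each conductivity $w(e)$ can be written as a gauge-invariant monomial ratio of the edge weights of $N(\mathcal{E^T},\boldsymbol\omega)$. This is precisely what the generalized chamber ansatz provides: the formulas \eqref{tw1}--\eqref{tw2}, together with their dimer reformulation in Remark \ref{abouttwist}, express each $\boldsymbol\omega(e)$ --- hence each conductivity --- directly and unambiguously in terms of the point $\mathcal{L}(\mathcal{E})$, with the weight-$1$ normalization already built in. Substituting these identities into the three correspondences above closes the argument; I expect the only genuinely technical point to be the bookkeeping of which faces of $\mathcal{E^T}$ border the conductivity edges and the verification that the resulting expressions depend on $\mathcal{E}$ only through $w$.
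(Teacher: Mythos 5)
Your proposal is correct and follows essentially the same route as the paper, which states this proposition as an immediate consequence of the entrywise correspondence $M_R(\mathcal{E})\leftrightarrow\Omega(\mathcal{E})$, the rigid determination of $\boldsymbol\omega$ by $w$ in Definition \ref{temp_gen}, and Theorem \ref{connection} (same point of $\mathrm{Gr}_{\geq 0}(n-1,2n)$, minimality of $N(\mathcal{E^T},\boldsymbol\omega)$). One small correction to your last paragraph: the twist formulas \eqref{tw1}--\eqref{tw2} do \emph{not} have the weight-$1$ normalization built in --- they produce a gauge representative $\boldsymbol\omega'$ that generally differs from $\boldsymbol\omega$, and the paper resolves this not by normalizing the twist but by passing to the gauge-invariant face weights of Lemma \ref{lem:post-cord} and expressing each conductivity as a monomial ratio of Pl\"ucker coordinates in Lemma \ref{mainlemma}; this does not affect the equivalence asserted in the proposition itself.
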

As a consequence of Theorems \ref{thblbopost} and \ref{thabouttwit}, we obtain that, for any minimal network, we can reconstruct a weight function $\boldsymbol\omega'$ for a Lam model $N(\mathcal{E^T}, \boldsymbol \omega)$. This  function is equal to the original weight function $\boldsymbol\omega$ from Definition \ref{temp_gen} up to gauge transformations. However, explicitly determining the sequence of gauge transformations necessary to transform $\boldsymbol\omega'$ into $\boldsymbol\omega$ is often difficult. Fortunately, there are  specific  coordinates on $N(\mathcal{E^T}, \boldsymbol \omega)$ that remain invariant under such gauge transformations:

%uniquely up to gauge transformation
%\begin{algorithm}
 %%   \begin{itemize}
   %     \item Find out  coordinates $\omega(\Pi)$ on $N(\Gamma, \omega)$, which don't depend on  gauge transformations;
    %    \item Recover coordinates $\omega(\Pi)$ using 
     %   \begin{itemize}
       %     \item $M_n(e) \iff \Omega_n(e); $
      %      \item $R_n(e) \iff \Omega_{n, R}(e); $
       % \end{itemize}
        
        %\item Recover conductivities of $e$ using coordinates $\omega(\Pi)$.
   % \end{itemize}
%\end{algorithm}

\begin{lemma} \textup{\cite{Pos}} \label{lem:post-cord}
      We define the weight of each face $F$ of  $N(\mathcal{E^T}, \boldsymbol \omega)$ as follows:
    $$O(F)=\frac{\prod \limits_{e \in F,\ e=wb } \boldsymbol\omega(e)}{\prod \limits_{e \in F,\ e=bw} \boldsymbol\omega(e)},$$
    where $F$ is considered to be clockwise edge oriented;  the numerator is the product taken over all from white to black  edges of $F$ and the denominator is the product taken over all from black to white  edges of $F$.

   The face weight $O(F)$ does not depend on the gauge transformations.
\end{lemma}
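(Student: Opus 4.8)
The plan is to show that the face weight $O(F)$ is unchanged by the elementary gauge transformation at a single inner node $v$, since every gauge transformation is a composition of such moves. Recall that a gauge transformation at $v$ rescales all edge weights $\boldsymbol\omega(e)$ for $e$ incident to $v$ by a common factor $t>0$ if $v$ is (say) white, and by $t^{-1}$ if $v$ is black (the precise convention is the one fixed in Fig.~\ref{fig:gau}); this leaves the point of $\mathrm{Gr}_{\geq 0}(n-1,2n)$ and hence the boundary measurements unchanged. So it suffices to track how the quantity $O(F)=\prod_{e\in F,\,e=wb}\boldsymbol\omega(e)\big/\prod_{e\in F,\,e=bw}\boldsymbol\omega(e)$ transforms under this rescaling, for an arbitrary face $F$.

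First I would fix a face $F$, taken with the clockwise edge orientation used in the statement of Lemma~\ref{lem:post-cord}, and consider an inner node $v$ of $\Gamma=\mathcal{E^T}$. If $v$ does not lie on the boundary of $F$, then no edge appearing in the numerator or denominator of $O(F)$ is incident to $v$, so $O(F)$ is untouched. If $v$ lies on the boundary of $F$, then exactly two edges of $F$ are incident to $v$ — call them $e_{\mathrm{in}}$ and $e_{\mathrm{out}}$, the edges along which the clockwise traversal of $\partial F$ enters and leaves $v$. The key combinatorial observation is that, because $\Gamma$ is properly $2$-colored (every edge joins a black and a white vertex), the two edges $e_{\mathrm{in}}$ and $e_{\mathrm{out}}$ at $v$ point in \emph{opposite} senses relative to the black/white orientation: one of them is a ``white-to-black'' edge of $F$ and the other is a ``black-to-white'' edge of $F$. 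Hence one of $e_{\mathrm{in}},e_{\mathrm{out}}$ contributes its weight to the numerator of $O(F)$ and the other contributes to the denominator.

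Next I would carry out the bookkeeping: under the gauge transformation at $v$ both $\boldsymbol\omega(e_{\mathrm{in}})$ and $\boldsymbol\omega(e_{\mathrm{out}})$ are multiplied by the same scalar $t^{\pm 1}$ (the sign depending only on the color of $v$, not on which of the two edges we look at). Since one of these two weights sits in the numerator and the other in the denominator of $O(F)$, the two factors of $t^{\pm1}$ cancel exactly, and $O(F)$ is invariant. Running this over all inner nodes $v$ and composing shows $O(F)$ is invariant under every gauge transformation, which is the claim.

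The one genuinely delicate point — and the step I would write out most carefully — is the colour/orientation parity claim in the previous paragraph: that the two edges of a face $F$ meeting at a common vertex $v$ necessarily have opposite white--black orientation with respect to the clockwise boundary traversal of $F$. This is where properness of the $2$-colouring of $\Gamma$ enters, together with the fact that consecutive edges of $\partial F$ at $v$ are traversed ``in'' then ``out''; since both edges are incident to the \emph{same} vertex $v$, they share the colour of $v$ at that endpoint and have the opposite colour at their other endpoints, forcing exactly one of them to be oriented white$\to$black and the other black$\to$white along the clockwise walk. A subsidiary check is that a face $F$ cannot be incident to $v$ ``twice'' in a way that breaks this count; for the Lam models arising from the generalized Temperley trick on a minimal connected network this does not occur, and in any case the contributions from a repeated visit would again pair up a numerator factor with a denominator factor by the same parity argument, so the conclusion is unaffected.
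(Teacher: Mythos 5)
Your argument is correct and is the standard one: reduce to an elementary gauge transformation at a single inner node $v$, observe that the clockwise traversal of $\partial F$ enters and leaves $v$ along two edges which (by bipartiteness) are classified oppositely as white$\to$black versus black$\to$white, so their common rescaling factor cancels between numerator and denominator of $O(F)$. The paper itself gives no proof of this lemma, citing \cite{Pos} instead, so there is nothing to compare against; your handling of the two potential edge cases (a vertex visited twice by $\partial F$, and edges at $v$ not lying on $\partial F$) is exactly the care the standard argument requires.
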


\begin{figure}[h!]
     \centering
     \includegraphics[scale=0.1]{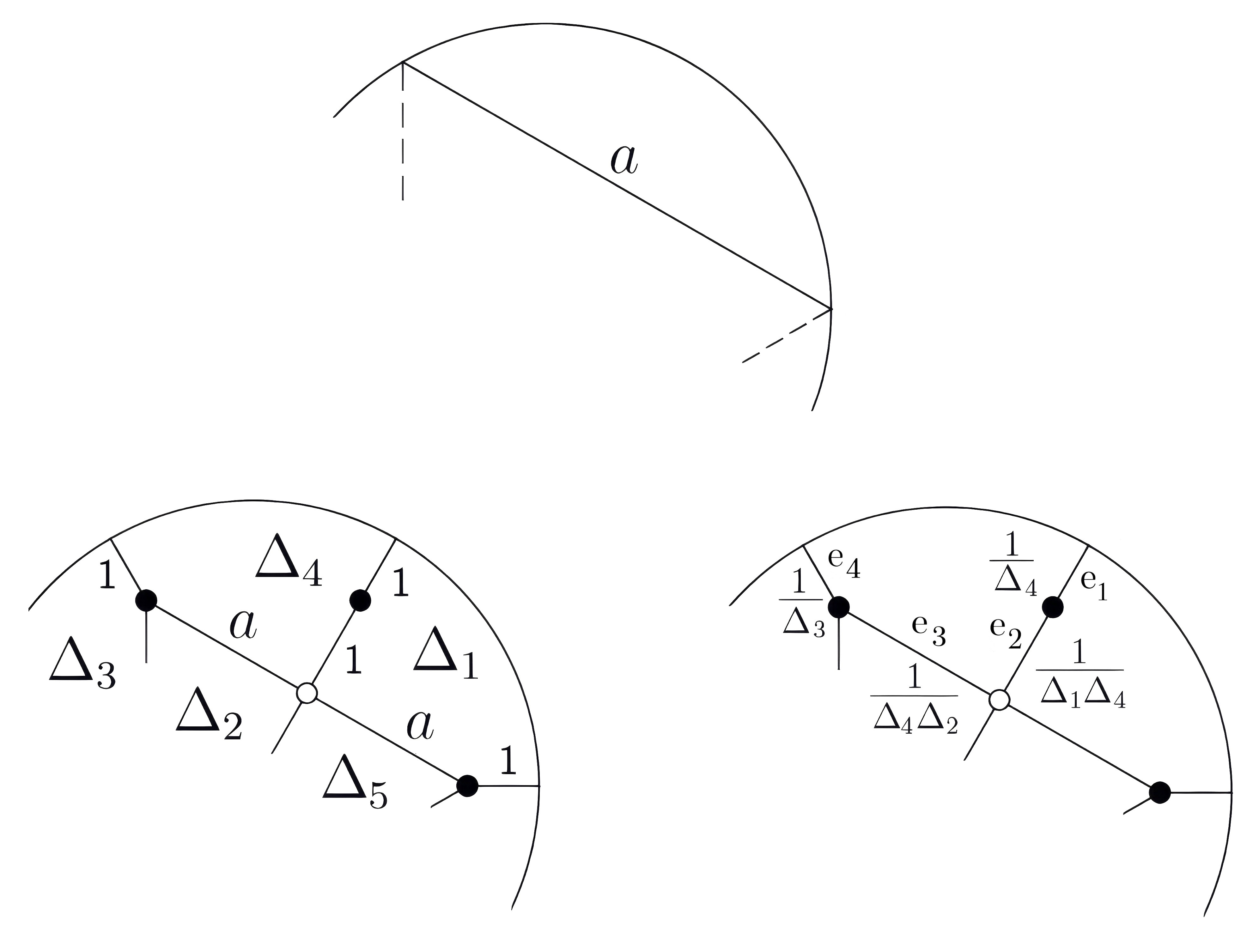}
     \caption{A bridge and a Lam model $N(\mathcal{E^T}, \boldsymbol \omega)$ around it} \label{figch1}
     \end{figure}
     \begin{figure}[h!]
     \centering
     \includegraphics[scale=0.25]{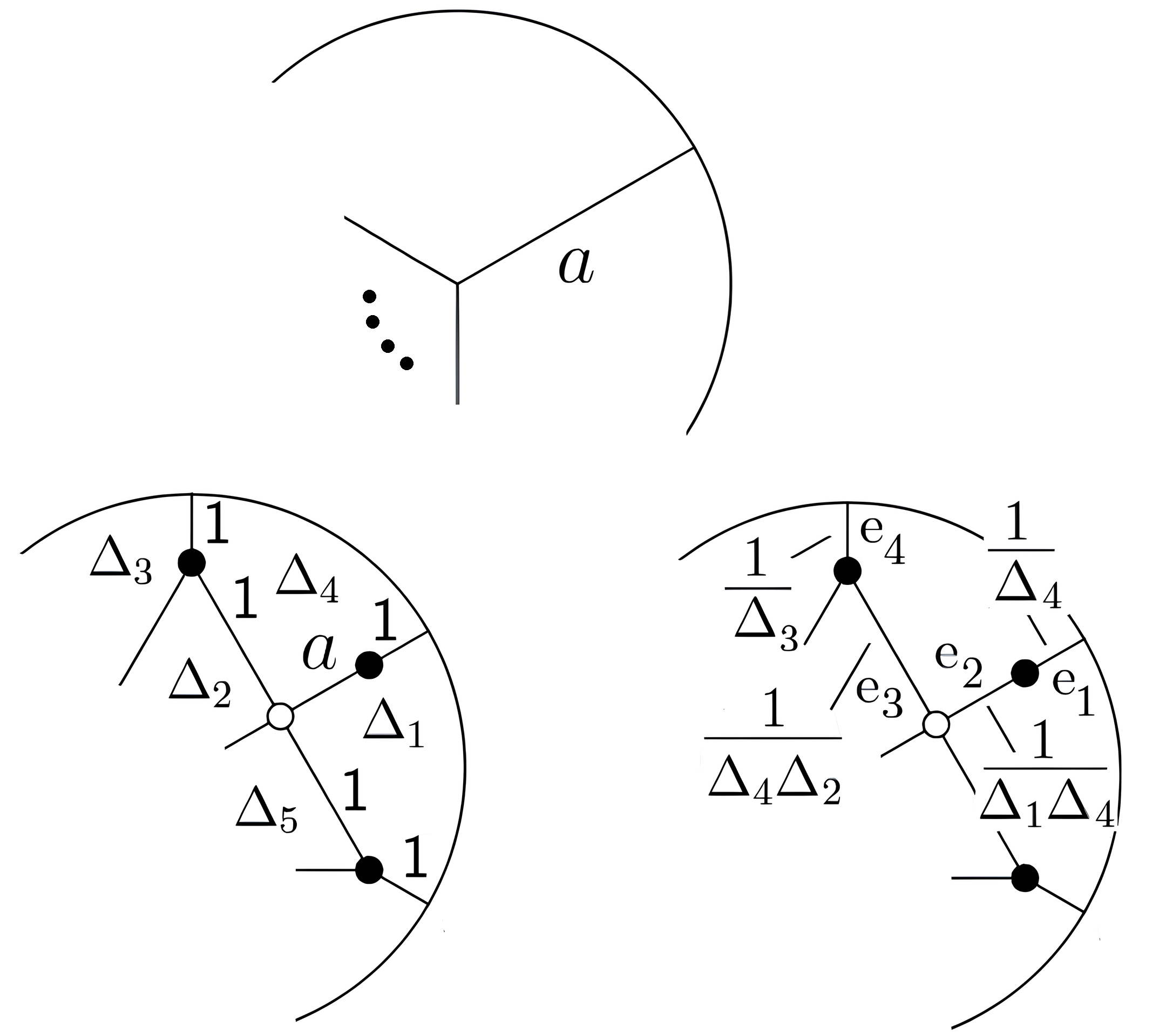}
     \caption{A spike and a Lam model $N(\mathcal{E^T}, \boldsymbol \omega)$ around it} \label{figch2}
     \end{figure}
     
\begin{figure}[h!]
     \centering
     \includegraphics[scale=0.15]{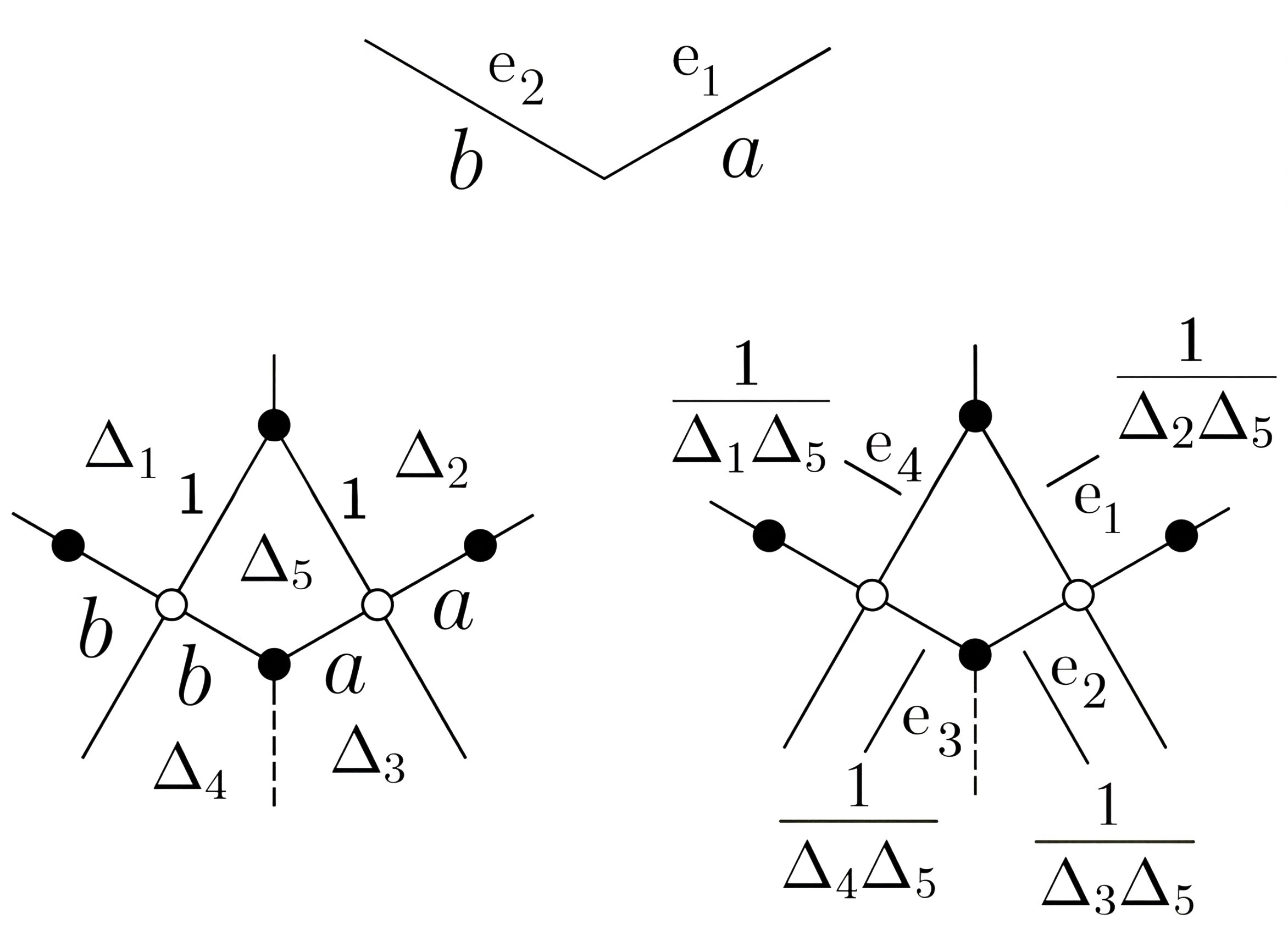}
     \caption{Two adjacent edges  and a Lam model $N(\mathcal{E^T}, \boldsymbol \omega)$ around them} \label{figch3}
     \end{figure}
    Using the coordinates introduced in Lemma \ref{lem:post-cord}, we can formulate a complete analog of the chamber ansatz (see \cite{BFZ}, \cite{MSp}) for the conductivities of electrical networks:
\begin{lemma} \label{mainlemma}
      Consider a minimal electrical network $\mathcal{E}(G, w)$ and its edge $e$. Then, the following holds:
      \begin{itemize}
           \item If an edge $e$ is a \textbf{bridge} $($i.e. an edge  which connects a boundary node numbered by $i$ with a boundary node numbered by $i+1)$   its conductivity can be recovered by the formula
           \begin{equation} \label{ch1}
           w(e)=\dfrac{\Delta_1\Delta_3}{\Delta_2\Delta_4}
           \end{equation}
           or, alternatively, by the formula
            \begin{equation} \label{ch1'}
           w(e)=\dfrac{\Delta_1}{\Delta_5};    
           \end{equation}
          \item   If an edge $e$ is a \textbf{spike} $($i.e. an edge  which connects a boundary node  with an unique inner node$)$  can be recovered by the formula
          \begin{equation} \label{ch2}
           w(e)=\dfrac{\Delta_2\Delta_4}{\Delta_1\Delta_3}   
           \end{equation}
           or, alternatively, by the formula
            \begin{equation} \label{ch2'}
           w(e)=\dfrac{\Delta_5}{\Delta_1};    
           \end{equation}
          
          \item Let us consider a face $F$ and its two  adjacent edges $e_1$ and $e_2$ $($see Fig. \ref{figch3}$)$, then the  ratio of their conductivities is equal to 
            \begin{equation} \label{ch3}
          \dfrac{w(e_1)}{w(e_2)}=\dfrac{\Delta_2\Delta_4}{\Delta_1\Delta_3},   
           \end{equation}
      \end{itemize}
      where $F_i$ are the faces labeled by $\Delta_i$ as it is shown in Fig. \ref{figch1}, Fig. \ref{figch2} and Fig. \ref{figch3}; and by $\Delta_i$ we denote $$\Delta_i:=\Delta_{I(F_i)}\Bigr(\tau \bigr(\Omega'(\mathcal{E})   \bigl)\Bigl).$$
\end{lemma}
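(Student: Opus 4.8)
# Proof Proposal for Lemma \ref{mainlemma}

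\textbf{Overall strategy.} The plan is to reduce everything to the generalized chamber ansatz already available for Lam models (Theorem \ref{thabouttwit}) via the generalized Temperley trick, and then to pass from the gauge class of weight functions to the gauge-invariant face coordinates of Lemma \ref{lem:post-cord}. Concretely: by Theorem \ref{connection}, for a minimal (connected) network $\mathcal{E}(G,w)$ the associated Lam model $N(\mathcal{E}^{T},\boldsymbol\omega)$ is minimal and represents the point $\mathcal{L}(\mathcal{E})\in\mathrm{Gr}_{\geq 0}(n-1,2n)$ by the matrix $\Omega'(\mathcal{E})$. Hence Theorem \ref{thabouttwit}, applied with $A=\Omega'(\mathcal{E})$, recovers \emph{some} weight function $\boldsymbol\omega'$ on $N(\mathcal{E}^{T},\boldsymbol\omega)$ via formulas (\ref{tw1})--(\ref{tw2}) in terms of the twisted Plücker coordinates $\Delta_{I(F)}\!\bigl(\tau(\Omega'(\mathcal{E}))\bigr)$, and $\boldsymbol\omega'$ agrees with the genuine Temperley weight $\boldsymbol\omega$ of Definition \ref{temp_gen} up to gauge transformations. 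The issue is that a conductivity $w(e)$ is read off from one or two edge weights of $N(\mathcal{E}^{T},\boldsymbol\omega)$, and these individual weights are not gauge invariant; so the first step is to repackage $w(e)$ as a gauge-invariant quantity.

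\textbf{From conductivities to face weights.} The key observation is that in each of the three configurations (bridge, spike, pair of adjacent edges) the relevant conductivity is a monomial in the face weights $O(F)$ of Lemma \ref{lem:post-cord}. Indeed, by Definition \ref{temp_gen} every edge of $N(\mathcal{E}^{T},\boldsymbol\omega)$ coming from $G$ carries the weight $w(e)$ of the corresponding edge, while all edges introduced by the trick carry weight $1$; evaluating $O(F)$ with these genuine weights therefore yields an explicit Laurent monomial in the $w(e)$. Reading off the local pictures in Fig.\ \ref{figch1}, Fig.\ \ref{figch2}, Fig.\ \ref{figch3}, one checks that for a suitable face $F$ surrounding a bridge one gets $O(F)=w(e)$ (resp.\ $w(e)^{-1}$ for a spike), and for the face $F$ of Fig.\ \ref{figch3} shared by the Temperley edges of $e_1,e_2$ one gets $O(F)=w(e_1)/w(e_2)$. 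Since $O(F)$ is gauge invariant by Lemma \ref{lem:post-cord}, its value computed from $\boldsymbol\omega$ equals its value computed from the recovered $\boldsymbol\omega'$.

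\textbf{Evaluating $O(F)$ through the twist.} It remains to compute $O(F)$ from $\boldsymbol\omega'$ using (\ref{tw1})--(\ref{tw2}). Substituting $\boldsymbol\omega'(e)=\bigl(\Delta_{I(F_1)}\Delta_{I(F_2)}\bigr)^{-1}$ for interior edges and $\boldsymbol\omega'(e)=\Delta_{I(F)}^{-1}$ for boundary edges into the defining ratio $O(F)=\prod_{e=wb}\boldsymbol\omega'(e)\big/\prod_{e=bw}\boldsymbol\omega'(e)$, the factors $\Delta_{I(F)}$ attached to the face $F$ itself cancel between numerator and denominator (the boundary of a face of a minimal Lam model alternates in the required way), and what survives is precisely a product of the $\Delta_{I(F_i)}$ over the \emph{neighbouring} faces $F_i$, with the exponents dictated by the $wb$/$bw$ types of the edges separating $F$ from $F_i$. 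Matching this with the labels $I(F_i)$ produced by the Scott rule (Theorem \ref{labelrule}) as indicated in Figs.\ \ref{figch1}--\ref{figch3} yields exactly $\Delta_2\Delta_4/(\Delta_1\Delta_3)$, hence formulas (\ref{ch1}), (\ref{ch2}) and (\ref{ch3}) after inverting where appropriate. The alternative formulas (\ref{ch1'}) and (\ref{ch2'}) follow either by choosing instead the other natural face in the local picture (the one whose boundary meets the Temperley edge carrying $w(e)$ and whose neighbours are labelled $\Delta_1,\Delta_5$), or equivalently from the three-term Plücker relation $\Delta_2\Delta_4=\Delta_3\Delta_5$ among the coordinates involved, which holds because the corresponding index sets differ in the pattern required by a short Plücker relation.

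\textbf{Expected main obstacle.} The conceptual skeleton above is short; the real work is the local combinatorial analysis of $N(\mathcal{E}^{T},\boldsymbol\omega)$ in a neighbourhood of a bridge, a spike, and a pair of adjacent edges: determining the faces, their Scott-rule labels $I(F_i)$, and the $wb$/$bw$ type of each edge, so that the telescoping in the last paragraph produces exactly the claimed monomials and so that $O(F)$ comes out equal to $w(e)$ (and not, say, $w(e)$ times extra weight-$1$-edge contributions that are no longer trivial after the twist). Verifying the Plücker identity underlying the alternative formulas (\ref{ch1'}), (\ref{ch2'}) is the second, smaller, point requiring care.
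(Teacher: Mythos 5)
Your proposal is correct and follows essentially the same route as the paper: both arguments compute the gauge-invariant face weight $O(F)$ of Lemma \ref{lem:post-cord} twice --- once with the genuine Temperley weights $\boldsymbol\omega$ (giving the conductivity or conductivity ratio) and once with the weights $\boldsymbol\omega'$ recovered via the twist formulas \eqref{tw1}--\eqref{tw2} of Theorem \ref{thabouttwit} (giving the monomial in the $\Delta_i$) --- and then equate the two by gauge invariance. The paper carries this out explicitly only for the bridge case \eqref{ch1} and asserts the rest is similar, so your more systematic discussion of the spike, the adjacent-edge ratio, and the alternative formulas \eqref{ch1'}, \eqref{ch2'} is consistent with, and slightly more detailed than, what the paper actually writes.
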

\begin{proof}
Let us prove Formula \eqref{ch1}. Indeed, let us consider a face $F_1$, then $$O(F_1)=\dfrac{\boldsymbol\omega(e_1)\boldsymbol\omega(e_3)}{\boldsymbol\omega(e_2)\boldsymbol\omega(e_4)}=a.$$  
On the other hand, using Theorem \ref{thabouttwit} we have that  
$$O(F_1)=\dfrac{\boldsymbol\omega'(e_1)\boldsymbol\omega'(e_3)}{\boldsymbol\omega'(e_2)\boldsymbol\omega'(e_4)}=\dfrac{\Delta_1\Delta_3}{\Delta_2\Delta_4},$$
where up to gauge transformations $\boldsymbol\omega'=\boldsymbol\omega$   and 
$$\boldsymbol\omega'(e_1)=\dfrac{1}{\Delta_4}, \ \boldsymbol\omega'(e_2)=\dfrac{1}{\Delta_1\Delta_4}, \
\boldsymbol\omega'(e_3)=\dfrac{1}{\Delta_4\Delta_2}, \
\boldsymbol\omega'(e_4)=\dfrac{1}{\Delta_3}.$$
Since the value of $O(F_1)$ is independent on the gauge transformations, we obtain Formula \eqref{ch1}. The remaining formulas can be derived in a similar way, which completes the proof.
\end{proof}

Based on Lemma \ref{mainlemma}, we propose the following algorithm to solve Problem \ref{bl-box}.
\begin{algorithm} \label{mainalg}
\hskip 1pt
\begin{itemize}
    \item Given a response matrix $M_R(\mathcal{E})$ of a minimal network $\mathcal{E}$ construct the matrix $\Omega'(\mathcal{E})$;
    \item  Calculate $\tau\bigl(\Omega'(\mathcal{E})\bigr);$
    \item  Label each face $F$ of $N(\mathcal{E^T}, \boldsymbol \omega)$ by $n-1 \times n-1$ minors $\Delta_{F(I)}$ of a $\tau\bigl(\Omega'(\mathcal{E})\bigr)$;
    \item Using Formulas \eqref{ch1}--\eqref{ch1'}, \eqref{ch2}--\eqref{ch2'} and \eqref{ch3} recover a conductivity function $w.$
\end{itemize}
\end{algorithm}
\begin{proof}
    The correctness of the proposed algorithm is derived from the following statements:
\begin{itemize}
    \item Each minimal electrical network has at least one bridge or spike (see \cite{CIW}), therefore we can extend bridges and spikes conductivities to conductivities of remaining  edges by Formula \eqref{ch3}; 
    \item The uniqueness of this extension defined    by Formula \eqref{ch3} follows from Theorem \ref{thblbopost} and the second part of  Theorem \ref{connection}.
\end{itemize}
\end{proof}
\begin{figure}[h!]
     \centering
     \includegraphics[scale=0.55]{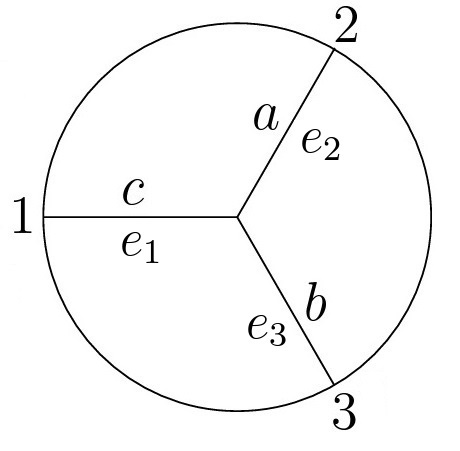}
     \caption{The star-shape network  $\mathcal{E}$} \label{ex1}
     \end{figure}
\begin{remark}
   As demonstrated by Equations \eqref{ch1}--\eqref{ch3}, the conductivity function $w$ can be recovered in multiple ways. Furthermore, every minimal electrical network contains at least three bridges and spikes in total, see \cite{CIW}.   Consequently, the process of recovering the conductivity function $w$ leads to a multitude of non-trivial identities, which of course follow from   the special form of the  weight function $\boldsymbol \omega$ of $N(\mathcal{E^T}, \boldsymbol \omega)$. These identities  may  hint at the existence of   integrable structures (see  \cite{BS}) related to electrical networks.
\end{remark}
 Let us consider a couple of examples.
\begin{example}
   Consider a network  $\mathcal{E}$ as it is shown in Fig. \ref{ex1}. Its response matrix has the form:
   \begin{equation*} 
		M_R(\mathcal{E})= \left(
		\begin{array}{cccccccc}
			\frac{ca+cb}{a+b+c}  & -\frac{ca}{a+b+c} &   -\frac{cb}{a+b+c} \\
			-\frac{ca}{a+b+c} &  \frac{ca+ab}{a+b+c} &   -\frac{ab}{a+b+c} \\
		-\frac{cb}{a+b+c} &  -\frac{ab}{a+b+c} &   -\frac{cb+ab}{a+b+c} 
		\end{array}
		\right).
	\end{equation*} 
Then, $\Omega_3(\mathcal{E})$ and $\Omega'_3(\mathcal{E})$ equal to:

\begin{equation*}
\Omega_3(\mathcal{E})= \left(
		\begin{array}{cccccccc}
			\frac{ca+cb}{a+b+c} & 1 & \frac{ca}{a+b+c} & 0 &  \frac{-cb}{a+b+c} & -1\\
			\frac{ca}{a+b+c} & 1 & \frac{ca+ab}{a+b+c} & 1 &  \frac{ab}{a+b+c} & 0 \\
		\frac{-cb}{a+b+c} & 0 & \frac{ab}{a+b+c} & 1 &  \frac{cb+ab}{a+b+c} & 1 
		\end{array}
		\right), \
		\Omega'_3(\mathcal{E})= \left(
		\begin{array}{cccccccc}
			\frac{ca+cb}{a+b+c} & 1 & \frac{ca}{a+b+c} & 0 &  \frac{-cb}{a+b+c} & -1\\
			\frac{ca}{a+b+c} & 1 & \frac{ca+ab}{a+b+c} & 1 &  \frac{ab}{a+b+c} & 0
		
		\end{array}
		\right).
	\end{equation*}  

By  direct  computation we can verify that all $2 \times 2$ minors of $\Omega'_3(\mathcal{E})$ do not vanish, therefore
\begin{equation*} 
		\tau\bigl(\Omega'_3(\mathcal{E})\bigr)= \left(
		\begin{array}{cccccccc}
			\frac{a+b+c}{bc} & \frac{c+b}{b} & \frac{a+b+c}{ac} & \frac{a}{c} &  0 & -1\\
			-\frac{a+b+c}{bc} & -\frac{c}{b} & 0  & 1 &  \frac{a+b+c}{ab} & \frac{a+b}{a}
		\end{array}
		\right).
	\end{equation*}
For instance,   $\tau(A)_6$ is defined by the following conditions: $( \tau(A)_6, A_6)=1$ and  $( \tau(A)_6, A_{6+1 \ \text{mod} \ 6})=( \tau(A)_6, A_{1})=0$ (since  $A_6 \notin \text{span}(A_1)$). 

Let us label each face of $N(\mathcal{E^T}, \boldsymbol \omega)$ according to the rule defined in Theorem \ref{labelrule}, as it is shown in Fig. \ref{ex2}. Since the edge $e_1$ is a spike, we obtain the following identities:
\begin{equation} \label{formtogrove}
    w(e_1)=\dfrac{\Delta_{26}}{\Delta_{16}}=\dfrac{\frac{a+b+c}{a}}{\frac{a+b+c}{ca}}=c,
\end{equation}
or alternatively 
\begin{equation}
    w(e_1)=\dfrac{\Delta_{56}\Delta_{46}}{\Delta_{16}\Delta_{45}}=\dfrac{\frac{a+b+c}{ab}\frac{a+b+c}{c}}{\frac{a+b+c}{ca}\frac{a+b+c}{cb}}=c.
\end{equation}
\end{example}
      \begin{figure}[h!]
     \centering
     \includegraphics[scale=0.17]{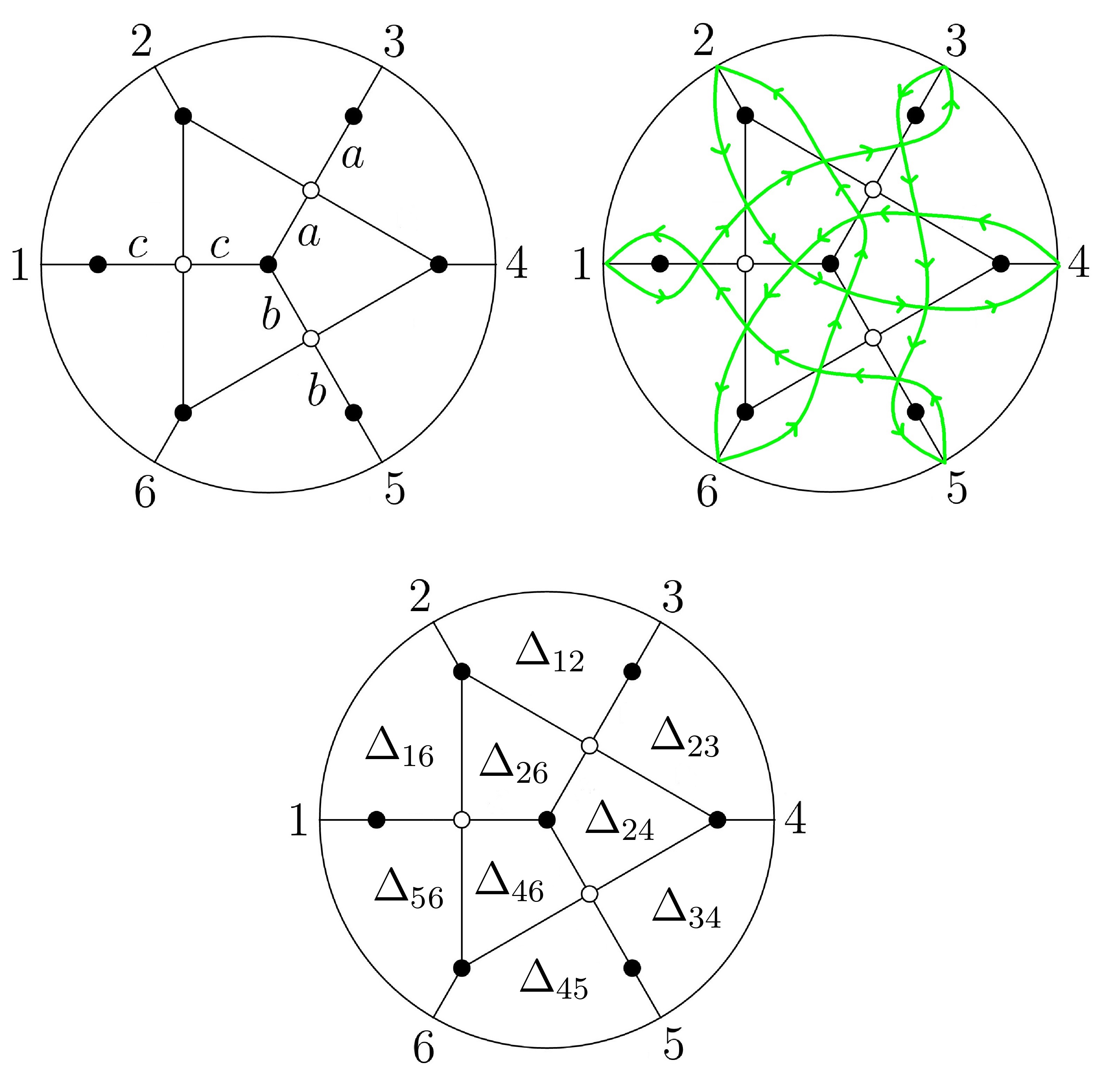}
      \caption{The Lam model  $N(\mathcal{E^T}, \boldsymbol \omega)$ and the face labeling} \label{ex2}
     \end{figure}

\begin{figure}[H]
     \hspace*{-11mm}
     \includegraphics[scale=0.18]{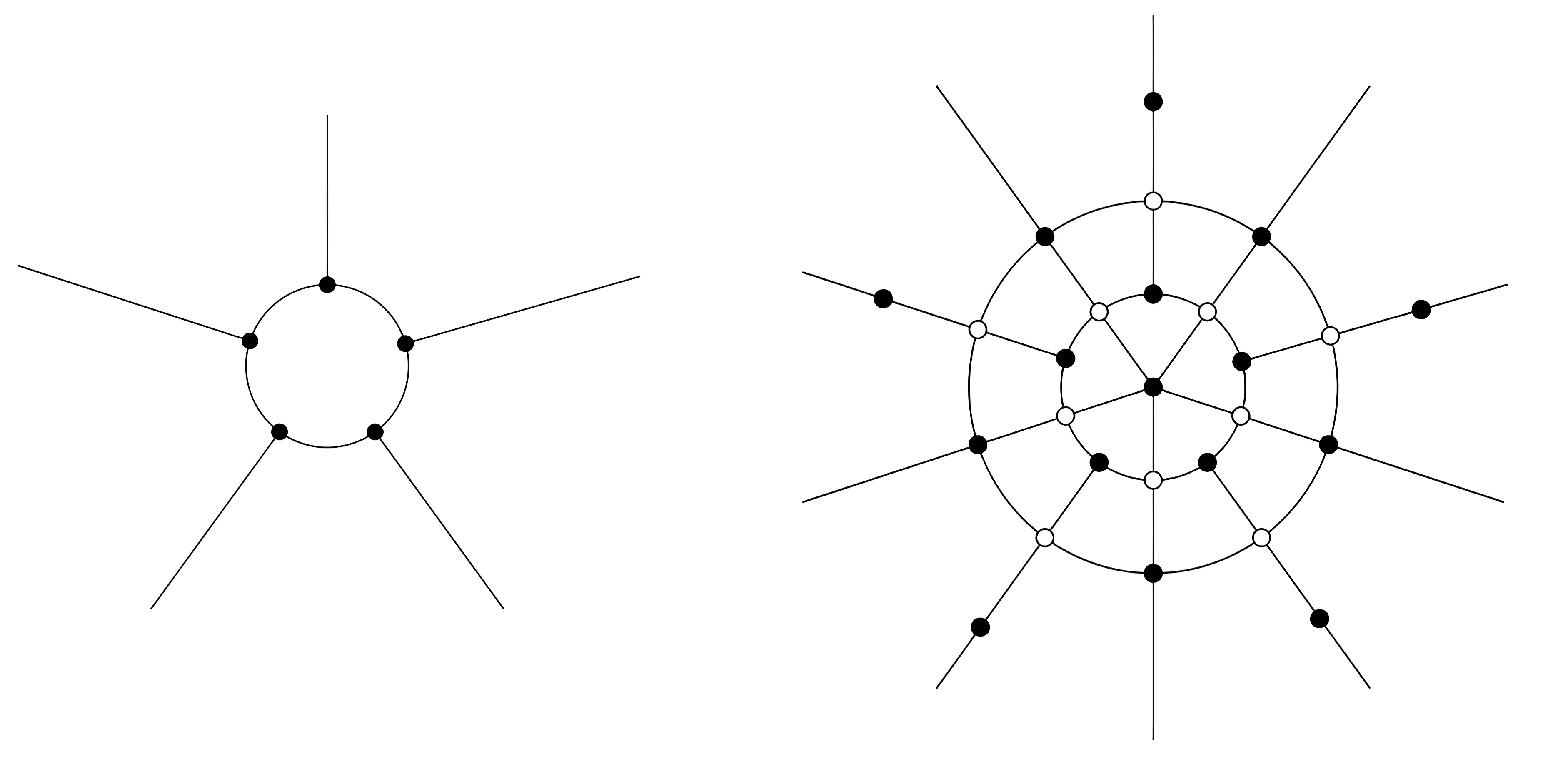}
     \caption{$G_5$ and its associated  Lam model. The boundary circles are not drawn} \label{g5gr}
     \end{figure}
\begin{example}
Let us consider a series of the general position (well-connected, see \cite{K}) minimal  electrical networks $\mathcal{E}(G_{4m+1}, w)$ defined as follows:
\begin{itemize}
    \item The boundary nodes lie on a boundary  circle of radius $m+1$, centered at $(0,0)$;
    \item The nodes of $G_{4m+1}$ are the points $(i,j)$ for integers $i$ and $j$ with $0<i\le m+1$ and $1\le j\le 4m+1$. The radial edges are the radial line segments joining $(i,j)$ to $(i+1,j)$ for each $0<i\le m$ and  $1\le j\le 4m+1$. The circular edges are the circular arcs joining $(i,j)$ to $(i,j+1)$ for each $1\le i\le m$ and each $1\le j\le 4m+1$. The graph $G_{4m+1}$ has $2m(4m+1)$ edges and $(m-1)(4m+1)$ nodes;
    \item The boundary nodes of $G_{4m+1}$ are the points $v_j=(m+1,j)$, for $j=1,\ldots,4m+1$, with the convention that $v_0=v_{4m+1}$.
\end{itemize}
The significance of these electrical networks follows from their role as duals to the networks utilized in numerical solutions to the Calderón problem, see \cite{BDV} and also  \cite[Theorem $3.7$]{GK}.

    Based on  the specific shape of $\mathcal{E}(G_{4m+1}, w)$, we can propose the following way to solve Problem \ref{bl-box} to them:
    \begin{itemize}
        \item Label each face of $N(\mathcal{E^T}, \boldsymbol \omega)$ according to the rule described in \cite{BGK};
        \item Recover spikes conductivities;
        \item Using a spike  conductivity and Formula \eqref{ch3},    extend  conductivities of the remaining edges in each \textbf{part}, see Fig. \ref{part} and  \ref{partup}:  
        \begin{equation*}
            a_1=\dfrac{\Delta_1\Delta_2}{\Delta'_1\widetilde{\Delta}_1}, \ \dfrac{a_i}{b_i}=\dfrac{\Delta_{2i-1}\Delta_{2i+1}}{\Delta'_{2i}\widetilde{\Delta}_{2i}}, \ \dfrac{b_i}{a_{i+1}}=\dfrac{\Delta'_{2i+1}\widetilde{\Delta}_{2i+1}}{\Delta_{2i}\Delta_{2i+2}}.
        \end{equation*}
    \end{itemize}
\end{example}
 \begin{figure}[H]
     \hspace*{8mm}
     \includegraphics[scale=0.4]{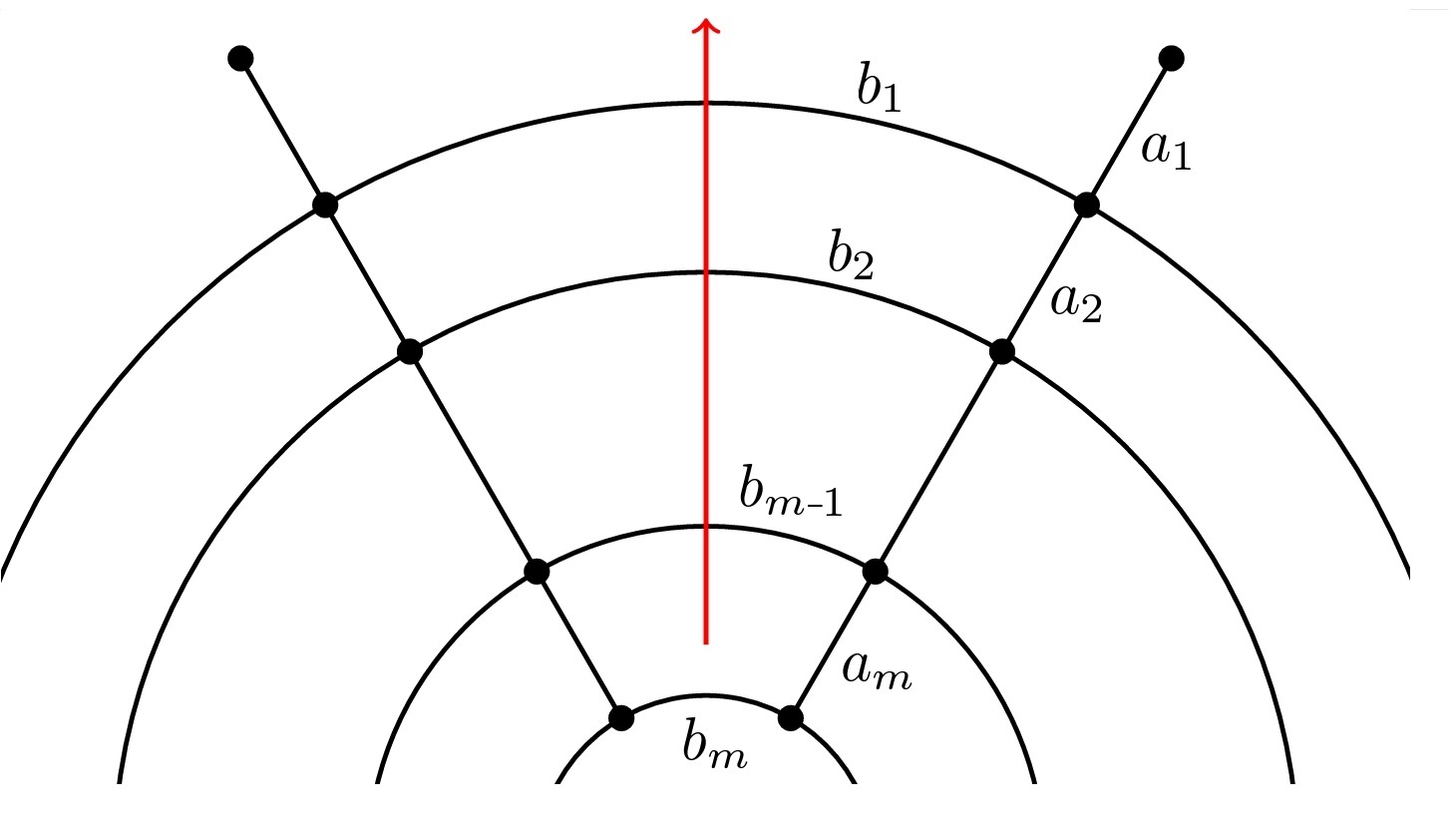}
      \caption{The part} \label{part}
     \end{figure}
\begin{figure}[H]
     \centering
     \includegraphics[scale=0.4]{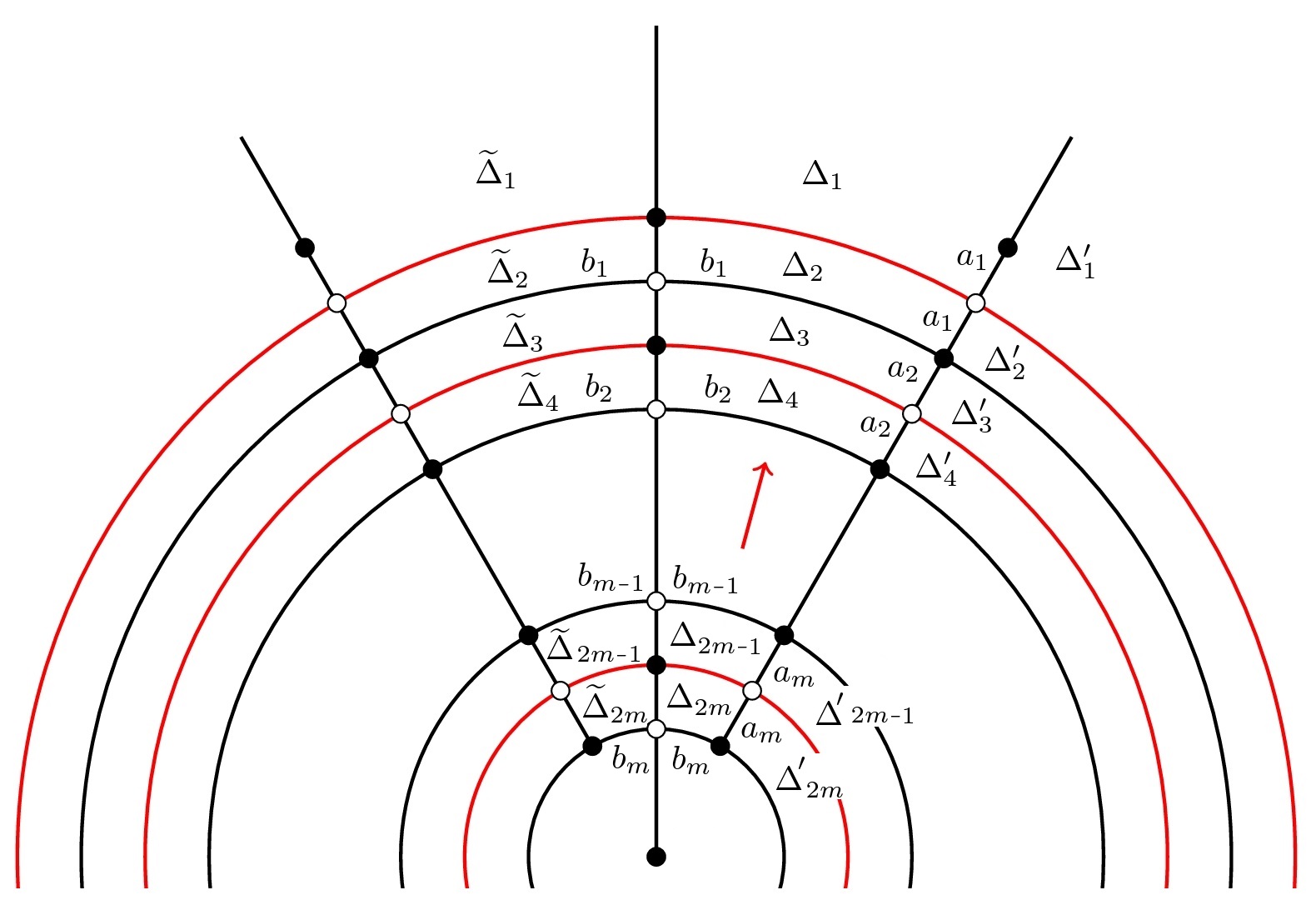}
      \caption{A Lam model across the part} \label{partup}
     \end{figure}
\subsection{Conductivity recovering via groves} \label{efsec} 
 Now, we reveal some combinatorial aspects of Algorithm \ref{mainalg}. Particularly, we demonstrate that conductivities  can be recovered by weights of the special  groves.

\begin{definition} \label{groves, ncp}
A \textbf{grove} $\mathcal{G}$ on a graph $G$ of an electrical network $\mathcal{E}(G, w)$ is a spanning forest, i.e. an acyclic subgraph of $G$ covering all nodes of $\mathcal{E}$, such that each  connected component of $\mathcal{G}$  contains at least one boundary node. 

A weight $wt(\mathcal{G})$ of the grove $\mathcal{G}$ is equal to $wt(\mathcal{G})=\prod \limits_{e \in \mathcal{G}}w(e).$ 
\end{definition}

\begin{theorem} \cite{L} \label{thgroves}
    Consider an  electrical network $\mathcal{E}(G, w)$ and its associated  Lam model $N(\mathcal{E^T}, \boldsymbol \omega).$ Then, for each $I \subset \{1,\dots, n\}, \ |I|=n-1$, there is the natural weight preserving \textbf{bijection} $\mathcal{F}_I$ between the subset of dimers $\Pi \in \Pi(I)$ on     $N(\mathcal{E^T}, \boldsymbol \omega)$ $($i.e. a subset of dimers, which do not cover boundary nodes belonging to $I$, see Definition \ref{ld}$)$   and its corresponding   subset of  groves on $\mathcal{E}(G, w)$ such that $wt(\Pi)=wt\bigl(\mathcal{F}_I(\Pi)\bigr).$ 
\end{theorem}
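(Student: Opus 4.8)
The plan is to exhibit the bijection $\mathcal{F}_I$ explicitly and locally, edge by edge, using the combinatorial structure of the generalized Temperley trick in Definition \ref{temp_gen}. Fix $I \subset \{1,\dots,n\}$ with $|I| = n-1$, and recall that a dimer $\Pi \in \Pi(I)$ is a perfect matching of the interior vertices of $\mathcal{E^T}$ that covers exactly the black boundary nodes in $I$ and leaves uncovered the white boundary nodes in $I$; since all boundary nodes of $\mathcal{E^T}$ are white and $k(\mathcal{E^T}) = n-1$, a dimer in $\Pi(I)$ leaves uncovered precisely the $n-1$ boundary nodes whose labels lie in $I$, equivalently it covers the single boundary node whose label lies in $[2n]\setminus I$ after the odd/even identification (the value of $k(\Gamma)$ count from the formula before Definition \ref{ld} forces this). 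The interior vertices of $\mathcal{E^T}$ come in three families: the black vertices $b_v$ (one per node $v$ of $G$, including the $b_i$ for boundary nodes), the black face vertices $b_F$ (one per interior face of $G$), and the white edge-midpoint vertices $w_e$ (one per interior edge of $G$). Each $w_e$ has degree $3$ or $4$: it is joined to the two black vertices $b_{v}, b_{v'}$ at the endpoints of $e$ (with weights $w(e)$ — note each such edge carries the conductivity), to the black vertices of the (one or two) faces bordering $e$ (weight $1$), and possibly to a black boundary auxiliary vertex.

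The key observation is the standard Temperley correspondence: in any dimer $\Pi$, each white vertex $w_e$ is matched to exactly one of its black neighbours, and this choice either \emph{uses} the edge $e$ of $G$ (if $w_e$ is matched to $b_v$ or $b_{v'}$) or \emph{does not} (if $w_e$ is matched to a face vertex $b_F$ or a boundary vertex). Define $\mathcal{F}_I(\Pi)$ to be the subgraph of $G$ consisting of exactly those edges $e$ of $G$ such that $w_e$ is matched through one of its endpoints in $\Pi$. First I would check that $\mathcal{F}_I(\Pi)$ is a spanning forest: the face vertices $b_F$ that are \emph{not} covered by $\{b_v$-matchings$\}$ must themselves be matched to some $w_e$ bordering $F$, and the classical Temperley bijection argument (matchings of the Temperley graph $\leftrightarrow$ pairs of spanning trees of $G$ and its planar dual, here adapted to the boundary-grove setting of \cite{KW space}, \cite{L}) shows that the ``unused'' edges, viewed in the dual graph $G^*$, form a spanning forest of $G^*$; by planar duality the ``used'' edges form an acyclic subgraph of $G$, and the covering condition on all $b_v$ forces every node of $G$ to be incident to a used edge or to be a boundary node matched directly to its boundary auxiliary vertex. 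The condition that each component contains a boundary node is then equivalent to the matching covering all but the $n-1$ boundary nodes indexed by $I$, i.e. to $\Pi \in \Pi(I)$; this is where the index set $I$ enters and pins down \emph{which} grove-components are ``rooted''. Weight preservation $wt(\Pi) = wt(\mathcal{F}_I(\Pi))$ is then immediate, since the only edges of $\mathcal{E^T}$ carrying non-unit weight are the $b_v$--$w_e$ edges with weight $w(e)$, and these are used in $\Pi$ exactly when $e \in \mathcal{F}_I(\Pi)$.

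For injectivity and surjectivity I would argue that the map $\Pi \mapsto \mathcal{F}_I(\Pi)$ can be inverted: given a grove $\mathcal{G}$ on $\mathcal{E}$ with the correct rooting pattern determined by $I$, the edges in $\mathcal{G}$ tell us for which $w_e$ the vertex $w_e$ is matched to an endpoint; since $\mathcal{G}$ is a forest rooted at the boundary, at each non-root node $v$ of $G$ there is a unique edge of $\mathcal{G}$ pointing ``toward the boundary'' (the first edge on the unique path in $\mathcal{G}$ from $v$ to its boundary root), which canonically matches $b_v$; the face vertices $b_F$ and the remaining $w_e$ (those with $e \notin \mathcal{G}$) are matched among themselves via the dual spanning forest of $G^*$ determined by the complement $G \setminus \mathcal{G}$, again uniquely by the same rooted-forest argument applied to $G^*$; finally the boundary auxiliary vertices are matched to absorb the parity. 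This produces the unique $\Pi \in \Pi(I)$ with $\mathcal{F}_I(\Pi) = \mathcal{G}$, giving a bijection. The main obstacle I anticipate is bookkeeping the boundary: the auxiliary black vertices $b_i$, the odd boundary nodes $2i-1$, and the even boundary nodes $2i$ sitting in faces interact in a slightly intricate way (cf. the four bullet points of Definition \ref{temp_gen}), and one must verify carefully that the ``rooting pattern'' of a grove — which of its components is anchored at which boundary node — corresponds exactly to the set $I$ under the $i \leftrightarrow 2i-1$, $2i \leftrightarrow$ (between $i$ and $i+1$) identification, so that $\Pi \in \Pi(I)$ matches ``$\mathcal{F}_I(\Pi)$ has components rooted at the boundary nodes $I$''. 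Once this dictionary between the $2n$ Lam-model boundary labels and the $n$ electrical boundary nodes is pinned down, the rest follows from the classical Temperley argument and planar duality.
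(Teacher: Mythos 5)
The paper does not actually prove this statement --- it is quoted from \cite{L} --- so I am judging your argument against the known proof, whose overall mechanism (the generalized Temperley correspondence: a dimer determines which edges of $G$ are ``claimed'' through their midpoint vertices $w_e$, the complement is handled by planar duality in $G^*$) you have correctly identified. The genuine gap is in your edge-by-edge dictionary at the boundary, and it is not the deferrable ``bookkeeping'' you describe; it is the step where $I$ actually enters. You classify ``$w_e$ matched to a boundary auxiliary vertex $b_i$'' together with ``$w_e$ matched to a face vertex $b_F$'' as the case $e\notin\mathcal{F}_I(\Pi)$. But for a boundary node $i$ of $G$ the vertex $b_i$ \emph{is} the black vertex representing the endpoint $i$ of $e$ (there is no separate $b_v$ for boundary $v$), and matching $w_e$ to $b_i$ must mean that $e$ lies in the grove, claimed by its boundary endpoint. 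The paper's own star example already refutes your convention: for $I=\{1,6\}$ the unique dimer matches $b_1\sim w_{e_1}$, $b_{F_{31}}\sim w_{e_3}$, $b_v\sim w_{e_2}$, and the corresponding grove is $\{e_1,e_2\}$ of weight $ca=wt(\mathcal{G}_{16})$ with components $\{1,v,2\}$ and $\{3\}$; your rule would output the grove $\{e_2\}$ of weight $a$, which is the wrong subgraph and the wrong weight. Relatedly, for weight preservation one also needs the edges $b_i$--$w_e$ to carry weight $w(e)$ rather than $1$ (as they do in Lam's construction; Definition \ref{temp_gen} as printed is imprecise on this point) --- otherwise no dimer at all can have weight $ca$, since $b_v$ is covered by exactly one weighted edge.

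Two further points. First, your count is off: a dimer in $\Pi(I)$ leaves uncovered the $n-1$ white boundary nodes in $I$ and covers all $n+1$ nodes of $[2n]\setminus I$, not ``the single boundary node''; the covered odd nodes $2i-1$ force $b_i$ to be matched into the boundary rather than into the interior, and the covered even nodes $2i$ do the same for the face vertices --- this is precisely how $I$ dictates which grove components are rooted where and which dual components are rooted at which boundary arcs. Second, in your inverse construction the phrase ``the unique edge of $\mathcal{G}$ pointing toward its boundary root'' is ill-defined when a component of $\mathcal{G}$ contains several boundary nodes; selecting the root in each component (and in each dual component) consistently with $I$ is exactly the concordance condition appearing in Theorem \ref{thgroves1}, and establishing that for each concordant grove there is exactly one admissible selection is the substantive content of the bijection. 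As written, your construction does not yield the claimed weight-preserving bijection until the boundary rule is corrected and this root-selection step is carried out.
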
 
Due to Theorem \ref{thgroves}, the Plücker coordinates of $\mathcal{L}(\mathcal{E})$ can be rewritten in terms of  grove weights. 
 \begin{theorem} \cite{BGKT} \label{thgroves1}
    Consider an  electrical network $\mathcal{E}(G, w) \in E_n$. Then, the following identity holds:
    \begin{equation*}
        \Delta_I\bigl(\Omega(\mathcal{E})\bigr)=\dfrac{\sum \limits_{\Pi\in \Pi(I)} wt(\Pi)}{L_{unc}}=\dfrac{\sum \limits_{\Pi\in \Pi(I)} wt\bigl(\mathcal{F}_I(\Pi)\bigr)}{L_{unc}}=\dfrac{\sum \limits_{\mathcal{G}|I} wt\bigl(\mathcal{G}\bigr)}{L_{unc}},
    \end{equation*}
    where $L_{unc}$ is the partition function of the weights of all groves with exactly $n$ connected components; and the last sum goes over  all groves, which are concordant with $I$, see \cite{L} for more details.
\end{theorem}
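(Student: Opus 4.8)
The plan is to deduce the three displayed equalities separately: the first from the identification of $\mathcal{L}(\mathcal{E})$ with the Lam model $N(\mathcal{E^T},\boldsymbol\omega)$ together with a single normalization computation, and the remaining two directly from the grove--dimer bijection of Theorem~\ref{thgroves}.

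First I would combine the first bullet of Theorem~\ref{connection} --- that $N(\mathcal{E^T},\boldsymbol\omega)$ and $\Omega(\mathcal{E})$ define the \emph{same} point $\mathcal{L}(\mathcal{E})\in\mathrm{Gr}_{\geq 0}(n-1,2n)$ --- with Theorem~\ref{l2}, which says that the boundary measurements $\Delta^d_I=\sum_{\Pi\in\Pi(I)}\mathrm{wt}(\Pi)$ of $N(\mathcal{E^T},\boldsymbol\omega)$ form a system of homogeneous coordinates for that point. Since Plücker coordinates are defined only up to a common nonzero scalar, this yields a positive constant $\lambda$, \emph{independent of} $I$, with $\Delta_I\bigl(\Omega(\mathcal{E})\bigr)=\lambda^{-1}\sum_{\Pi\in\Pi(I)}\mathrm{wt}(\Pi)$ for every admissible $I\subset[2n]$. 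All that the first equality of the theorem asserts beyond this is that $\lambda=L_{unc}$.

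To identify $\lambda$ I would evaluate both sides at one well-chosen index set $I_0$: take $I_0$ to consist of $n-1$ of the $n$ even boundary nodes of $N(\mathcal{E^T},\boldsymbol\omega)$ (the ones not identified with boundary nodes of $G$). On one hand, inspecting the explicit shape of $\Omega(\mathcal{E})$ and expanding the $(n-1)\times(n-1)$ minor along the ``even'' columns --- which carry only entries in $\{0,\pm1\}$ and form a unimodular pattern --- one sees that $\Delta_{I_0}\bigl(\Omega(\mathcal{E})\bigr)=\pm1$. On the other hand, I would check, by unwinding the generalized Temperley trick, that through the bijection $\mathcal{F}_{I_0}$ of Theorem~\ref{thgroves} the set $\Pi(I_0)$ is carried bijectively onto the family of spanning forests with exactly $n$ connected components, so that $\sum_{\Pi\in\Pi(I_0)}\mathrm{wt}(\Pi)=L_{unc}$. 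Comparing the two evaluations gives $\lambda=L_{unc}$, which establishes the first equality for every $I$. The middle equality $\sum_{\Pi\in\Pi(I)}\mathrm{wt}(\Pi)=\sum_{\Pi\in\Pi(I)}\mathrm{wt}\bigl(\mathcal{F}_I(\Pi)\bigr)$ is then just the weight-preservation $\mathrm{wt}(\Pi)=\mathrm{wt}\bigl(\mathcal{F}_I(\Pi)\bigr)$ of Theorem~\ref{thgroves}, applied termwise; and the last equality $\sum_{\Pi\in\Pi(I)}\mathrm{wt}\bigl(\mathcal{F}_I(\Pi)\bigr)=\sum_{\mathcal{G}\,|\,I}\mathrm{wt}(\mathcal{G})$ follows once one knows that the image of $\mathcal{F}_I$ is exactly the set of groves concordant with $I$ --- the description of that image recorded in \cite{L} --- after observing that ``concordant with $I$'' is, by definition, the translation of the uncovered-boundary condition defining $\Pi(I)$ through the Temperley trick.

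I expect the normalization step --- identifying $\lambda$ with $L_{unc}$ --- to be the only real obstacle: it requires choosing $I_0$ correctly and then carefully tracking, through the Temperley correspondence, both that the associated minor of $\Omega(\mathcal{E})$ collapses to $\pm1$ and that $\mathcal{F}_{I_0}$ has image precisely the $n$-component groves; everything else is bookkeeping layered on top of Theorems~\ref{connection}, \ref{l2} and \ref{thgroves}. One could instead bypass the Lam model, using the classical Kenyon--Wilson formulas for minors of the response matrix together with a Cauchy--Binet expansion of the minors of $\Omega'(\mathcal{E})$ along its $\{0,\pm1\}$-columns, but the route above is cleaner and matches the references \cite{BGKT}, \cite{L}.
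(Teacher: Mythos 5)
The paper offers no proof of this statement --- it is imported verbatim from \cite{BGKT} --- so there is no internal argument to measure yours against; I can only assess the proposal on its own terms, and on those terms it is essentially correct. The skeleton is right: Theorem \ref{connection} plus Theorem \ref{l2} give $\Delta_I\bigl(\Omega(\mathcal{E})\bigr)=\lambda^{-1}\sum_{\Pi\in\Pi(I)}\mathrm{wt}(\Pi)$ for a single scalar $\lambda$, the middle and last equalities are immediate from the weight-preserving bijection of Theorem \ref{thgroves}, and the whole content is the normalization $\lambda=L_{unc}$. Your choice of $I_0=\{2,4,\dots,2n-2\}$ does work: those columns of $\Omega'(\mathcal{E})$ form a lower unitriangular $(n-1)\times(n-1)$ block, so the minor is exactly $1$ (not merely $\pm1$; positivity also forces the sign), and in the $n=3$ star example one checks $\Delta_{24}=1$ while the $3$-component groves are $\{a\},\{b\},\{c\}$ with total weight $a+b+c=L_{unc}$, consistent with your claim that $\mathcal{F}_{I_0}$ hits precisely the $n$-component groves. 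Two cautions. First, the step you yourself flag as the obstacle --- that $\Pi(I_0)$ corresponds under the Temperley correspondence to the finest non-crossing partition, i.e.\ to groves in which every component contains exactly one boundary node --- is genuinely the crux and is not a formality: it requires Lam's explicit bijection between $(n-1)$-subsets of $[2n]$ and non-crossing partitions of $[n]$, so the proof is only complete modulo that identification. Second, there is a latent circularity if one traces the references: in \cite{BGKT} the first bullet of Theorem \ref{connection} (that $N(\mathcal{E^T},\boldsymbol\omega)$ and $\Omega(\mathcal{E})$ define the same point) is established essentially by proving the grove expansion you are trying to derive, so your argument is legitimate within the axiomatics of this paper but would not serve as an independent proof of the result in \cite{BGKT}; your alternative route via circular minors of $M_R(\mathcal{E})$, Cauchy--Binet on the $\{0,\pm1\}$ columns, and the classical grove expansions of response-matrix minors is the way to break that circle.
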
 
 Theorem \ref{thgroves} and Remark \ref{abouttwist} also provide additional interpretations of the formulas presented in Lemma \ref{mainlemma}. 
\begin{proposition} \label{mainprop}
    Consider a minimal electrical network $\mathcal{E}(G, w)$. Then, the following Plücker coordinate equality holds:
    \begin{equation*}
        \Delta_{I(F)}\Bigr(\tau \bigr(\Omega'(\mathcal{E})   \bigl)\Bigl)=\dfrac{1}{wt(\mathcal{G}_F)},
    \end{equation*}
    where $\mathcal{G}_F:=\mathcal{F}_I(\Pi_F)$ is the \textbf{minimal} grove corresponded to the minimal dimer  $\Pi_F,$ see Remark \ref{abouttwist}.
\end{proposition}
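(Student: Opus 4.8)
The plan is to combine two facts that are already available in the excerpt: the twist identity stated at the end of Remark \ref{abouttwist}, namely
$$\Delta_{I(F)}\bigl(\tau(A)\bigr)\,\mathrm{wt}(\Pi_F)=1$$
for the matrix $A$ representing the point $X\in\mathrm{Gr}_{\geq 0}(k,m)$ associated to a minimal Lam model, together with the weight-preserving bijection $\mathcal{F}_I$ from Theorem \ref{thgroves}. First I would specialize the Postnikov/Muller--Speyer setup to our situation: by Theorem \ref{connection}, the Lam model $N(\mathcal{E^T},\boldsymbol\omega)$ is minimal (since $\mathcal{E}$ is minimal and connected), and it represents the same point of $\mathrm{Gr}_{\geq 0}(n-1,2n)$ as $\Omega(\mathcal{E})$; by the last bullet of Theorem \ref{th: main_gr}, that point is the row space of $\Omega'(\mathcal{E})$, so we may take $A=\Omega'(\mathcal{E})$ as the representing $(n-1)\times 2n$ matrix. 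Thus Remark \ref{abouttwist} applies verbatim and gives
$$\Delta_{I(F)}\bigl(\tau(\Omega'(\mathcal{E}))\bigr)\,\mathrm{wt}(\Pi_F)=1,$$
where $\Pi_F\in\Pi\bigl(I(F)\bigr)$ is the minimal dimer attached to the face $F$.

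The second step is to translate $\mathrm{wt}(\Pi_F)$ into the weight of a grove. By Theorem \ref{thgroves}, for $I=I(F)$ with $|I|=n-1$ the bijection $\mathcal{F}_I$ sends each dimer $\Pi\in\Pi(I)$ on $N(\mathcal{E^T},\boldsymbol\omega)$ to a grove on $\mathcal{E}(G,w)$ concordant with $I$, and it preserves weights: $\mathrm{wt}(\Pi)=wt\bigl(\mathcal{F}_I(\Pi)\bigr)$. Applying this with $\Pi=\Pi_F$ and setting $\mathcal{G}_F:=\mathcal{F}_{I(F)}(\Pi_F)$ yields $\mathrm{wt}(\Pi_F)=wt(\mathcal{G}_F)$. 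Substituting into the twist identity gives
$$\Delta_{I(F)}\bigl(\tau(\Omega'(\mathcal{E}))\bigr)=\frac{1}{wt(\mathcal{G}_F)},$$
which is exactly the claimed equality, and $\mathcal{G}_F$ is "minimal" precisely because it is the image under the weight-preserving bijection of the minimal dimer $\Pi_F$ (minimality being a combinatorial property that transports along $\mathcal{F}_I$, or can simply be taken as the definition of a minimal grove).

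I expect the proof to be short, since both ingredients are quoted results; the only genuine content is checking that the hypotheses line up. The main obstacle — really the only place requiring care — is justifying that $\Omega'(\mathcal{E})$ is a legitimate choice of representing matrix $A$ in the Muller--Speyer twist formula, i.e. that the twist and the Scott labeling used in Theorem \ref{thabouttwit} for $N(\mathcal{E^T},\boldsymbol\omega)$ are computed with respect to the \emph{same} matrix whose minors are the Plücker coordinates $\Delta_I$ appearing in Theorem \ref{th: main_gr}. This is handled by Theorem \ref{connection} (same point of the Grassmannian) and the third bullet of Theorem \ref{th: main_gr} (Plücker coordinates of $\mathcal{L}(\mathcal{E})$ are the maximal minors of $\Omega'(\mathcal{E})$), so no new work is needed; one should, however, note explicitly that the face $F$ of $N(\mathcal{E^T},\boldsymbol\omega)$ and the label $I(F)$ are the ones produced by the Scott rule of Theorem \ref{labelrule}, so that $\Pi_F$ is well defined and lies in $\Pi\bigl(I(F)\bigr)$. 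With these remarks in place the displayed identity follows immediately.
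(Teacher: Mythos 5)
Your proposal is correct and follows exactly the route the paper intends: the statement is presented as an immediate consequence of the twist identity $\Delta_{I(F)}\bigl(\tau(A)\bigr)\,\mathrm{wt}(\Pi_F)=1$ from Remark \ref{abouttwist} (applied with $A=\Omega'(\mathcal{E})$, legitimized by Theorem \ref{connection} and Theorem \ref{th: main_gr}) combined with the weight-preserving bijection of Theorem \ref{thgroves}. Your added care about why $\Omega'(\mathcal{E})$ is the right representing matrix is a reasonable elaboration of what the paper leaves implicit, but it is not a different argument.
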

\begin{remark} \label{remaboutboundary}
    Moreover, if a face $F$ of $N(\mathcal{E^T}, \boldsymbol \omega)$ is a boundary face (i.e. a face which is bounded by the boundary circle of $N(\mathcal{E^T}, \boldsymbol \omega)$), then $\Pi_F$ is a unique dimer such  that $\Pi \in I_F(\Pi)$, see \cite{MSp}. Therefore,   we obtain  the following Plücker coordinate equality for the boundary face:
    \begin{equation}
        \Delta_{I(F)} \bigr(\Omega'(\mathcal{E})   \bigl)=wt(\mathcal{G}_F).
    \end{equation}
\end{remark}
 \begin{proposition} \label{mainprop1}
Combining Proposition \ref{mainprop} with Lemma \ref{mainlemma}, we conclude that the conductivity function of each minimal electrical network   can be recovered by weights of the minimal groves.     
 \end{proposition}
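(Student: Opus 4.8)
The plan is to derive the statement by substituting the Plücker-coordinate identity of Proposition \ref{mainprop} into the chamber-ansatz formulas of Lemma \ref{mainlemma}, and then to globalize using the propagation argument that already underlies Algorithm \ref{mainalg}. First I would recall that in Lemma \ref{mainlemma} every symbol $\Delta_i$ denotes the twisted Plücker coordinate $\Delta_{I(F_i)}\bigl(\tau(\Omega'(\mathcal{E}))\bigr)$, where $F_i$ runs over the faces of $N(\mathcal{E^T}, \boldsymbol \omega)$ labelled as in Figures \ref{figch1}--\ref{figch3}, and that Proposition \ref{mainprop} identifies each such coordinate with $1/wt(\mathcal{G}_{F_i})$, the reciprocal of the weight of the minimal grove $\mathcal{G}_{F_i}=\mathcal{F}_I(\Pi_{F_i})$. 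After the substitution the bridge formula \eqref{ch1} becomes
\begin{equation*}
w(e)=\frac{\Delta_1\Delta_3}{\Delta_2\Delta_4}=\frac{wt(\mathcal{G}_{F_2})\,wt(\mathcal{G}_{F_4})}{wt(\mathcal{G}_{F_1})\,wt(\mathcal{G}_{F_3})},
\end{equation*}
and \eqref{ch1'} becomes $w(e)=wt(\mathcal{G}_{F_5})/wt(\mathcal{G}_{F_1})$; the spike formulas \eqref{ch2} and \eqref{ch2'} become
\begin{equation*}
w(e)=\frac{wt(\mathcal{G}_{F_1})\,wt(\mathcal{G}_{F_3})}{wt(\mathcal{G}_{F_2})\,wt(\mathcal{G}_{F_4})},\qquad w(e)=\frac{wt(\mathcal{G}_{F_1})}{wt(\mathcal{G}_{F_5})};
\end{equation*}
and the adjacent-edge ratio \eqref{ch3} becomes
\begin{equation*}
\frac{w(e_1)}{w(e_2)}=\frac{wt(\mathcal{G}_{F_1})\,wt(\mathcal{G}_{F_3})}{wt(\mathcal{G}_{F_2})\,wt(\mathcal{G}_{F_4})}.
\end{equation*}
Thus each local quantity produced by Lemma \ref{mainlemma} is a Laurent monomial in the weights of the minimal groves attached to the faces of $N(\mathcal{E^T}, \boldsymbol \omega)$ surrounding the edge in question.

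Next I would globalize. By \cite{CIW} every minimal electrical network has at least one bridge or spike; recover its conductivity from \eqref{ch1'} or \eqref{ch2'} (hence as the ratio of two minimal grove weights), and propagate to all remaining edges through the ratio formula \eqref{ch3}, exactly as in the correctness proof of Algorithm \ref{mainalg}. This propagation is well-defined, it terminates, and it produces a uniquely determined conductivity, because $N(\mathcal{E^T}, \boldsymbol \omega)$ is a connected minimal Lam model (second part of Theorem \ref{connection}) and Theorem \ref{thblbopost} applies. Composing the steps, $w(e)$ for an arbitrary edge $e$ is a product of the monomials displayed above, i.e. a monomial ratio of weights of minimal groves, which is exactly the assertion of Proposition \ref{mainprop1}.

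Because the argument is essentially bookkeeping on top of Proposition \ref{mainprop}, Lemma \ref{mainlemma}, and the already-established Algorithm \ref{mainalg}, I do not expect a genuine obstacle; the one place that warrants care is the treatment of boundary faces. For a boundary face $F$ of $N(\mathcal{E^T}, \boldsymbol \omega)$, Remark \ref{remaboutboundary} records $\Delta_{I(F)}(\Omega'(\mathcal{E}))=wt(\mathcal{G}_F)$ for the \emph{untwisted} coordinate, whereas the substitution above invokes the \emph{twisted} value $\Delta_{I(F)}(\tau(\Omega'(\mathcal{E})))=1/wt(\mathcal{G}_F)$ from Proposition \ref{mainprop}; I would verify, using the construction of the minimal dimers $\Pi_F$ in Remark \ref{abouttwist} and the weight-preserving bijection $\mathcal{F}_I$ of Theorem \ref{thgroves}, that the faces $F_1,\dots,F_5$ appearing in Lemma \ref{mainlemma} are uniformly read as twisted-coordinate faces, so that the displayed replacements are the correct ones, and that the resulting $\mathcal{G}_{F_i}$ are honest spanning forests of $G$, so the monomials are nontrivial. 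These checks are routine and finish the proof.
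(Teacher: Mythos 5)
Your proposal is correct and follows essentially the same route as the paper: the proposition is stated as an immediate consequence of substituting the identity $\Delta_{I(F)}\bigl(\tau(\Omega'(\mathcal{E}))\bigr)=1/wt(\mathcal{G}_F)$ from Proposition \ref{mainprop} into the chamber-ansatz formulas of Lemma \ref{mainlemma}, and then propagating from a bridge or spike via Formula \eqref{ch3} exactly as in the correctness argument for Algorithm \ref{mainalg}. Your substituted formulas match the paper's worked example (e.g.\ $w(e_1)=wt(\mathcal{G}_{16})/wt(\mathcal{G}_{26})$ for the star network), and your extra check on boundary faces is harmless but not needed, since Proposition \ref{mainprop} applies uniformly to the twisted coordinates.
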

 Let us provide  an example: 
\begin{figure}[h!]
     \hspace*{-10mm}
     \includegraphics[scale=0.1]{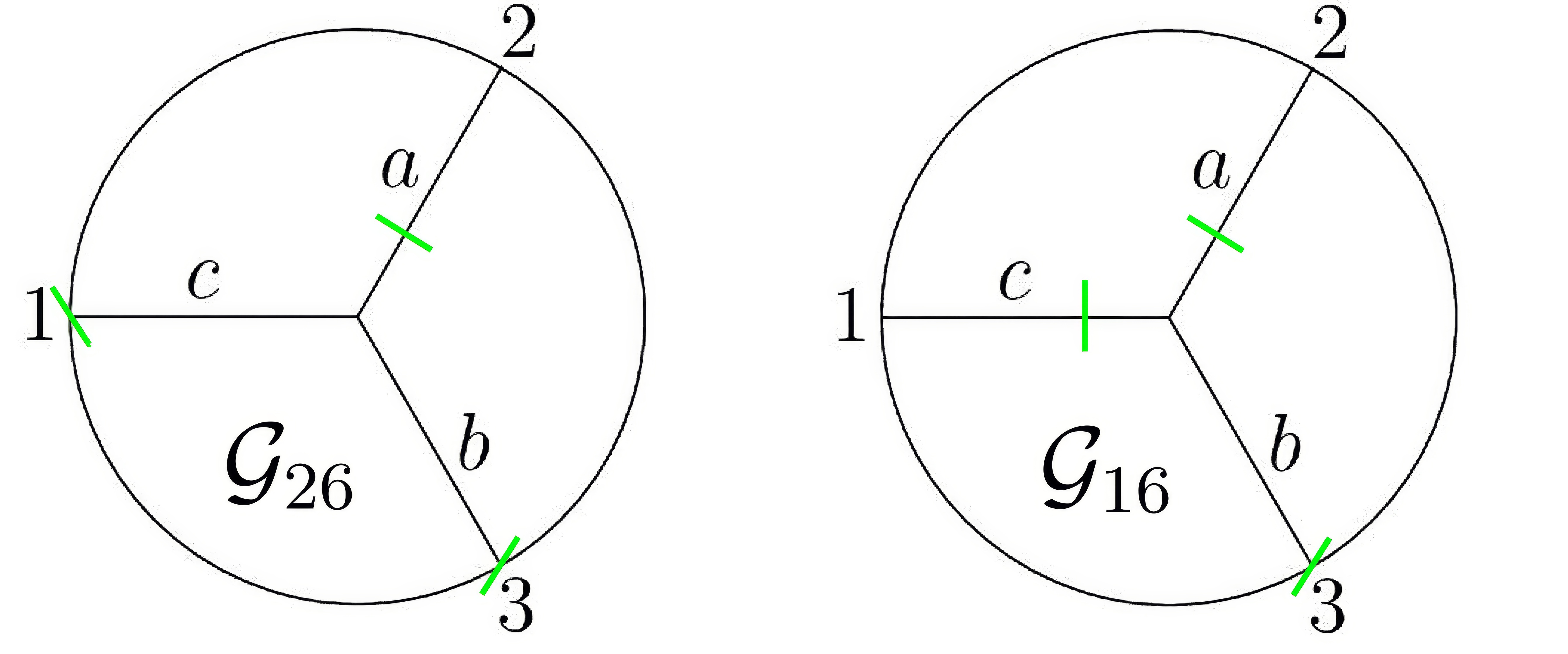}
     \caption{Minimal groves corresponding to faces labeled by $26$ and $16$} \label{fig:grove}
     \end{figure}
 \begin{example}
     Consider a network  $\mathcal{E}$ as it is shown in Fig. \ref{ex1}. Then, Formula \eqref{formtogrove} can be rewritten as follows:
     \begin{equation*}
         w(e_1)=\dfrac{\frac{1}{wt(\mathcal{G}_{26})}}{\frac{1}{wt(\mathcal{G}_{16})}}=\dfrac{\frac{1}{a}}{\frac{1}{ac}}=c.
     \end{equation*}
     
 \end{example}
 Proposition \ref{mainprop1} naturally leads to the following question:
\begin{problem}
    In \cite{MSp} minimal dimers $\Pi_F$ of a minimal  Lam model $N(\Gamma, \omega)$  were defined using strands of an oriented median graph $\Gamma_M$ of $N(\Gamma, \omega).$ Is there a similar approach for  constructing  minimal groves $\mathcal{G}_F$ of a minimal  electrical network $\mathcal{E}(G, w)$ with strands of  median graph $G_M$ of $\mathcal{E}(G, w)$?
\end{problem}
%Formula \ref{remaboutboundary} provides a determinant expression for calculating the  weights of minimal groves  $\mathcal{G}_F$  for boundary faces. Are there determinant  expressions which can be used for  calculating the weights of minimal groves for other faces of an arbitrary minimal electrical networks, the standard well-connected electrical networks (see \cite{CIW} for the definition), the  well-connected electrical networks described in Example \ref{well}? 

\subsection{Conductivity recovering via effective resistances}
In this section, we present a solution to an alternative formulation of Problem \ref{bl-box} that, beyond the motivations outlined in Section \ref{bl-box-sec}, also finds application in the theory of phylogenetic networks, as demonstrated in \cite{F1}, \cite{F2} and \cite{GK}. 
\begin{definition}
Let $\mathcal E(G, w) \in E_n$ be a connected  electrical network, and let the boundary voltages  $U = (U_1, \dots , U_n)$ be such that
\begin{equation*} \label{eq-resist}
    M_R(\mathcal E)U = -e_i + e_j,
\end{equation*}
 where $e_k, \ k \in \{1, \dots  , n\}$ is the  standard basis of $\mathbb{R}^n.$
 
Let us define the \textbf{effective resistance} $R_{ij}$ between  nodes $i$ and $j$ as follows $|U_i - U_j|:=R_{ij}.$   Effective resistances are well-defined and $R_{ij}=R_{ji}$. 

We will organize the effective resistances $R_{ij}$ in an  \textbf{effective resistance matrix}  $R_{\mathcal E}$ by setting $R_{ii}=0$ for all $i$.
\end{definition}
\begin{problem}  \label{bl-box-res}
    Consider an electrical network $\mathcal{E}(G, w)$ on a given graph $G.$ The conductivity function $w$ is required to be recovered by a known effective resistance matrix $R_{\mathcal E}$.
\end{problem}
The following theorem reveals the possibility of solving the described version of the black box problem:
\begin{theorem} \label{bl-box_th-res}	
The  conductivity function $w$ of a  minimal  electrical network $\mathcal{E}(G, w) \in E_n$ can be uniquely recovered by a known effective resistance matrix $R_{\mathcal E}$. 
\end{theorem}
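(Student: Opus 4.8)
The plan is to reduce Theorem \ref{bl-box_th-res} to the already established Theorem \ref{bl-box_th}: I would show that the effective resistance matrix $R_{\mathcal E}$ determines the response matrix $M_R(\mathcal E)$ through an explicit, manifestly invertible transformation, and then recover the conductivities from $M_R(\mathcal E)$ exactly as before — either by the classical Theorem \ref{bl-box_th}, or, staying inside the present framework, by applying Algorithm \ref{mainalg} to the point $\mathcal L(\mathcal E)$ built from $M_R(\mathcal E)$ via Theorem \ref{th: main_gr}.

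First I would record the precise relation between the two matrices. Since $\mathcal E$ is connected, the properties in Theorem \ref{aboutresp} show that $M_R(\mathcal E)$ is symmetric, positive semidefinite, with vanishing row sums and non-positive off-diagonal entries; hence it is the Laplacian of a connected weighted graph on $\{1,\dots,n\}$, of rank $n-1$, with kernel spanned by the all-ones vector $\mathbf 1$. The defining system $M_R(\mathcal E)\mathbf U=-e_i+e_j$ is consistent because $-e_i+e_j\perp\mathbf 1$, its solution is unique modulo $\mathbf 1$, so $R_{ij}=|U_i-U_j|$ is well defined and equals $(e_i-e_j)^{\top}M_R(\mathcal E)^{+}(e_i-e_j)$, where $M_R(\mathcal E)^{+}$ is the Moore--Penrose pseudoinverse. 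Expanding the quadratic form gives
\begin{equation*}
R_{ij}=\bigl(M_R(\mathcal E)^{+}\bigr)_{ii}+\bigl(M_R(\mathcal E)^{+}\bigr)_{jj}-2\bigl(M_R(\mathcal E)^{+}\bigr)_{ij}.
\end{equation*}

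Next I would invert this relation. Writing $J:=I-\tfrac1n\mathbf 1\mathbf 1^{\top}$ for the centering projector, conjugation by $J$ annihilates the two rank-one summands above (as $\mathbf 1^{\top}J=0$) and fixes $M_R(\mathcal E)^{+}$ (as $M_R(\mathcal E)^{+}\mathbf 1=0=\mathbf 1^{\top}M_R(\mathcal E)^{+}$, so $JM_R(\mathcal E)^{+}J=M_R(\mathcal E)^{+}$), whence $J\,R_{\mathcal E}\,J=-2\,M_R(\mathcal E)^{+}$ and therefore
\begin{equation*}
M_R(\mathcal E)=\Bigl(-\tfrac12\,J\,R_{\mathcal E}\,J\Bigr)^{+}.
\end{equation*}
Thus $R_{\mathcal E}$ determines $M_R(\mathcal E)$ uniquely and constructively; feeding this into Theorem \ref{bl-box_th} (equivalently, into Algorithm \ref{mainalg} through the matrix $\Omega(\mathcal E)$ of Theorem \ref{th: main_gr}) recovers $w$ uniquely and completes the proof. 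I would also note the alternative, more in the spirit of this paper: exhibit directly a $(n\times 2n)$ matrix whose entries are affine in the $R_{ij}$ and whose row space is the very point $\mathcal L(\mathcal E)$, so that Lemma \ref{mainlemma} applies without passing through $M_R(\mathcal E)$ at all.

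The hard part will be the identity $J\,R_{\mathcal E}\,J=-2\,M_R(\mathcal E)^{+}$ together with the bookkeeping that guarantees $\bigl(M_R(\mathcal E)^{+}\bigr)^{+}=M_R(\mathcal E)$ on the nose — this is exactly where connectedness of $\mathcal E$, hence $\mathrm{rank}\,M_R(\mathcal E)=n-1$ and $\ker M_R(\mathcal E)=\mathbb R\mathbf 1$, is essential, and I would isolate it as a lemma. If one instead follows the direct Grassmannian route, the corresponding obstacle is to verify that the matrix assembled from $R_{\mathcal E}$ has non-negative maximal minors and lies in the correct positroid cell, i.e. that it genuinely represents $\mathcal L(\mathcal E)$ rather than merely some point of $\mathrm{Gr}_{\ge 0}(n-1,2n)$.
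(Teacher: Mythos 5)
Your proposal is correct, and your primary route is genuinely different from the one the paper takes. The paper does not pass through $M_R(\mathcal E)$ at all: it invokes a result of \cite{BGGK} exhibiting an explicit matrix $\Omega_R(\mathcal E)$ whose entries are the second differences $m_{ij}=-\tfrac12(R_{i,j}+R_{i+1,j+1}-R_{i,j+1}-R_{i+1,j})$ of the resistance matrix and whose row space is exactly the point $\mathcal L(\mathcal E)$ of $\mathrm{Gr}_{\geq 0}(n-1,2n)$; it then runs the twist/chamber-ansatz algorithm (Algorithm \ref{mainalg}) on $\tau\bigl(\Omega_R'(\mathcal E)\bigr)$. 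This is precisely the ``alternative, more in the spirit of this paper'' that you sketch at the end, and the obstacle you correctly flag there --- verifying that the assembled matrix represents $\mathcal L(\mathcal E)$ and not merely some non-negative point --- is exactly the content of the cited theorem, which the paper takes as given rather than proves. Your main argument via $J R_{\mathcal E} J=-2M_R(\mathcal E)^{+}$ and $M_R(\mathcal E)=\bigl(-\tfrac12 J R_{\mathcal E} J\bigr)^{+}$ is sound: symmetry, non-positive off-diagonal entries and zero row sums from Theorem \ref{aboutresp} give positive semidefiniteness by diagonal dominance, and connectedness gives $\ker M_R(\mathcal E)=\mathbb R\mathbf 1$, so the pseudoinverse bookkeeping goes through and Theorem \ref{bl-box_th} finishes the job. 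What each approach buys: yours is self-contained modulo classical Laplacian/resistance-distance duality and needs no external Grassmannian input, while the paper's keeps everything affine in the $R_{ij}$, stays inside the positroid/twist machinery, and preserves the grove interpretation of the recovered Pl\"ucker coordinates --- at the price of citing the nontrivial identification $\Omega_R(\mathcal E)\sim\Omega(\mathcal E)$. Note also that connectedness must be assumed for $R_{\mathcal E}$ to be defined at all, so your isolating it as a hypothesis of the key lemma is appropriate rather than a defect.
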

We  provide a constructive proof of Theorem \ref{bl-box_th-res}, which   will be absolutely similar to the proof of  Theorem \ref{bl-box_th} due to the following result:
\begin{theorem} \textup{\cite{BGGK}}
Let $\mathcal E (G, w) \in E_n$ be a connected network. Then, using its effective resistance matrix $R_{\mathcal E}$, we define a point in $Gr(n-1,2n)$ associated with it as the row space of the matrix:
\begin{equation*} \label{eq:omega_n,r}
  \Omega_{R}(\mathcal E)=\left(\begin{matrix}
1 & m_{11} & 1 &  -m_{12} & 0 & m_{13} & 0 & \ldots  \\
0 & -m_{21} & 1 & m_{22} & 1 & -m_{23} & 0 & \ldots \\
0 & m_{31} & 0 & -m_{32} & 1 & m_{33} & 1 & \ldots \\
\vdots & \vdots & \vdots & \vdots & \vdots & \vdots &  \vdots & \ddots 
\end{matrix}\right),
\end{equation*}
where 
$$m_{ij}= -\frac{1}{2}(R_{i,j}+R_{i+1,j+1}-R_{i,j+1}-R_{i+1,j}).$$ 

Then, $\Omega_{R}(\mathcal E)$ defines the same point of $Gr_{\geq 0}(n-1,2n)$ as the point $\mathcal{L}(\mathcal{E}),$ see Theorem \ref{th: main_gr}. 
\end{theorem}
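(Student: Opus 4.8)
The plan is to show that the two $(n \times 2n)$ matrices $\Omega(\mathcal{E})$ and $\Omega_R(\mathcal{E})$ have the same row space inside $\mathrm{Gr}(n-1,2n)$, after which the claimed membership in $Gr_{\geq 0}(n-1,2n)$ and the identification with $\mathcal{L}(\mathcal{E})$ follow immediately from Theorem \ref{th: main_gr}. So everything reduces to a linear-algebra identity relating the response matrix $M_R(\mathcal{E}) = (x_{ij})$ and the matrix $(m_{ij})$ built from effective resistances, together with a bookkeeping argument about the extra columns of $1$'s, $0$'s and $\pm 1$'s that interleave the ``data'' columns.

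\medskip

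First I would recall the classical relationship between the response matrix and the effective resistance matrix. If $L := M_R(\mathcal{E})$ is the response matrix, then by definition $R_{ij} = (L^{+}e_i - L^{+}e_j, e_i - e_j)$ where $L^{+}$ is the Moore--Penrose pseudoinverse (the ground of the network corresponds to the kernel $\langle \mathbf{1}\rangle$ of $L$); equivalently $R_{ij} = L^{+}_{ii} + L^{+}_{jj} - 2L^{+}_{ij}$. From this one computes the second mixed difference
\begin{equation*}
m_{ij} = -\tfrac{1}{2}\bigl(R_{i,j} + R_{i+1,j+1} - R_{i,j+1} - R_{i+1,j}\bigr) = (L^{+})_{ij} - (L^{+})_{i+1,j} - (L^{+})_{i,j+1} + (L^{+})_{i+1,j+1},
\end{equation*}
so that $(m_{ij}) = D L^{+} D^{\mathsf{T}}$ where $D$ is the $(n \times n)$ bidiagonal ``discrete derivative'' matrix with $D_{ii} = 1$, $D_{i,i+1} = -1$ (indices mod $n$, so $D$ is the incidence-type circulant). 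The point is that $D L^{+} D^{\mathsf{T}}$ is, up to the kernel/cokernel of $L$, an inverse of $L$ in the appropriate sense: $L D^{\mathsf{T}}$ has rank $n-1$ with the complementary kernel, and the pseudoinverse bookkeeping shows $(m_{ij})$ plays the role of $L^{-1}$ on the quotient.

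\medskip

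Next, the core step: I would exhibit an explicit invertible $(2n \times 2n)$ column operation — better, I would directly produce a change of basis of the row space. The matrix $\Omega(\mathcal{E})$ has odd columns carrying the entries $\pm x_{ij}$ (with the chessboard sign $(-1)^{i+j}$) and even columns carrying the pattern $1$ in positions $i$ and $i-1$ (again a bidiagonal/circulant pattern, encoding $D$). The matrix $\Omega_R(\mathcal{E})$ is shifted by one column: its even columns carry $\pm m_{ij}$ and its odd columns carry the $1$-pattern, with an initial column $e_1$. The natural idea is that left-multiplying the ``data rows'' of $\Omega(\mathcal{E})$ by the pseudoinverse of the bidiagonal pattern converts the $x$-data into the $m$-data and simultaneously cyclically shifts the roles of odd/even columns; this is exactly the statement that both matrices parametrize the same point. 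Concretely I would: (i) write $\Omega(\mathcal{E}) = (\text{data} \mid D\text{-pattern})$ interleaved, with data block $= -S M_R(\mathcal{E})$ for a sign/circulant $S$; (ii) perform the row operation by $\widetilde{D}^{+}$ (the pseudoinverse of the relevant bidiagonal block) on the whole matrix, checking it preserves the row space because $\widetilde D$ acts invertibly on the $(n-1)$-dimensional relevant subspace; (iii) verify the data block transforms into $\pm$ the $m$-block and the $D$-pattern block transforms into the $1$-pattern in the shifted positions, with the leftover column becoming $e_1$. The sign conventions $(-1)^{i+j}$ and the mod-$n$ wraparound (the $(-1)^n$ in the corner of $\Omega$) are exactly what make the circulant $S$ and $D$ interact correctly.

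\medskip

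The main obstacle I anticipate is precisely this sign- and index-bookkeeping across the cyclic wraparound: the matrices are written ``$\cdots$'' with only three rows displayed, so one must pin down the exact $(2n \times 2n)$ pattern — in particular the last column of $\Omega(\mathcal{E})$ which carries $(-1)^n$, and the first column of $\Omega_R(\mathcal{E})$ which is $e_1$ — and confirm the column shift by one is consistent with the cyclic structure. A clean way to sidestep brute force is to invoke the combinatorial description: by Theorem \ref{thgroves1} (and its analogue for effective resistances, which is the content of \cite{BGGK}), the Plücker coordinates of both points are ratios of the same grove partition functions — the $R$-version of the grove formula expresses $\Delta_I(\Omega_R(\mathcal{E}))$ in terms of the very same spanning groves via the identity $R_{ij} = L^{+}_{ii} + L^{+}_{jj} - 2 L^{+}_{ij}$ and the matrix-tree theorem. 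Since two points of $\mathrm{Gr}(n-1,2n)$ with proportional Plücker vectors coincide, this gives the equality $\Omega_R(\mathcal{E}) \sim \mathcal{L}(\mathcal{E})$ without ever diagonalizing anything, and non-negativity is then inherited from Theorem \ref{th: main_gr}. I would present the linear-algebra argument as the main line and remark on the grove-theoretic shortcut as an alternative.
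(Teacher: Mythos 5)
The paper itself offers no proof of this statement --- it is imported verbatim from \cite{BGGK} --- so there is nothing internal to compare against; I am judging your argument on its own. Your starting point is correct and is the right one: from $M_R(\mathcal E)U=-e_i+e_j$ one gets $R_{ij}=L^{+}_{ii}+L^{+}_{jj}-2L^{+}_{ij}$ with $L=M_R(\mathcal E)$, the diagonal terms cancel in the second mixed difference, and $m_{ij}=d_i^{\mathsf T}L^{+}d_j$ with $d_i=e_i-e_{i+1}$. The overall strategy (exhibit a change of row basis carrying $\Omega(\mathcal E)$ to $\Omega_R(\mathcal E)$, then inherit non-negativity from Theorem \ref{th: main_gr}) is also the right one. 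But the decisive step --- your items (ii)--(iii) --- is left as a plan, and that is exactly where the content lies. When one actually parametrizes the two row spaces, a generic element of the row space of $\Omega(\mathcal E)$ has odd part $(-1)^j(Lu)_j$ and even part $(-1)^j(Du)_j$, while a generic element of the row space of $\Omega_R(\mathcal E)$ has odd part $c_{j-1}+c_j$ and even part $(-1)^j(M\epsilon c)_j$; matching them forces the \emph{partial-summation} substitution $c_j=(-1)^j\sum_{k\le j}(Lu)_k$, and the verification then rests on the three identities $\mathbf 1^{\mathsf T}L=0$, $L^{+}L=I-\tfrac1n\mathbf 1\mathbf 1^{\mathsf T}$ and $D\mathbf 1=0$ (via $D^{\mathsf T}T=I-e_1\mathbf 1^{\mathsf T}$ for the lower-triangular all-ones matrix $T$). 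This is not the same as ``row operation by the pseudoinverse of the bidiagonal block,'' and your description of $(m_{ij})$ as ``an inverse of $L$ on the quotient'' is only a heuristic for it; without carrying the substitution through, including the $(-1)^{i+j}$ signs and the wraparound column, the proof is not there. Relatedly, you never establish that the displayed $\Omega_R(\mathcal E)$ has row rank $n-1$: as printed, its odd columns form an upper-bidiagonal unit-determinant pattern, so the rank question hinges precisely on the cyclic wraparound convention in the undisplayed last row, which you flag as ``the main obstacle'' but do not resolve.

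The proposed shortcut via groves is circular as stated. Theorem \ref{thgroves1} computes the Pl\"ucker coordinates of $\Omega(\mathcal E)$, not of $\Omega_R(\mathcal E)$; asserting that the maximal minors of $\Omega_R(\mathcal E)$ are ``the same grove partition functions'' is equivalent to the theorem you are trying to prove, so it cannot be invoked as an alternative route unless you independently derive a grove expansion for the minors of $\Omega_R(\mathcal E)$ from the definition of $m_{ij}$ --- which you do not attempt. In short: right identity, right plan, but the proof consists of the parts you deferred.
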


\begin{example}
    Consider a network as it is shown in Fig. \ref{pl-tree}. Then, its effective resistance matrix has the form   
    \[R_\mathcal{E}=
\begin{pmatrix}
0& 3 & 3& 2 \\
3&  0& 2& 3\\
3& 2&  0& 3 \\
2&3&3&0
\end{pmatrix},
\]
%$(\mathcal E) \in Gr_{\geq 0}(n-1, 2n)$
and the matrix $\Omega'_{R}(\mathcal{E})$ obtained from  $\Omega_{R}(\mathcal{E})$ by deleting its last row has the form
\[\Omega'_R(\mathcal{E})=
\begin{pmatrix}
1& 3 & 1& 1 & 0 & -1& 0 &1 \\
0& 1 & 1& 2 & 1 & 1& 0 &0\\
0& -1 & 0& 1 & 1 & 3& 1 &1 
\end{pmatrix}.
\]
\end{example}
\begin{figure}[H]
    \centering
    \includegraphics[width=0.4\textwidth]{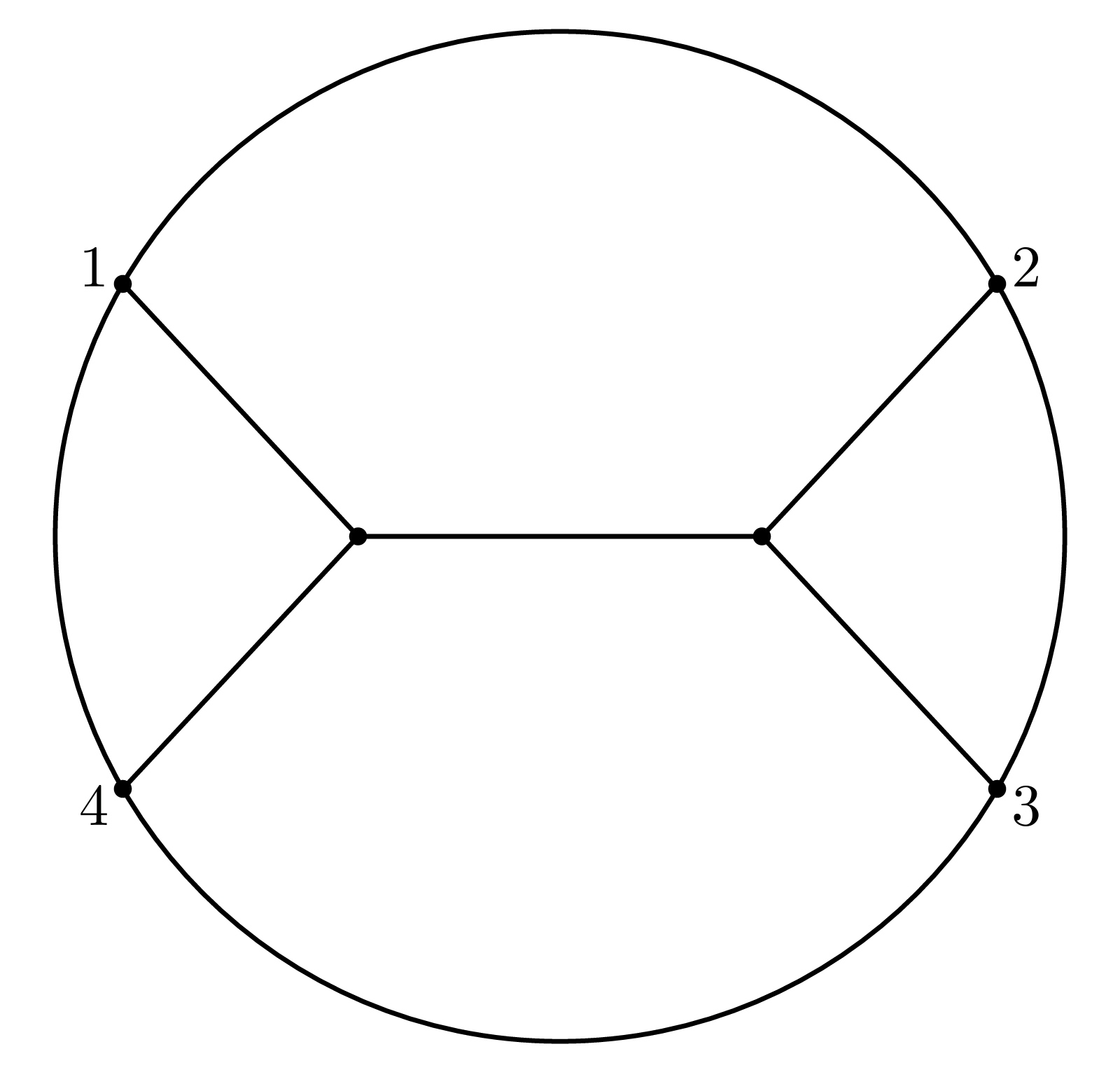}
    \caption{All conductivities are equal to $1$ }
    \label{pl-tree}
\end{figure}
Since the matrices $\Omega_R'(\mathcal{E})$ and $\Omega'(\mathcal{E})$ define the same point in $Gr_{\geq 0}(n-1, 2n)$, we immediately derive the following algorithm for solving Problem \ref{bl-box-res}:
\begin{algorithm} \label{mainalg-res}
\hskip 1pt
\begin{itemize}
    \item By a given effective resistance matrix $R_\mathcal{E}$ construct a matrix $\Omega_R'(\mathcal{E} )$;
    \item  Calculate $\tau\bigl(\Omega_R'(\mathcal{E})\bigr);$
    \item Do the last two steps of Algorithm \ref{mainalg}.
\end{itemize}
\end{algorithm}
\section{The black box problem as the   Berenstein-Fomin-Zelevinsky problem} \label{luzsec}
In this section, we discuss the connection between  Problem \ref{bl-box} and the classical theory of  totally non-negative matrices. To achieve this goal, we primarily describe an embedding of electrical networks into the positive part of the Lagrangian Grassmannian.
\begin{lemma} \textup{\cite{BGKT}}
Let us define the subspace $$V=\{v\in \mathbb{R}^{2n}| \sum_{i=1}^n(-1)^i v_{2i}=0\, \sum_{i=1}^n (-1)^iv_{2i-1}=0\}. $$  
Then,  the rows of  the matrix $\Omega(\mathcal{E})$  belong to the subspace $V$. Thus  the matrix $\Omega(\mathcal{E})$   defines a point of $Gr(n-1, V).$
\end{lemma}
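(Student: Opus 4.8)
The plan is to verify directly that each of the $n$ rows of $\Omega(\mathcal{E})$ satisfies the two linear equations that cut out $V$, namely $\sum_{i=1}^{n}(-1)^i v_{2i-1}=0$ and $\sum_{i=1}^{n}(-1)^i v_{2i}=0$. The crucial preliminary is to read off the entries of $\Omega(\mathcal{E})$ from Theorem \ref{th: main_gr}: in the $k$-th row the odd column $2j-1$ carries the signed response entry $(-1)^{k+j}x_{kj}$, while among the even columns the only nonzero entries are the two $1$'s in columns $2(k-1)$ and $2k$ (indices taken modulo $2n$), with the single wrap-around correction that the $k=1$ row carries $(-1)^n$ instead of $1$ in column $2n$.

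First I would dispatch the odd-coordinate condition. For the $k$-th row,
\[
\sum_{j=1}^{n}(-1)^j\,\Omega(\mathcal{E})_{k,2j-1}=\sum_{j=1}^{n}(-1)^j(-1)^{k+j}x_{kj}=(-1)^k\sum_{j=1}^{n}x_{kj}=0,
\]
the last equality being exactly the vanishing of every row sum of the response matrix $M_R(\mathcal{E})$ guaranteed by Theorem \ref{aboutresp}. Hence the alternating sum of the odd coordinates vanishes on every row.

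Next I would treat the even-coordinate condition, which is purely combinatorial. For a row $k$ with $2\le k\le n$ the only nonzero even-column entries are the two $1$'s in columns $2(k-1)$ and $2k$, so the alternating sum equals $(-1)^{k-1}+(-1)^{k}=0$. For $k=1$ the nonzero even-column entries are the $1$ in column $2$ and the $(-1)^n$ in column $2n$, giving $(-1)^{1}\cdot 1+(-1)^{n}\cdot(-1)^{n}=-1+1=0$; the corner sign $(-1)^n$ is precisely what makes the cyclic bookkeeping close up. Thus $\sum_{i=1}^n(-1)^i v_{2i}=0$ on every row as well, and so all $n$ rows of $\Omega(\mathcal{E})$ lie in $V$.

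Finally I would conclude: by Theorem \ref{th: main_gr} the row space of $\Omega(\mathcal{E})$ is $(n-1)$-dimensional, and since it is contained in $V$ it is an $(n-1)$-dimensional subspace of $V$ (in particular $\dim V = 2n-2\ge n-1$, as the two defining equations involve disjoint coordinate sets and are independent), i.e.\ it is a point of $Gr(n-1,V)$. I do not expect any real obstacle here; the statement is a direct computation, and the only place requiring care is the sign bookkeeping in the even columns, in particular remembering the $(-1)^n$ correction in the $(1,2n)$ entry.
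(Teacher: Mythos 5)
Your proof is correct: the alternating odd-column sum reduces to $(-1)^k\sum_j x_{kj}=0$ by the zero-row-sum property of $M_R(\mathcal{E})$ from Theorem \ref{aboutresp}, and the even-column bookkeeping (including the $(-1)^n$ corner entry, which contributes $(-1)^n\cdot(-1)^n=1$ against the $-1$ from column $2$) closes up exactly as you say. The paper itself states this lemma as a citation to \cite{BGKT} without reproducing a proof, and your direct row-by-row verification is precisely the standard argument one would give.
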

By choosing a particular basis for the subspace $V$, we obtain an embedding of electrical networks into a Lagrangian Grassmannian $LG(n-1, V)$ with respect to a symplectic form that depends explicitly on the chosen basis of $V$. Moreover, specific choices (see  \cite{BGKT}, \cite{Tal}) of basis $V$ allow us to construct  embeddings of electrical networks into the positive part of the Lagrangian Grassmannians $LG_{\geq 0}(n-1, V):=LG(n-1, V) \cap Gr_{\geq 0}(n-1, V)$:
\begin{theorem} \textup{\cite{GoT},\cite{Tal}} \label{main-pos-matr}
When $n$ is odd, there exists a special basis of the subspace $V$ such that expanding the rows of the matrix $\Omega(\mathcal{E})$ with respect to this basis gives us a matrix $\overline{\Omega}(\mathcal{E})$ that defines a point of  $ LG_{\geq 0}(n-1, V)$ with  respect to the non-degenerate symplectic form:
\begin{equation*}
		\Lambda'_{2n-2} = \left(
		\begin{array}{cc}
			\Lambda_{n}& 0  \\
			0& -\Lambda_{n}\\
		\end{array}
		\right),
	\end{equation*}
    where $\Lambda_{n}$ is defined as in \eqref{form}.
\end{theorem}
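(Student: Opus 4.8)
The plan is to realize the point $\mathcal{L}(\mathcal{E})$ as a \emph{Lagrangian} subspace for a concretely normalized symplectic form and then to check that the normalization does not destroy positivity. First I would fix the $2n\times 2n$ skew-symmetric form assembled from $\Lambda_n$ of \eqref{form} (as in \cite{BGKT}) and compute its radical: the key claim, and the place where the hypothesis is used, is that \emph{precisely when $n$ is odd} this radical equals the two-dimensional span of the alternating vectors $c_1=\sum_i(-1)^ie_{2i}$ and $c_2=\sum_i(-1)^ie_{2i-1}$ cutting out $V$. Since $\langle c_1,c_1\rangle=\langle c_2,c_2\rangle=n\neq 0$ and $c_1\perp c_2$, the subspace $V$ is a complement to this radical, so the form restricts to a \emph{non-degenerate} symplectic form $\Lambda_V$ on the $(2n-2)$-dimensional space $V$. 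Next, by the lemma preceding the statement (and \cite{BGKT}) the rows of $\Omega(\mathcal{E})$ lie in $V$ and are pairwise $\Lambda_V$-orthogonal, so the row space of $\Omega(\mathcal{E})$ is isotropic of dimension $n-1=\tfrac12\dim V$, hence Lagrangian; thus $\mathcal{L}(\mathcal{E})$ already defines a point of $LG(n-1,V)$ for any basis of $V$.

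What remains is to choose the \emph{special basis} realizing the block form $\Lambda'_{2n-2}=\mathrm{diag}(\Lambda_n,-\Lambda_n)$. Here I would split $\mathbb{R}^{2n}$ into its odd-indexed and even-indexed coordinate blocks, each a copy of $\mathbb{R}^n$, so that $V=V_{\mathrm{odd}}\oplus V_{\mathrm{even}}$ with each summand $(n-1)$-dimensional and $\Lambda_V$ block-diagonal, given on the two blocks by $\Lambda_n$ and $-\Lambda_n$ restricted to the respective $(n-1)$-dimensional subspaces. In each block one takes the basis induced by the consecutive standard vectors modulo the one-dimensional kernel direction of $\Lambda_n$ — this is the choice of \cite{GoT},\cite{Tal} — and a direct computation of the pairings shows that the Gram matrix of $\Lambda_V$ in this basis is exactly $\Lambda'_{2n-2}$. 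Writing $\overline{\Omega}(\mathcal{E})$ for the matrix of $\mathcal{L}(\mathcal{E})$ in the resulting basis of $V$ then gives a point of $LG(n-1,V)$ with respect to $\Lambda'_{2n-2}$.

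Finally, positivity: by Theorem \ref{th: main_gr} the point $\mathcal{L}(\mathcal{E})$ lies in $Gr_{\geq 0}(n-1,2n)$, and one must verify that passing to the special basis keeps all maximal minors non-negative — i.e. the maximal minors of $\overline{\Omega}(\mathcal{E})$ are non-negative combinations of those of $\Omega'(\mathcal{E})$, equivalently the transition between the standard coordinates and the special basis is realized by positivity-preserving column operations. Granting this, $\overline{\Omega}(\mathcal{E})\in Gr_{\geq 0}$, and combined with the previous paragraph we get $\mathcal{L}(\mathcal{E})\in LG_{\geq 0}(n-1,V)$. I expect the main obstacle to be exactly this last point together with the basis construction: a generic linear change of coordinates ruins total non-negativity, so the basis must be engineered so that the Darboux normalization of $\Lambda_V$ and the preservation of the positive cone hold \emph{at the same time}, and checking that this is possible only for odd $n$ (tying it back to the radical computation) is the real content — which is why one ultimately leans on the explicit constructions of \cite{GoT} and \cite{Tal}.
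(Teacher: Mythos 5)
The paper does not prove Theorem \ref{main-pos-matr}: it is imported verbatim from \cite{GoT} and \cite{Tal}, so there is no internal argument to measure your proposal against, and it has to stand on its own. Its skeleton is reasonable --- restrict an ambient degenerate skew form on $\mathbb{R}^{2n}$ to $V$, check non-degeneracy there, show the row space of $\Omega(\mathcal{E})$ is isotropic of half the dimension of $V$, then choose a basis whose Gram matrix is $\Lambda'_{2n-2}$ and check positivity survives. The non-degeneracy step is fine as you set it up: the Euclidean Gram matrix of $c_1,c_2$ is $nI_2$, so $\mathrm{span}(c_1,c_2)\cap V=0$ and the form restricts non-degenerately to $V$, \emph{provided} the radical really is $\mathrm{span}(c_1,c_2)$ exactly when $n$ is odd --- a computation you name as ``the key claim'' but do not perform.

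The two steps that carry the actual content are, however, asserted rather than proved. First, the isotropy of the rows of $\Omega(\mathcal{E})$ under the chosen form is precisely where the symmetry $x_{ij}=x_{ji}$ of the response matrix must enter; the lemma you invoke only says the rows lie in $V$, so ``the rows are pairwise $\Lambda_V$-orthogonal'' is a restatement of what needs proving, not a consequence of anything you have established. Second, the positivity of $\overline{\Omega}(\mathcal{E})$ is introduced with ``Granting this'': you correctly observe that a generic change of basis destroys total non-negativity, but you exhibit neither the basis nor a mechanism (a totally positive transition matrix, or an interpretation of the new maximal minors as non-negative combinations of grove weights) --- and in \cite{GoT} and \cite{Tal} this is exactly where the work lies, via the explicit vertex-model computation. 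There is also a telling slip in the basis construction: $\Lambda_n$ of \eqref{form} is the non-degenerate $(n-1)\times(n-1)$ matrix (non-degenerate precisely because $n$ is odd), so it has no ``one-dimensional kernel direction'' to quotient by; you appear to conflate it with the $n\times n$ tridiagonal form on a parity block of coordinates, which suggests the advertised ``direct computation of the pairings'' was not carried out. As written, the proposal reduces the theorem to the statements it is supposed to establish.
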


   An electrical network is called \textbf{well-connected} if it is equivalent to the \textbf{standard} network, see \cite{K} and Fig. \ref{fig:standart}.

\begin{theorem} \textup{\cite{GoT},\cite{Tal}} \label{main-pos-matr1}
 Consider well-connected electrical networks with  an odd number of boundary nodes. Then, the matrix $\overline{\Omega}(\mathcal{E})$ and a matrix of the form described below represent the same point in $LG_{\geq 0}(n-1, V)$:
  \begin{equation*}
		\bigl(\overline{\mathrm{Id}},A(\mathcal{E})\bigr) = \left(
	\begin{array}{ccccccccc}
			0& \dots & 0 & 0  & 1 &a_{11} & a_{12} & a_{13} & \dots \\
			0& \dots & 0 & -1  & 0 & a_{21} & a_{22} & a_{23} & \dots \\
            0& \dots & 1 & 0  & 0 & a_{31} & a_{32} & a_{33} & \dots \\
	\vdots&\vdots&\vdots&\ddots&\vdots&\vdots& \vdots &  \ddots &\vdots
		\end{array}
		\right), \ \overline{Id} \in \mathrm{Mat}_{n-1 \times n-1}(\mathbb{R}).
	\end{equation*}
\end{theorem}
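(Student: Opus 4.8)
The plan is to separate the claim into a general linear‑algebra reduction and one genuinely electrical point, carrying along the hypothesis that $n$ is odd, which is already needed for $\overline{\Omega}(\mathcal E)$ to be defined (Theorem~\ref{main-pos-matr}). Producing a matrix of the displayed shape is just a change of representative inside the Grassmannian. The left $(n-1)\times(n-1)$ block $\overline{\mathrm{Id}}$ of the target matrix is an anti‑diagonal matrix with entries $\pm1$, so $\det\overline{\mathrm{Id}}=\pm1\neq0$. Hence a point $X\in Gr(n-1,2n-2)$ admits a (then unique) representative whose first $n-1$ columns equal $\overline{\mathrm{Id}}$ exactly when the maximal minor of $X$ on the column set $\{1,\dots,n-1\}$ is non‑zero; when this holds, the representative is $\overline{\mathrm{Id}}\,B^{-1}\,\overline{\Omega}(\mathcal E)$, where $B$ is the first $(n-1)\times(n-1)$ block of $\overline{\Omega}(\mathcal E)$. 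Left multiplication by an invertible matrix changes neither the row space nor, up to a global scalar, the vector of maximal minors, so $(\overline{\mathrm{Id}},A(\mathcal E))$ represents the same point of $LG(n-1,V)$, which by Theorem~\ref{main-pos-matr} lies in $LG_{\geq0}(n-1,V)$; in particular the symplectic and total non‑negativity conditions hold automatically for the new representative, being properties of the unchanged row space, so no separate check on $A(\mathcal E)$ is required. Thus the whole statement reduces to
\[
\Delta_{\{1,\dots,n-1\}}\bigl(\overline{\Omega}(\mathcal E)\bigr)\neq0\qquad\text{for every well-connected }\mathcal E,
\]
and this is exactly where well‑connectedness enters: for a non‑well‑connected network $\mathcal L(\mathcal E)$ sits in a smaller cell on which this minor may vanish.

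Second, I would establish this non‑vanishing. By Theorem~\ref{main-pos-matr} the minor is automatically $\geq0$, so only strict positivity is at stake. Since the response matrix is an invariant of the equivalence class (Theorem~\ref{gen_el_1}) and $\overline{\Omega}(\mathcal E)$ is a deterministic function of $M_R(\mathcal E)$, while a well‑connected network is by definition equivalent to a standard network, it suffices to treat networks supported on the standard graph $G_{\mathrm{std}}$, where $\Delta_{\{1,\dots,n-1\}}(\overline{\Omega}(\mathcal E))$ is a fixed function of the positive conductivities. One route re‑expands $\overline{\Omega}(\mathcal E)$ in the original $2n$ coordinates: then $\Delta_{\{1,\dots,n-1\}}(\overline{\Omega}(\mathcal E))$ becomes a fixed linear combination of the Plücker coordinates $\Delta_I(\Omega(\mathcal E))$, each of which by Theorem~\ref{thgroves1} equals $L_{unc}^{-1}\sum_{\mathcal G\,|\,I}\mathrm{wt}(\mathcal G)\geq0$ and is strictly positive precisely when some grove is concordant with $I$; on the standard graph every admissible boundary connectivity is realised by a grove, so it remains to check that the index sets occurring in the combination lie in the positroid of the top‑dimensional electrical cell (read off, e.g., from the strand diagram of the Lam model of Definition~\ref{temp_gen}). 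An alternative, self‑contained route avoids the change of basis: $\Omega(\mathcal E)$, and hence $\overline{\Omega}(\mathcal E)$, has an explicit banded shape coming from the response matrix of a standard network, and a Laplace expansion along the columns of $\overline{\mathrm{Id}}$ — or an induction on $n$ using the recursive spike/bridge structure of $G_{\mathrm{std}}$ — should exhibit $\Delta_{\{1,\dots,n-1\}}(\overline{\Omega}(\mathcal E))$ as a non‑zero, indeed positive, polynomial in the conductivities.

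The hard part is precisely this last step: either pinning down the change of basis of $V$ together with the positroid of the top cell and checking that the resulting combination of grove‑weighted Plücker coordinates cannot cancel, or else carrying out the direct determinant evaluation on $G_{\mathrm{std}}$ for arbitrary positive conductivities. Once the reduced form is in hand, the block structure $\Lambda'_{2n-2}=\mathrm{diag}(\Lambda_n,-\Lambda_n)$ forces $A(\mathcal E)$ to obey a twisted orthogonality relation; this is the bridge to totally non‑negative matrices developed in the rest of the section, but it goes beyond the statement proved here.
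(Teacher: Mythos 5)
This is one of the statements the paper quotes from \cite{GoT} and \cite{Tal} without reproducing a proof, so there is no in-paper argument to compare against; I can only assess your sketch on its own terms. Your first reduction is sound and cleanly stated: since $\overline{\mathrm{Id}}$ is invertible, producing a representative of the displayed shape is ordinary row reduction inside the Grassmannian, it preserves the row space and rescales all maximal minors by one global factor, and therefore the whole theorem collapses to the single claim $\Delta_{\{1,\dots,n-1\}}\bigl(\overline{\Omega}(\mathcal E)\bigr)\neq 0$ for well-connected $\mathcal E$. (One small imprecision: $\overline{\Omega}(\mathcal E)$ as constructed from the $n\times 2n$ matrix $\Omega(\mathcal E)$ has $n$ rows and rank $n-1$, so a redundant row must be discarded before your block $B$ is a square matrix; this is cosmetic.)

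The genuine gap is that the one step carrying all the content — strict positivity of that leading minor — is not proved; you explicitly defer it as ``the hard part'' and offer two unexecuted routes. Your first route is also not as routine as the sketch suggests: a minor of $\overline{\Omega}(\mathcal E)$ in the $V$-basis is a linear combination of the $\Delta_I\bigl(\Omega(\mathcal E)\bigr)$ whose coefficients are minors of the change-of-basis matrix and need not be of one sign, so non-negativity of the grove sums does not by itself exclude cancellation; you would genuinely have to pin down the special basis of Theorem~\ref{main-pos-matr} and the positroid of the top electrical cell. The route actually taken in the cited sources (and reflected in Theorem~\ref{main-pos-matr2} and the decomposition \eqref{prod-dec}) is the reverse of yours: one reduces to the standard network and \emph{constructs} $A(\mathcal E)$ explicitly as a product of the unipotent generators $u_{i,i+1}(t_{ij})$ with parameters given by the conductivities, so that the representative $\bigl(\overline{\mathrm{Id}},A(\mathcal E)\bigr)$ is exhibited directly and its existence (hence the non-vanishing of the leading minor) never has to be argued separately. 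That constructive approach simultaneously yields the total positivity of $A(\mathcal E)$, which your argument would still have to extract afterwards. As it stands, your proposal is a correct framing plus an honest statement of what remains to be done, not a proof.
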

\begin{definition}
An invertible matrix $A \in \mathrm{Mat}_{n \times n}(\mathbb{R})$ is called    \textbf{totally non-negative (positive)} if all its minors are non-negative (positive).    
\end{definition}

\begin{theorem} \textup{\cite{GoT},\cite{Tal}} \label{main-pos-matr2}
Consider   a well-connected network $\mathcal{E}$ with an odd number of boundary nodes and a standard network $\mathcal{E'}$ that is equivalent to $\mathcal{E}$.  Then, the matrix $A(\mathcal{E})=(a_{ij}) \in \mathrm{Mat}_{n-1 \times n-1}(\mathbb{R})$ defined below is a totally positive symplectic matrix  with  respect to the skew-symmetric non-degenerate  form:
   \begin{equation} \label{form}
		\Lambda_n = \left(
	\begin{array}{ccccccccc}
			0& 1 & 0  & 0  & \dots   \\
			-1& 0 & -1  & 0 & \dots  \\
            0& 1 & 0  & 1 & \dots & \\
			\vdots&\vdots&\vdots&\ddots&\vdots
		\end{array}
		\right),  \ A(\mathcal{E})\Lambda_nA(\mathcal{E})^t=\Lambda_n.
	\end{equation}

 Moreover, the matrix $A(\mathcal{E})$ can be decomposed into the product of the elementary totally non-negative symplectic generators  $u_{i, i+1}(t)=E+t(E_{i+1, i}+E_{i-1, i})$:
   \begin{equation} \label{prod-dec}
    A(\mathcal{E})=\prod \limits_{1 \leq i <j \leq n}u_{j-i, j-i+1}(t_{ij}),  
 \end{equation}
 where  $E$ is the identity matrix, $E_{ij}$ is the matrix unity and $E_{n, n-1}=E_{0, 1}=0$; the product is taken over all pairs $(i,j)$ ordered in reverse lexicographical order; and $$t_{ij}=w_{ij}^{(-1)^{i+j}},$$ here $w_{ij}$ is a conductivity of  an edge of the network $\mathcal{E'}$ lying in intersection strands numbered $i$ and $j$, see Fig. \ref{numst}.
%where; each parameter  $t$ represents as   either a  conductivity or a resistance; . 
\end{theorem}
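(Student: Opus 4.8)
The plan is to reduce Theorem \ref{main-pos-matr2} to three essentially independent assertions: (i) that $A(\mathcal{E})$ is symplectic with respect to $\Lambda_n$; (ii) that it is totally positive; and (iii) that the factorization \eqref{prod-dec} holds with the stated parameters $t_{ij}=w_{ij}^{(-1)^{i+j}}$. For (i), I would argue that the symplectic condition is inherited from the ambient structure: by Theorem \ref{main-pos-matr}, the matrix $\overline{\Omega}(\mathcal{E})$ represents a point of $LG_{\geq 0}(n-1,V)$ with respect to the block form $\Lambda'_{2n-2}$, and by Theorem \ref{main-pos-matr1} this point is represented in the normalized chart $(\overline{\mathrm{Id}},A(\mathcal{E}))$. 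A Lagrangian subspace written as the graph of $A(\mathcal{E})$ relative to the splitting induced by $\Lambda'_{2n-2}=\mathrm{diag}(\Lambda_n,-\Lambda_n)$ is isotropic precisely when $A(\mathcal{E})\Lambda_n A(\mathcal{E})^t=\Lambda_n$; this is a routine unwinding of the bilinear-form bookkeeping, so I would state it and refer to \cite{BGKT}, \cite{Tal} for the basis conventions.

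For (iii), the heart of the matter, I would proceed by direct computation on the standard network $\mathcal{E}'$. The key observation is that in a standard network the conductivities are naturally indexed by pairs $(i,j)$ of strands that cross exactly once, and the electrical response can be built up strand by strand. I would set up the dictionary between the generalized Temperley/Lam-model picture (or the direct response-matrix computation of \cite{BGKT}) and the elementary generators $u_{i,i+1}(t)=E+t(E_{i+1,i}+E_{i-1,i})$, showing that adjoining one more crossing of strands $i$ and $j$ corresponds exactly to right-multiplication by $u_{j-i,j-i+1}(t_{ij})$. The sign pattern $t_{ij}=w_{ij}^{(-1)^{i+j}}$ should emerge from the alternating signs built into $\Omega(\mathcal{E})$ and into the basis of $V$ chosen in Theorem \ref{main-pos-matr}; I would track this carefully since it is the most error-prone bookkeeping. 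The reverse-lexicographic order of the product is forced by the geometry of how strands in a standard network are nested, and I would justify it by an induction on $n$, peeling off the outermost strand (equivalently, a bridge or spike of $\mathcal{E}'$) and invoking Theorem \ref{main-pos-matr1} for the smaller network obtained after boundary reduction.

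Finally, (ii) is almost a corollary of (iii): each generator $u_{i,i+1}(t)$ with $t>0$ is totally non-negative (it is one of the Chevalley-type generators for the symplectic group), and a product of totally non-negative matrices is totally non-negative; total \emph{positivity} then follows because the well-connected hypothesis guarantees all $t_{ij}=w_{ij}^{(-1)^{i+j}}>0$ and the standard network realizes the top cell, so by the Loewner--Whitney-type theorem for the symplectic group (or the positroid-cell description of $LG_{\geq 0}$ coming from Postnikov theory, Theorem \ref{l2}) the full factorization lands in the totally positive stratum. I would phrase this last step as: the image of the open positroid cell under the chart map of Theorem \ref{main-pos-matr1} is exactly the set of totally positive symplectic matrices, so $A(\mathcal{E})$ is totally positive iff $\mathcal{E}$ is well-connected.

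\medskip

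\noindent\emph{Main obstacle.} The genuinely hard part is step (iii): establishing the precise correspondence between a single strand crossing in the standard network and a single elementary generator, \emph{with the exact exponent} $(-1)^{i+j}$ on $w_{ij}$ and \emph{with the exact ordering} of the product. Both the sign and the order are dictated by conventions threaded through several earlier results (the form of $\Omega(\mathcal{E})$ in Theorem \ref{th: main_gr}, the choice of basis of $V$ in Theorem \ref{main-pos-matr}, and the Temperley trick of Definition \ref{temp_gen}), so the proof will require pinning all of these down simultaneously and verifying compatibility — most cleanly, I expect, by an induction on the number of boundary nodes that simultaneously carries the factorization, the symplectic identity, and the total positivity.
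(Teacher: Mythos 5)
The paper does not prove Theorem \ref{main-pos-matr2}: it is imported verbatim from \cite{GoT} and \cite{Tal}, and the only internal hint at its mechanism is the Remark immediately following it, which says the factorization \eqref{prod-dec} is compatible with the bridge--spike decomposition of \cite[Section 8.1]{BGKT}. Your outline is consistent with that route: part (iii), building $A(\mathcal{E})$ one strand crossing at a time, is precisely the statement that adjoining a boundary spike or bridge to the network acts on $\Omega(\mathcal{E})$ by an elementary symplectic generator, and the exponent $(-1)^{i+j}$ records whether the edge at the crossing of strands $i$ and $j$ is a spike or a bridge (one type contributes $w$, the other $w^{-1}$). So the strategy matches the cited sources. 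Part (i) is also correct as a reduction, but note the one computation you elide: isotropy of the row span of $(\overline{\mathrm{Id}},A)$ with respect to $\mathrm{diag}(\Lambda_{n-1},-\Lambda_{n-1})$ gives $\overline{\mathrm{Id}}\,\Lambda_{n-1}\,\overline{\mathrm{Id}}^{\,t}=A\Lambda_{n-1}A^{t}$, so you must verify that the signed antidiagonal $\overline{\mathrm{Id}}$ preserves $\Lambda_{n-1}$ exactly (not up to sign); this is where the parity hypothesis on $n$ and the alternating signs in $\overline{\mathrm{Id}}$ enter, and it should be stated rather than absorbed into ``routine bookkeeping.''

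The one place your sketch is genuinely too thin is (ii). ``Each $u_{i,i+1}(t)$ is TNN and a product of TNN matrices is TNN'' only yields total \emph{non-negativity}; to get total \emph{positivity} you must show that expanding \eqref{prod-dec} via $u_{1,2}(t)=y_1(t)$, $u_{i,i+1}(t)=y_{i+1}(t)x_{i-1}(t)$, $u_{n,n+1}(t)=x_{n-2}(t)$ produces a word in the Jacobi generators that contains reduced words for the longest permutation in both the lower and the upper triangular parts, with all parameters strictly positive. The product in \eqref{prod-dec} interleaves $x$'s and $y$'s, so it is not in the Whitney normal form $y\cdots y\cdot d\cdot x\cdots x$ quoted in Theorem \ref{aboutmatri}; you need either the commutation relations to bring it to that form or the Berenstein--Fomin--Zelevinsky theory of double reduced words to conclude positivity from the shape of the word. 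Your closing assertion that ``the image of the open positroid cell under the chart map is exactly the set of totally positive symplectic matrices'' is plausible but is itself a theorem of comparable depth to the one being proved, so it cannot be invoked as a black box here. With that step supplied, the proposal is a faithful reconstruction of the argument in the cited references.
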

\begin{remark}
     The decomposition \eqref{prod-dec} is compatible with the bridge-spike decomposition, see \cite[Section $8.1$]{BGKT}. We suppose that based on this observation it is possible to extend Theorem \ref{main-pos-matr} and  Theorem \ref{main-pos-matr2}  to any network.
\end{remark}
%\begin{conjecture} \label{conj_1}
%\hskip 1pt
%\begin{itemize}
 %   \item The image of the embedding defined in Theorem \ref{main-pos-matr} coincides with $LG_{\geq 0}(n-1,V).$
  %  \item Moreover, the image of the set of  well-connected networks coincides with $LG_{> 0}(n-1,V).$ The latter is equivalent to the statement that any totally positive symplectic matrix $A \in \mathrm{Mat}_{n-1 \times n-1}(\mathbb{R})$ admits the decomposition \eqref{prod-dec}.
%\end{itemize}
%\end{conjecture}

\begin{figure}[H]
     \hspace*{-8mm}
     \includegraphics[scale=0.21]{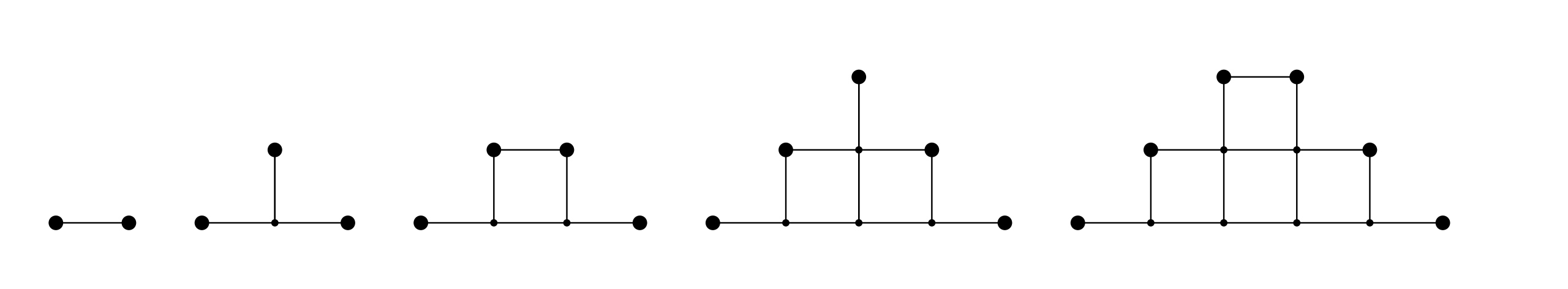}
     \caption{Standard networks, boundary nodes are bold } \label{fig:standart}
     \end{figure}

%\begin{theorem} \textup{\cite{Tal}} \label{st-th-dec}
%For the standard networks the following decomposition holds
%\end{theorem}
\begin{figure}[H]
    \centering
    \includegraphics[width=0.6\textwidth]{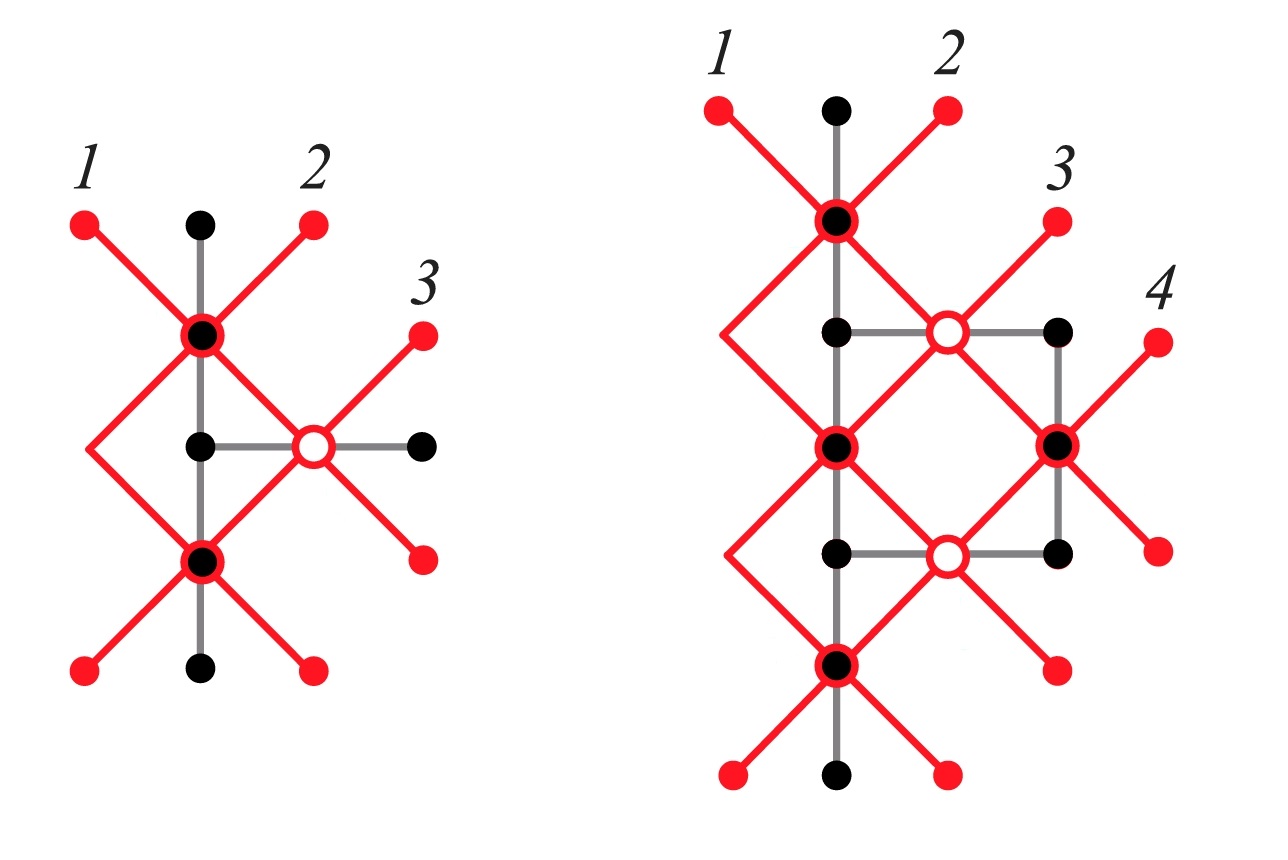}
    \caption{The edge enumeration}
    \label{numst}
\end{figure}

Theorem \ref{main-pos-matr2} provides an explicit link between electrical network theory and the theory of totally non-negative matrices. The cornerstone result of the last one is that: 
    \begin{theorem} \textup{\cite{Wh}} \label{aboutmatri}
        Each totally non-negative matrix can be decomposed into a product of elementary generators, i.e. Jacobi matrices $x_i(t)=E+tE_{i, i+1},$ $y_i(t)=E+tE_{i+1, i}$ and a diagonal matrix $d=\mathrm{diag}(d_1, \dots, d_n)$ with all non-negative parameters $t, d_i$.
    \end{theorem}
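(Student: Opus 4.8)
The plan is to prove the theorem by induction on the size $n$ of the totally non-negative matrix $A=(a_{ij})$, peeling off one elementary Jacobi generator at a time — from the left and from the right — until an $(n-1)\times(n-1)$ totally non-negative block remains. The base case $n=1$ is trivial, since a $1\times 1$ non-negative matrix $A=(a_{11})$ is already of the required form $d=\mathrm{diag}(a_{11})$. A useful reformulation of the inductive step is: any totally non-negative $A$ can be written as $A = L\cdot\bigl(\mathrm{diag}(a_{11})\oplus A'\bigr)\cdot U$, where $L$ is a product of generators $y_i(t)$ with $t\ge 0$, $U$ is a product of generators $x_j(s)$ with $s\ge 0$, and $A'\in\mathrm{Mat}_{(n-1)\times(n-1)}(\mathbb R)$ is again totally non-negative; then the inductive hypothesis applied to $A'$, after re-indexing its generators by a shift $i\mapsto i+1$ and collecting all diagonal contributions into a single $d=\mathrm{diag}(a_{11},d_2',\dots,d_n')$ (using the relations $d\,x_i(t)=x_i(td_{i+1}/d_i)\,d$ and their analogues to move the diagonal to one end), yields the decomposition of $A$ in the asserted normal form.

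To realize the reduction I would use Neville elimination. Working bottom-up in the first column, whenever $a_{i,1}\neq 0$ I write $A=y_i(t_i)\cdot A^{(i)}$ with $t_i=a_{i+1,1}/a_{i,1}\ge 0$, where $A^{(i)}$ is obtained from $A$ by the row operation $R_{i+1}\mapsto R_{i+1}-t_i R_i$; iterating for $i=n-1,n-2,\dots,1$ clears all subdiagonal entries of the first column, and since each such step only modifies a row of index $\ge 2$, the first row stays intact. Dually, working right-to-left in the first row, I extract generators $x_j(s_j)$ from the right, with $s_j=a_{1,j+1}/a_{1,j}\ge 0$, to clear the entries $(1,2),\dots,(1,n)$ using the column operation $C_{j+1}\mapsto C_{j+1}-s_j C_j$. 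What remains after the two passes is the block-diagonal matrix $\mathrm{diag}(a_{11})\oplus A'$, and $A'$, being a submatrix, is totally non-negative provided each elimination step preserves total non-negativity.

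The main obstacle is exactly that preservation statement: one must show that $A^{(i)}$ (and the analogous matrices from the column operations) is again totally non-negative, which is what makes all extracted parameters $t_i,s_j$ legitimately non-negative and keeps the induction running. I would prove it by expanding an arbitrary minor of $A^{(i)}$ along the modified row $R_{i+1}-t_i R_i$: by multilinearity it equals a minor of $A$ (non-negative by hypothesis) minus $t_i$ times a determinant which, after the transposition bringing rows $i$ and $i+1$ into adjacent position with the correct order, is again a genuine minor of $A$; since $t_i\ge 0$, a short sign bookkeeping gives non-negativity. Two further points require care. First, the degenerate cases — a zero pivot $a_{i,1}$ or $a_{1,j}$, or a zero row, column, or singular $A$ — are handled by the standard observation that total non-negativity forces a vanishing diagonal entry to propagate to an entire vanishing line (otherwise a $2\times 2$ minor through it is negative), allowing the size to be reduced without extracting a generator at that spot; alternatively one may first prove the statement for totally positive matrices, where no pivot vanishes and Neville elimination runs cleanly, and then recover the general case by a limiting/closure argument. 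Second, reassembling the word and commuting all diagonal factors to one side uses only the elementary commutation relations among the $x_i$, $y_j$ and the diagonal matrices, which is routine.
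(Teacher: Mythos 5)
The paper offers no proof of this statement --- it is quoted from Whitney's 1952 paper (and, in the invertible setting the paper works in, it is the Loewner--Whitney theorem) --- so your Neville-elimination outline is indeed the standard route to it. However, your justification of the one step on which everything hinges, namely that the matrix $A^{(i)}$ obtained from a totally non-negative $A$ by the row operation $R_{i+1}\mapsto R_{i+1}-t_iR_i$ is again totally non-negative, does not work as written. For a minor of $A^{(i)}$ whose row set $R$ contains $i+1$ but not $i$, multilinearity gives $\det A^{(i)}[R;C]=\det A[R;C]-t_i\det A[R';C]$ with $R'=(R\setminus\{i+1\})\cup\{i\}$; since $i\notin R$, the replacement row slots into the same position in the increasing order and \emph{no} sign-changing transposition occurs. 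You are therefore left with a difference of two non-negative minors with a non-negative coefficient, which is precisely the kind of expression that has no a priori sign, so ``a short sign bookkeeping'' cannot close the argument. This is where the actual content of Whitney's reduction theorem lies: one must prove the determinantal inequality $a_{i,1}\det A[R;C]\ge a_{i+1,1}\det A[R';C]$, e.g.\ by bordering with column $1$ and row $i$ --- when the entries of column $1$ below row $i+1$ have already been cleared, expanding $\det A\bigl[\{i\}\cup R;\{1\}\cup C\bigr]\ge 0$ along its first column yields exactly $a_{i,1}\det A[R;C]-a_{i+1,1}\det A[R';C]$ --- together with a case analysis (rows of $R$ above $i$, the case $1\in C$) handled via Sylvester's identity or a three-term exchange relation. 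Without that lemma the induction does not run.

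Two smaller points. Your degenerate-case discussion invokes vanishing \emph{diagonal} entries, but what you actually need is the first-column structure: with the paper's convention that a totally non-negative matrix is invertible, the $2\times 2$ minors on columns $\{1,j\}$ force $a_{11}>0$ and force the nonzero entries of the first column into a contiguous top block, so every pivot you divide by is nonzero wherever there is something left to clear; this is the statement to prove, not the one about diagonal entries. Finally, the fallback you propose --- prove the totally positive case and pass to the closure --- only shows that a totally non-negative matrix is a \emph{limit} of products of generators; upgrading that to an exact finite factorization requires a separate compactness argument bounding the word length and the parameters, so it is not a free substitute for the missing preservation lemma.
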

\begin{example}
The following holds:
   \begin{equation*}
		x_1(t) = \left(
	\begin{array}{ccc}
			1 & t & 0    \\
			0 & 1 & 0   \\
            0& 0 & 1 
		\end{array}
		\right), \
  y_2(t) = \left(
	\begin{array}{ccc}
			1 & 0 & 0    \\
			0 & 1 & 0   \\
            0& t & 1 
		\end{array}
		\right).
	\end{equation*} 
    The totally non-negative symplectic generators $u_{i,i+1}(t) \in \mathrm{Mat}_{n-1 \times n-1}(\mathbb{R})$ can be decomposed as follows:
     \begin{equation*}
	u_{1,2}(t)=y_1(t), \ u_{i,i+1}(t)=y_{i+1}(t)x_{i-1}(t), \  u_{n,n+1}(t)=x_{n-2}(t).
	\end{equation*}  
\end{example}
\begin{figure}[H]
    \centering
    \includegraphics[width=1.1\textwidth]{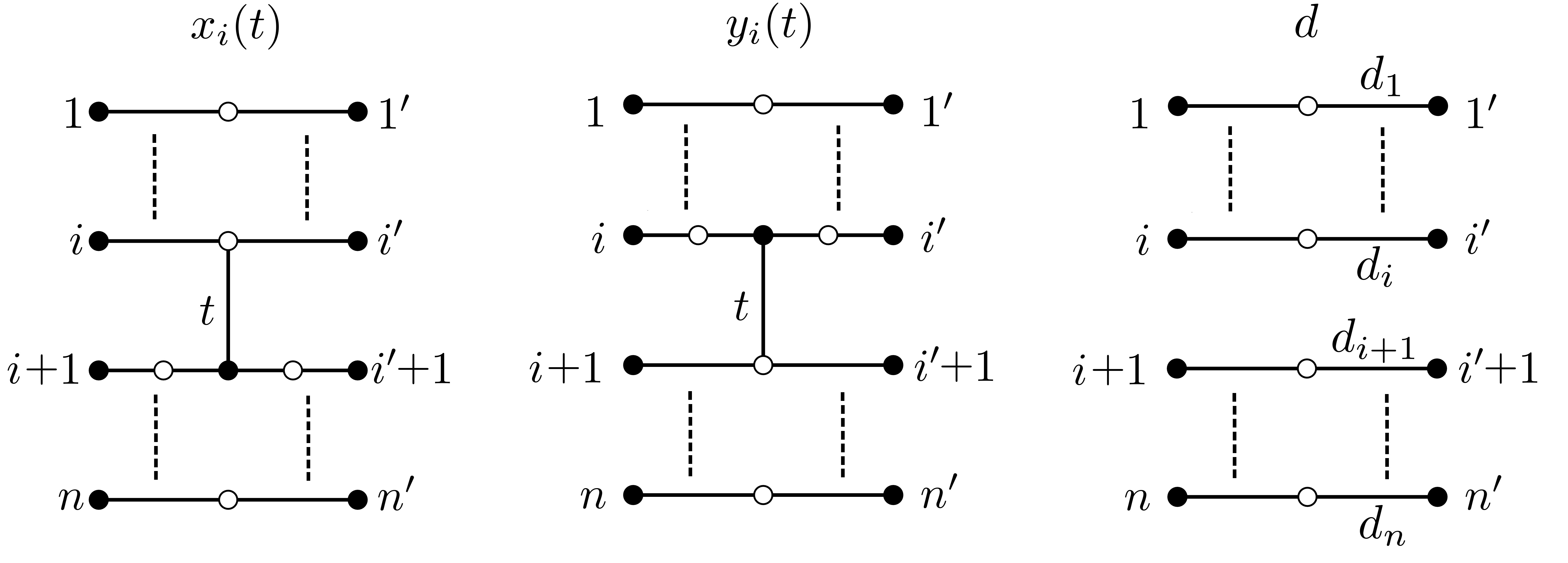}
    \caption{The elementary generators and their corresponded elementary Lam models}
    \label{elgenlam}
\end{figure}

Theorem \ref{aboutmatri} gives rise to the following natural problem, which, building upon the works  \cite{BFZ}, \cite{Lus}, we will call  the \textbf{Berenstein-Fomin-Zelevinsky problem}:
\begin{problem} \label{luz-prob}
    Consider a totally non-negative matrix $A$ and its given decomposition in a product of elementary generators. It is required  to recover the parameters of elementary generators.
\end{problem}

According to Theorem \ref{main-pos-matr2}, Problem \ref{bl-box} for well-connected networks with an odd number of boundary nodes can be considered  as a special case of Problem \ref{luz-prob}, which, in turn, is closely related to Problem \ref{bl-box-post}. This relationship is established by the following trick:
\begin{theorem} \textup{\cite{Pos}} \label{posmatr-lam}
    Let $A \in \mathrm{Mat}_{n \times n}(\mathbb{R})$  be a totally non-negative matrix with a known decomposition into a product of the elementary generators. Consider a point $X(A) \in Gr_{\geq 0}(n, 2n) $ which has the following form:
    \begin{equation*}
		X(A)=(\overline{\mathrm{Id}},A) = \left(
	\begin{array}{ccccccccc}
			0& \dots & 0 & 0  & 1 &a_{11} & a_{12} & a_{13} & \dots \\
			0& \dots & 0 & -1  & 0 & a_{21} & a_{22} & a_{23} & \dots \\
            0& \dots & 1 & 0  & 0 & a_{31} & a_{32} & a_{33} & \dots \\
	\vdots&\vdots&\vdots&\ddots&\vdots&\vdots& \vdots &  \ddots &\vdots
		\end{array}
		\right),
	\end{equation*}
    The point $X(A)$ is defined by a Lam model that is compatible with the decomposition of $A$ and can be built by concatenating \textbf{elementary Lam models}. Each elementary Lam model corresponds to an elementary generator, as it is shown in Fig. \ref{elgenlam} $($here, we use the non-standard notation for the boundary nodes numbering, which relates to an ordinary with the substitution $n \to 1, \dots, 1 \to n, 1' \to n+1, \dots, n' \to 2n)$.
\end{theorem}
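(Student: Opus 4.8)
The plan is to verify Theorem \ref{posmatr-lam} by matching boundary measurements of the concatenated Lam model with Plücker coordinates of $X(A)$, reducing everything to the elementary case by a modular / gluing argument. First I would recall from the Postnikov theory (Theorem \ref{l2}) that a Lam model built by concatenating pieces along a shared set of boundary nodes computes a boundary measurement map which, on each piece, is the boundary measurement map of that piece, and the composition corresponds to multiplying the associated elements of the little cell / the corresponding Borel subgroups. So the first step is to set up the dictionary: a horizontal one-sided network (a Lam model with $2n$ boundary nodes split into $n$ ``sources'' and $n$ ``sinks'') encodes a linear map $\mathbb{R}^n \to \mathbb{R}^n$, i.e. an element of $\mathrm{Mat}_{n\times n}$, and concatenation of such models corresponds to matrix multiplication of the encoded maps. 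This is exactly the standard fact that the boundary measurement matrix is multiplicative under horizontal gluing.

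Second, I would check the base case: for each elementary generator $x_i(t)$, $y_i(t)$, and the diagonal matrix $d$, the elementary Lam model drawn in Figure \ref{elgenlam} has boundary measurement matrix equal to that generator. This is a direct computation: the elementary Lam model for $x_i(t)$ is a single slanted edge of weight $t$ joining source $i$ to sink $i+1$ superimposed on the ``identity wiring'' (straight edges source $j$ to sink $j$), and the dimer/path count gives precisely the matrix $E + tE_{i,i+1}$; similarly for $y_i(t)$ and for the diagonal matrix (parallel edges of weights $d_j$). One must also check the color conventions (black/white vertices) and the boundary-node numbering convention stated in parentheses in the theorem, so that the Scott labeling and the sign conventions in $\overline{\mathrm{Id}}$ are consistent; this is where the substitution $n \to 1, \ldots, 1' \to n+1, \ldots, n' \to 2n$ enters and must be tracked carefully.

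Third, I would assemble: given the decomposition $A = \prod g_k$ into elementary generators, concatenate the corresponding elementary Lam models in the same order; by multiplicativity (step one) and the base case (step two), the resulting Lam model has boundary measurement matrix $A$, and therefore — passing from the affine chart to homogeneous Plücker coordinates via the standard identification of $\mathrm{Mat}_{n\times n}$ with the big cell of $Gr(n, 2n)$ — it represents the point $X(A) = (\overline{\mathrm{Id}}, A)$. Finally one notes that the concatenated model is a genuine Lam model in the sense of Definition \ref{l1} (bipartite coloring is respected at the gluing seams, boundary degrees are one) and that $k(\Gamma) = n$ for it, which follows from the elementary pieces each having $k = n$ and $k$ being additive under this kind of gluing.

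The main obstacle I expect is bookkeeping rather than conceptual: getting all the sign and orientation conventions to line up so that the multiplicativity statement holds on the nose (not just up to a diagonal gauge or a global sign), in particular reconciling the ``non-standard'' boundary numbering with the form of $\overline{\mathrm{Id}}$ with its alternating $\pm 1$ pattern, and checking that the concatenation does not accidentally create a vertex of the wrong color or a loop at the seam. Since Theorem \ref{posmatr-lam} is attributed to \cite{Pos}, the honest approach for the paper is to state that it follows from Postnikov's boundary measurement formalism together with the explicit elementary models in Figure \ref{elgenlam}, and to record the base-case computation for $x_i(t)$, $y_i(t)$, $d$ explicitly, leaving the multiplicativity to the cited reference.
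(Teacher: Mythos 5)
The paper gives no proof of this statement --- it is cited directly from \cite{Pos} --- and your sketch (multiplicativity of boundary measurements under horizontal concatenation, base-case verification for $x_i(t)$, $y_i(t)$, $d$, and the big-cell identification of $\mathrm{Mat}_{n\times n}$ with a chart of $Gr(n,2n)$) is exactly the standard argument from that reference, so your approach matches. One small wording quibble: $k(\Gamma)$ is not ``additive'' under this gluing --- the point is that identifying the sinks of one piece with the sources of the next keeps $k=n$ fixed --- but this does not affect the correctness of your outline.
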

\begin{example}
    Consider the standard electrical  network with $3$ boundary nodes. Using Theorem \ref{main-pos-matr}, we obtain that
    $$A(\mathcal{E})=u_{12}(t_{23})u_{23}(t_{13})u_{12}(t_{12})$$
    and the point $X\bigl(A(\mathcal{E})\bigr)$ is defined by the Lam model, as it is shown in Fig. \ref{spnetst}.
\end{example}
\begin{figure}[h!]
    \centering
    \includegraphics[width=0.6\textwidth]{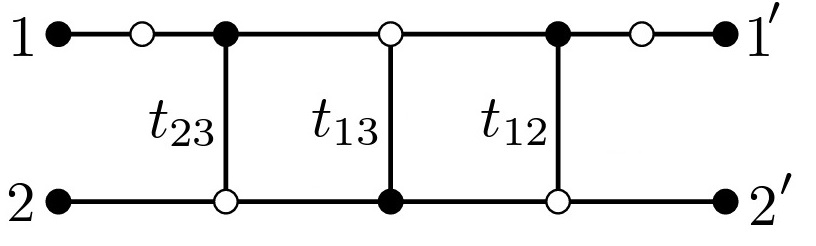}
    \caption{The Lam model associates with the standard electrical  network with $3$ boundary nodes}
    \label{spnetst}
\end{figure}
%(see \cite{} for  algebraic point of view  on this problem)
Note that the Lam models associated with points $X\bigl(A(\mathcal{E})\bigr)$ are generally not minimal. For instance, even the Lam model depicted in Fig. \ref{spnetst} is not minimal. Consequently, the recovery technique outlined in Section \ref{postteorsec} cannot be directly applied to solve the Berenstein-Fomin-Zelevinsky problem for totally non-negative symplectic matrices $A(\mathcal{E})$. However, we believe that articulating the connection described above between the black box problem and the Berenstein-Fomin-Zelevinsky problem might be useful. In particular, we expect the following to occur:     
\begin{conjecture}
The following algorithm for solving Problem \ref{bl-box} is correct:
\begin{itemize}
    \item Using a response matrix $M_R(\mathcal{E})$ of a  well-connected network $\mathcal{E}$ with an odd number of boundary node, construct the matrix $A(\mathcal{E})$;
    \item Using the decomposition \eqref{prod-dec} and Theorem \ref{posmatr-lam}, construct a Lam model. Using the Postnikov transformations (see \cite{L3}), into a minimal Lam model (we expect that is  multiple simple, canonical ways exist for this reduction);
    \item Using Theorem \ref{thabouttwit}, reconstruct the weights of the minimal Lam model. Reconstruct the weights of the original Lam model  from these;
    \item  Using the found weights and the physical sense of parameters of the decomposition \eqref{prod-dec}, construct a standard network $\mathcal{E'}$ equivalent to $\mathcal{E}$. Transform $\mathcal{E'}$ into $\mathcal{E}$ using star-triangle transformations.
\end{itemize}

\end{conjecture}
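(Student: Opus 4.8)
The plan is to verify the correctness of the four steps of the proposed algorithm one at a time, isolating the genuinely new content in the passage from the over-parametrized Lam model produced by Theorem~\ref{posmatr-lam} to a minimal one. The fact that makes the whole scheme sensible is that, for a well-connected network $\mathcal{E}$ with an odd number of boundary nodes, the data $M_R(\mathcal{E})\leftrightarrow A(\mathcal{E})\leftrightarrow(t_{ij})$ forms a bijection: $M_R(\mathcal{E})\mapsto A(\mathcal{E})$ by Theorems~\ref{main-pos-matr}--\ref{main-pos-matr2}, and $A(\mathcal{E})\mapsto(t_{ij})$ because a totally positive matrix has a unique factorization along the fixed reduced word occurring in \eqref{prod-dec}. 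Thus the output $w_{ij}=t_{ij}^{(-1)^{i+j}}$ is uniquely determined by $M_R(\mathcal{E})$, and the task is to show that each step actually computes it, not merely that it exists.

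First I would dispose of Step~1 and Step~4, which amount to bookkeeping on top of cited results. Step~1 is the composite $M_R(\mathcal{E})\mapsto\Omega(\mathcal{E})\mapsto\overline{\Omega}(\mathcal{E})\mapsto\bigl(\overline{\mathrm{Id}},A(\mathcal{E})\bigr)$ of Theorem~\ref{th: main_gr}, Theorem~\ref{main-pos-matr} and Theorem~\ref{main-pos-matr1}; well-connectedness is precisely what guarantees, via Theorem~\ref{main-pos-matr2}, that $A(\mathcal{E})$ is totally positive and symplectic with respect to $\Lambda_n$, so that \eqref{prod-dec} applies with all $t_{ij}>0$. Step~4 runs Theorem~\ref{main-pos-matr2} in reverse: from $(t_{ij})$ one writes the standard network $\mathcal{E}'$ whose edge at the intersection of strands $i$ and $j$ has weight $w_{ij}=t_{ij}^{(-1)^{i+j}}$, and the theorem certifies $\mathcal{E}'\sim\mathcal{E}$, hence $M_R(\mathcal{E}')=M_R(\mathcal{E})$. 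Because $G$ is minimal and well-connected it carries the same strand permutation as $\mathcal{E}'$, so by the classification of minimal networks the two differ by a sequence of star--triangle transformations; to make this a genuine algorithm I would invoke that any two reduced strand diagrams of a fixed permutation are connected by braid moves, which gives a terminating deterministic procedure producing the conductivity function on the prescribed $G$.

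The core is Steps~2--3. I would build the Lam model $N_0$ by concatenating the elementary Lam models of Figure~\ref{elgenlam} along the reduced word of \eqref{prod-dec}; by Theorem~\ref{posmatr-lam} it represents $X(A)=\mathcal{L}(\mathcal{E})$, but it is non-reduced. Apply Postnikov's reduction moves --- contraction of unicolored edges, removal of bivalent vertices, deletion of parallel-edge ``bubbles'' and of the other redundant pieces forced by the non-minimality, together with square moves --- to reach a minimal model $N_{\min}$ in the same positroid cell, recording the moves. Then run the twist/chamber ansatz of Theorem~\ref{thabouttwit} on $N_{\min}$: label its faces by the Scott rule and evaluate the relevant Plücker coordinates of $\tau\bigl(\Omega'(\mathcal{E})\bigr)$, equivalently of $\tau\bigl(\Omega'_R(\mathcal{E})\bigr)$, to recover the weight function of $N_{\min}$ up to gauge (Theorem~\ref{thblbopost}). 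It remains to pull this back to the weights of $N_0$, i.e.\ to the $t_{ij}$: the inverse of each contraction or bivalent-vertex move is forced modulo the vertex gauge already present in Theorem~\ref{thblbopost} and Lemma~\ref{lem:post-cord}, while for the genuine ``un-reductions'' the new parameters are pinned down by requiring that the boundary measurement map of the un-reduced graph still equal the known point; the crux of this step is that, since $N_0$ lies in the $(t_{ij})$-parametrized family which the reduction moves respect, these constraints have a \emph{unique} positive solution. In parallel I would develop a cleaner route that bypasses $N_0$: invoke the Berenstein--Fomin--Zelevinsky solution of Problem~\ref{luz-prob} \cite{BFZ}, by which each $t_{ij}$ is a Laurent monomial in generalized minors of $A(\mathcal{E})$, each such minor is a Plücker coordinate of $X(A)=\mathcal{L}(\mathcal{E})$, hence a minor of $\Omega'(\mathcal{E})$ computable from $M_R(\mathcal{E})$; one then only needs the explicit dictionary matching the BFZ chamber minors to the Scott-labelled Plücker coordinates of the twist.

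The hard part will be this weight-tracking through the reductions. Since $N_0$ is over-parametrized relative to $N_{\min}$, the forward map $N_0\mapsto N_{\min}$ genuinely forgets information, and one must prove that the un-reduction constraints cut the correct subvariety down to a single positive point --- equivalently, that the BFZ chamber-minor/twist dictionary is exact in the symplectic (electrical) setting and that every Plücker coordinate entering the formulas is nonzero for well-connected networks. I expect the cleanest route is to establish this dictionary directly, exploiting the compatibility of \eqref{prod-dec} with the bridge--spike decomposition noted after Theorem~\ref{main-pos-matr2}; I would regard turning Step~4's reduction of $\mathcal{E}'$ to the prescribed graph into a bona fide algorithm, and the eventual removal of the ``well-connected, odd $n$'' hypotheses, as the remaining, more technical, obstacles.
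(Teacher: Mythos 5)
This statement is left as a \emph{conjecture} in the paper: the author explicitly notes that the Lam model built from the decomposition \eqref{prod-dec} via Theorem \ref{posmatr-lam} is generally not minimal, so the twist/chamber-ansatz machinery of Section \ref{postteorsec} does not directly apply, and no proof is offered. Your proposal is a sensible research plan that correctly locates the same obstruction, but it does not close it, so it is not a proof. The critical gap is in your Steps 2--3: once you reduce $N_0$ to a minimal model $N_{\min}$ and recover its weights by Theorem \ref{thabouttwit}, you must pull the weights back to $N_0$. You assert that ``requiring that the boundary measurement map of the un-reduced graph still equal the known point'' pins down the parameters, and that uniqueness holds because $N_0$ lies in the $(t_{ij})$-parametrized family. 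But non-reducedness means precisely that the boundary measurement map of $N_0$ has positive-dimensional fibers even after quotienting by gauge, so matching the point in $\mathrm{Gr}_{\geq 0}(n-1,2n)$ does \emph{not} determine the edge weights of $N_0$; the only hope is that the constrained subfamily in which the two Jacobi factors of each $u_{i,i+1}(t)=y_{i+1}(t)x_{i-1}(t)$ carry the same parameter meets each fiber in a single positive point. That transversality/uniqueness claim is exactly the content of the conjecture, and you assume it rather than prove it.

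Your proposed ``cleaner route'' via Berenstein--Fomin--Zelevinsky has an analogous gap: the word obtained by expanding \eqref{prod-dec} into the Jacobi generators $x_i,y_i$ has on the order of $n(n-1)$ letters, which exceeds $\dim \mathrm{GL}_{n-1}^{\geq 0}$-type bounds, so it is not a reduced word for $\mathrm{GL}_{n-1}$ and the $\mathrm{GL}$ chamber ansatz does not apply; one would need the type-$C$ (symplectic) version of the BFZ parametrization together with an explicit identification of the relevant generalized minors of $A(\mathcal{E})$ with Plücker coordinates of $\mathcal{L}(\mathcal{E})$ --- the ``dictionary'' you defer is again the open content. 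Finally, Step 4 (an effective star--triangle path from the standard network $\mathcal{E}'$ to the prescribed graph $G$) is also deferred to a connectivity-of-reduced-words argument that you do not carry out. In short, the decomposition into steps is reasonable and consistent with what the paper envisions, but every genuinely new assertion remains unproved, which is why the paper states this as a conjecture.
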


\end{document}